\documentclass[a4paper,11pt]{article}
\usepackage{jheppub} % for details on the use of the package, please see the JINST-author-manual
\usepackage{subcaption}
\usepackage{mathrsfs}

%%%%% This is the line to be suppressed to not have a draft
%\usepackage{showkeys}
%\usepackage[light]{draftcopy}
%\usepackage{lineno}
%\linenumbers

%comments from Victor and Takato
\usepackage{physics}
\usepackage{color,xcolor}
\usepackage{bm}
\usepackage{mdframed}
%colors

\renewcommand\d{{\rm d}}

\usepackage{soul} %sout
\usepackage{amsthm}
%\usepackage{ntheorem}
%\theorembodyfont{\upshape}
\newtheorem{theorem}{Theorem}[section]

\newtheorem{lemma}[theorem]{Lemma}
\newtheorem{proposition}[theorem]{Proposition}
\newtheorem{conjecture}{Conjecture}[section]
\theoremstyle{definition}
\newtheorem{definition}{Definition}[section]
\theoremstyle{remark}
\newtheorem{remark}{Remark}

%\arxivnumber{1234.56789} % if you have one

%\title{\boldmath Scri is the limit: Holographic scattering in asymptotically dS${}_3$}

\title{\boldmath Horizon causality from holographic scattering in asymptotically dS${}_3$}

%Horizon causality from holographic scattering in asymptotically dS${}_3$

%Connected wedge theorem in de Sitter spacetime

%Causality from null infinity in static patch holography
%Holographic scattering and causality from null infinity in static patch holography

%Holographic scattering in static patch holography requires(?) causality from null infinity

\author[a]{Victor Franken,}
\author[b,c]{Takato Mori}
\affiliation[a]{CPHT, CNRS, École polytechnique, Institut Polytechnique de Paris\\91120 Palaiseau, France}
\affiliation[b]{Perimeter Institute for Theoretical Physics, Waterloo, Ontario N2L 2Y5, Canada}
\affiliation[c]{
Center for Gravitational Physics and Quantum Information,
Yukawa Institute for Theoretical Physics, Kyoto University\\ 
Kitashirakawa Oiwakecho, Sakyo-ku, Kyoto 606-8502, Japan}

% E-mail addresses: only for the corresponding author
\emailAdd{victor.franken@polytechnique.edu, takato.mori@yukawa.kyoto-u.ac.jp}

\abstract{In the AdS/CFT correspondence, a direct scattering in the bulk may not have a local boundary analog. A nonlocal implementation on the boundary requires $O(1/G_N)$ mutual information. This statement is formalized by the connected wedge theorem, which can be proven using general relativity within AdS$_3$ but also argued for using quantum information theory on the boundary, suggesting that the theorem applies to any holographic duality. We examine scattering within the static patch of asymptotically dS$_3$ spacetime, which is conjectured to be described by a quantum theory on the stretched horizon in static patch holography. We show that causality on the horizon induced from null infinities~$\mathcal{I}^{\pm}$ is consistent with the theorem. Specifically, signals propagating in the static patch are associated with local operators at $\mathcal{I}^{\pm}$. Our results suggest a novel connection between static patch holography and the dS/CFT correspondence.}

%To do: Get the report number, check grammer by overleaf, grammarly, and ChatGPT, also check affiliation and email address

\begin{document}

\begin{flushright}
YITP-24-133\\
CPHT-RR079.102024
\end{flushright}

\maketitle

\flushbottom

\section{Introduction}
\label{sec:intro}

The quantum mechanical description of the gravitational aspects of our universe, encompassing its geometry and causality, remains one of the most profound questions in theoretical physics. A lot of progress has been made via the explicit realization of the AdS/CFT correspondence~\cite{Maldacena:1997re,Witten:1998qj}, where anti-de Sitter (AdS) spacetime is dual to a conformal field theory (CFT) living on its asymptotic boundary. In parallel to these developments, there has been strong evidence that our universe has gone through two periods of accelerated expansion, which are the cosmic inflation in the early universe and the present period~\cite{SupernovaSearchTeam:1998fmf,SupernovaCosmologyProject:1998vns,Planck:2015fie}. An extensive understanding of the early universe asks for a quantum gravitational description, and extending the great progress made in AdS/CFT to expanding universes is one of the major challenges in modern high-energy physics. Toward a quantum mechanical description of our universe, which is approximately de Sitter (dS) spacetime, dS holography stands out as a promising but enigmatic framework~\cite{Strominger:2001pn,Bousso:2001mw,Abdalla:2002hg,Alishahiha:2004md,Parikh:2004wh,Alishahiha:2005dj,McFadden:2009fg,Dong:2010pm,Anninos:2011ui,Anninos:2011af,Dong:2018cuv,Gorbenko:2018oov,Arias:2019pzy, Arias:2019zug, Lewkowycz:2019xse,Geng:2021wcq,Susskind:2021esx,Shaghoulian:2021cef,Susskind:2021omt,Coleman:2021nor,Susskind:2021dfc,Hikida:2022ltr,Svesko:2022txo,Levine:2022wos,Shaghoulian:2022fop,Lin:2022nss, Susskind:2022dfz, Susskind:2022bia,  Banihashemi:2022htw,Rahman:2022jsf,Goel:2023svz, Narovlansky:2023lfz,Susskind:2023rxm, Giveon:2023rsk, Franken:2023pni, Kawamoto:2023nki, Galante:2023uyf}. One of its fundamental puzzles arises from the fact that dS spacetime is a closed universe, lacking a boundary in the traditional sense. This leads to the question: Where do the holographic degrees of freedom reside?

A natural extension of AdS/CFT to de Sitter spacetime is known as the dS/CFT correspondence~\cite{Strominger:2001pn, Bousso:2001mw, Balasubramanian:2001nb, Anninos:2011ui}, which is achieved via analytical continuation. This approach situates the dual CFT at future or past null infinity, where conventional notions of time and states are ill-defined. Consequently, standard concepts such as entanglement entropy, which are pivotal in understanding quantum gravity~\cite{Maldacena:2001kr,Ryu:2006bv,Hubeny:2007xt,VanRaamsdonk:2009ar,VanRaamsdonk:2010pw, Maldacena:2013xja,Dong:2016eik,Penington:2019npb,Almheiri:2019psf,Almheiri:2019hni, Doi:2023zaf}, become ambiguous.

An alternative proposal is static patch holography~\cite{Susskind:2021omt,Susskind:2021dfc,Susskind:2021esx, Shaghoulian:2021cef,Shaghoulian:2022fop,Lin:2022nss,Susskind:2023hnj}, in which the dual quantum theory is defined on a holographic screen, namely, a codimension-one timelike hypersurface. Under this proposal, the notion of time and unitarity is manifest. Here, a holographic screen is situated in the bulk, close to the cosmological horizon of an observer, and encodes the state of the static patch of this observer. The precise nature of the dual theory remains unsettled. While the double-scaled Sachdev-Ye-Kitaev (DSSYK) model is one conjecture~\cite{Susskind:2021esx,Lin:2022nss,Rahman:2022jsf,Goel:2023svz,Narovlansky:2023lfz,Verlinde:2024znh,Verlinde:2024zrh,Blommaert:2023opb, Blommaert:2023wad,Rahman:2024iiu,Milekhin:2023bjv,Xu:2024hoc,Milekhin:2024vbb}, it involves dimensional reduction, leaving the description of $d$-dimensional dS spacetimes for $d\ge 3$ unresolved. In particular, the dual theory in a higher-dimensional de Sitter spacetime is expected to be highly nonlocal, a feature that low-dimensional holography escapes. 
The ambiguity of the spatial extent of the holographic screen further complicates identifying the correct prescription for the holographic entanglement entropy, leading to various competing proposals~\cite{Susskind:2021esx,Shaghoulian:2021cef,Shaghoulian:2022fop, Franken:2023pni, Hao:2024nhd}.

To understand a holographic spacetime from a ultraviolet (UV) dual quantum theory, a natural approach is to identify a location for the dual degrees of freedom to live. In static patch holography, a holographic screen is placed away from the asymptotic boundary. Generally, a theory on such a screen is expected to be nonlocal so causality must be treated carefully.
Furthermore, the precise understanding of causality on the boundary/screen consistent with the bulk causality is essential to resolve various puzzles related to holographic entanglement entropy such as a violation of subadditivity~\cite{Kawamoto:2023nki,Mori:2023swn,Grado-White:2020wlb,Lewkowycz:2019xse}.

In this spirit, one interesting question is how a holographic quantum theory encodes the causal structure of the bulk dual~\cite{May:2019yxi,May:2019odp,May:2021nrl, May:2022clu,Caminiti:2024ctd}. 
Let us consider a situation where we send information from a set of input points, process it, and share the outcome among output points.
This procedure, known as a relativistic quantum task, can have a local bulk realization but not necessarily on the boundary. Whether the task can be performed locally in some background is purely a causality statement so it is entirely answered from the causal structure and locations of input and output points.
The connected wedge theorem~\cite{May:2019odp} in AdS$_3$/CFT$_2$ relates this causality statement to correlation, namely, that when a $2$-to-$2$ scattering from input and output points on the boundary is possible in the bulk, but not on the boundary, there must be $O(1/G_N)$ mutual information between certain boundary causal diamonds, which are fixed by the input and output locations and the boundary causality.

The connected wedge theorem in AdS$_3$/CFT$_2$ has been shown both from the bulk, using general relativity~\cite{May:2019odp}, and from the boundary using quantum information without relying on the detailed nature of the bulk or boundary theory~\cite{May:2019yxi,May:2019odp}. This suggests that this theorem should be valid beyond the AdS/CFT correspondence. In this paper, we consider holographic scattering in the context of static patch holography. In particular, we seek to clarify how the connected wedge theorem may be realized in de Sitter space, and draw lessons on causality on holographic screen and a potential connection between static patch holography and the dS/CFT correspondence.

This paper is organized as follows. For readers only interested in our main claim, skip to Section~\ref{sec:results} for a summary of the main results or Section~\ref{sec:causality} for further details. 
In Section~\ref{sec:assump-summary}, we provide the assumptions we make about the spacetimes in the paper and review our main results.
In Section~\ref{sec:recap}, we provide a short recap on the connected wedge theorem in asymptotically AdS and one with branes or cutoff surfaces~\cite{Mori:2023swn}. Section~\ref{sec:SPH} briefly reviews the geometry of dS and static patch holography.
In Section~\ref{sec:causality}, we present a puzzle where the connected wedge theorem is violated in static patch holography. We then resolve this problem by introducing the notion of an induced lightcone, which is constructed from a point at null infinity. Additionally, we examine how induced causality changes as we push the holographic screen deep inside the static patch.
In Section~\ref{sec:theorem}, we first show how induced causality resolves the apparent violation of the connected wedge theorem by explicitly calculating holographic entanglement entropy. We then prove the theorem in the static patch of asymptotically dS$_3$ spacetime using induced causality on the holographic screen. 
Possible future directions are discussed in Section~\ref{sec:discussion}.

In Appendix~\ref{app:notations}, we list the notation of symbols we use throughout the paper.
In Appendix~\ref{app:ext_proc}, we discuss formal aspects related to covariant holographic entanglement entropy prescriptions in static patch holography. In particular, we review the definitions of the extremization and maximin procedure as well as the constrained extremization proposed in~\cite{Franken:2023pni}, and prove their equivalence in the static patch.

\section{Main results and notations}\label{sec:assump-summary}

In this section, we briefly review the assumptions about the spacetimes, and present a concise summary of our main results.

\subsection{Assumptions about the spacetimes}
\label{sec:assump}

We assume that the spacetimes discussed in this paper are classical solutions of Einstein's equations; we mainly consider smooth, asymptotically (A)dS$_3$ spacetime. We always assume (AdS) global hyperbolicity, which states that the spacetime, or its conformal compactification in the case of AdS, has a Cauchy slice.
Additionally, unless stated otherwise, we assume that these spacetimes satisfy the null energy condition
\begin{equation}
\label{eq:NEC}
    k^{\mu}k^{\nu}T_{\mu\nu} \geq 0,
\end{equation}
for any null vector $k^{\nu}$, where $T_{\mu\nu}$ is the stress tensor. On some occasions, we will mention the extension of some results in spacetimes that violate the null energy condition. In such cases, we consider the semiclassical limit $G_N\rightarrow 0$ and the usual theorems of general relativity that rely on the null energy condition, such as the focusing theorem~\cite{Wald:1984rg} or the second law of causal horizons~\cite{Jacobson:2003wv}, must be replaced by their conjectured quantum versions, such as the restricted quantum focusing conjecture~\cite{Bousso:2015mna, Shahbazi-Moghaddam:2022hbw} or generalized second law for causal horizons~\cite{Wall:2009wm,Wall:2011hj}. These allow us to replace the classical null energy condition \eqref{eq:NEC} with the quantum null energy condition~\cite{Bousso:2015mna, Ceyhan:2018zfg, Bousso:2015wca, Balakrishnan:2017bjg, Koeller:2015qmn} which is a fundamental statement about quantum field theory. 

Later we will introduce a holographic screen near the cosmological horizon to discuss static patch holography. Since our scope in this paper is scattering by probe particles, we fix the background geometry and place a screen at a fixed location. Thus, in this paper we are agnostic about the precise boundary condition imposed on the screen although if we consider any perturbation, we need to identify a sensible boundary condition with stability, diffeomorphism, etc. We refer to~\cite{Anninos:2024wpy} for a study on boundary conditions in static patch and~\cite{Karch:2013oqa} for a possible boundary condition on the screen.

We treat areas as finite quantities and stay quite lax concerning quantities such as the expansion. Indeed, null hypersurfaces may present cusps where null generators of the surface collide such that the expansion diverges. We assume that they can be dealt with and neglect them, and refer to Section~$3$ of~\cite{May:2019odp} for a rigorous approach to these cusps. 

\subsection{Brief summary of main results}
\label{sec:results}

In this paper, we present a puzzle in pure dS$_3$ where the connected wedge theorem is violated. We consider a $2$-to-$2$ scattering with input and output points on the cosmological horizon of an observer, where the holographic screen is supposed to be located by the static patch holography proposal.\footnote{Sometimes we place a Planck-distance cutoff, replacing the screen with the stretched horizon. Even in such a case, the same puzzle occurs and the general proof presented in the later section will apply to both cases.} 
For simplicity, we choose these points such that the scattering in the bulk barely occurs -- the scattering region is pointlike. See Figure~\ref{fig:scat-scr} for a schematic picture. 
\begin{figure}
    \centering
    \includegraphics[width=0.8\linewidth]{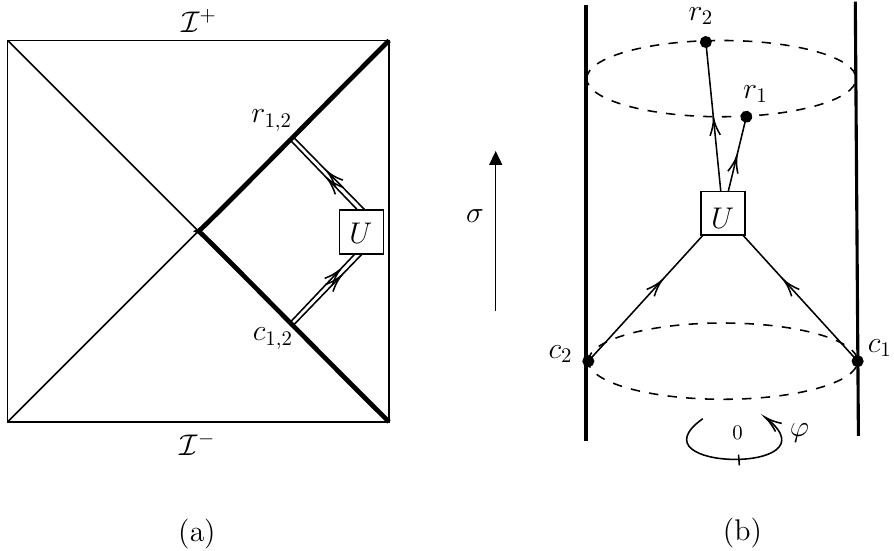}
    \caption{A schematic picture of {a} 2-to-2 scattering in the static patch {of pure dS${}_3$}, from input points $c_1,c_2$ to output points $r_1,r_2$ on the cosmological horizon. 
    The white box denoted $U$ corresponds to a unitary operation acting on the signal qubits in the bulk. In both pictures, $\sigma$ {indicates} the time direction. (a) Picture of the scattering in the Penrose diagram of de Sitter space. Each slice of the diagram has the topology of a sphere and the vertical lines on the edges follow the trajectories of static observers on the poles of the sphere. The diagonal lines are the cosmological horizons of the two observers. (b) Picture of the scattering in the static patch. 
    Each cosmological horizon (or stretched horizon) has the topology of a cylinder and the bulk (static patch) fills inside.
    Input and output points are chosen such that a direct scattering is only possible at one bulk point. The angular coordinate on the cylinder is denoted by $\varphi$.}
    \label{fig:scat-scr}
\end{figure}
By tuning the spatial location of the input and output points, one can create a situation where the $2$-to-$2$ scattering is impossible on the screen, at least from local interactions. The connected wedge theorem predicts that a quantum task via scattering is non-locally implemented on the screen by $O(1/G_N)$ entanglement between two causally disconnected screen subsystems. However, the highly constrained causality due to the lightlike nature of the holographic screen leads us to conclude that these subsystems are too small to accommodate the expected large entanglement based on the holographic entanglement entropy proposal for static patch holography~\cite{Shaghoulian:2021cef, Franken:2023pni}.

We argue that this apparent violation of the connected wedge theorem is due to an inconsistent method for computing causal regions on the holographic screens. In the original version of the connected wedge theorem~\cite{May:2019odp}, it is taken for granted that a localized wavepacket propagating through the bulk from the boundary is created by a local operator on the boundary~\cite{Terashima:2023mcr}. However, in general, this is a non-trivial statement. We claim that the apparent failure of the connected wedge theorem stems back to this assumption regarding the screen causality of a bulk signal. 

To resolve this puzzle, we need to understand which screen causality is being used.
We distinguish three different causalities on the screen. 1) Causality based on the induced metric of the screen $\mathcal{S}$ (which will be denoted by a subscript ${}_{\mathcal{S}}$), 2) causality determined from the intersection of the bulk lightcone of a point on the screen and the screen itself (which will be denoted by conditioning $(\cdot)\vert_{\mathcal{S}}$), and 3) induced causality, defined as an intersection between the bulk lightcone from a null infinity and the screen (which will be denoted by an accent $\hat{\phantom{a}}$). The first type and the second type of causalities may look similar, however, they are different in general. In Section~\ref{sec:causality}, we present a case where they are indeed different.

We propose that the relevant causality is different from the one based on the induced metric and/or on a local theory. There are several reasons to think holographic theory on the stretched horizon is nonlocal. These arguments include the fact that this theory displays hyperfast scrambling~\cite{Susskind:2021esx}, and the volume-law of entanglement entropy on the horizon~\cite{Shaghoulian:2022fop}. As nonlocality often allows superluminal signaling, it seems sufficient for the resolution of the puzzle. However, we seek a finer resolution; what kind of (apparent) nonlocality or superluminality is needed for the connected wedge theorem, and what is its origin?

Inspired by a recent work by one of the authors~\cite{Mori:2023swn}, we propose the consistent resolution of the puzzle is given by the causality induced by a local operator from the null infinities. We then show the induced causality as defined in Section~\ref{sec:causality} resolves the apparent violation of the connected wedge theorem as the induced lightcone (which should not be confused with a lightcone constructed from the induced metric) extends farther than the local lightcones, leading to larger causal regions associated with the scattering.
This leads us to state a general version of the connected wedge theorem in the static patch of three-dimensional asymptotically de Sitter spacetime.\footnote{In this paper, we often use the term static patch to describe the causal region of an observer, by analogy with the case of the pure de Sitter spacetime. Thus, the geometry of this bulk region is not necessarily described by a static metric.} Before stating the theorem, let us first define the induced causality:

\begin{definition}[Induced causality]\label{def:induced}
    Let $\mathcal{S}$ be the holographic screen and $J^+(\tilde{c})$ be the future bulk lightcone from a point $\tilde{c}$ on $\mathcal{I}^-$. Given an input point $c$ on $\mathcal{S}$, pick an input point $\tilde{c}$ on $\mathcal{I}^-$ such that $c \in J^+(\tilde{c})$.
    Then, the future induced lightcone of $c$ on $\mathcal{S}$ is defined as
    \begin{equation}
        \hat{J}^+(c) = \min_{\tilde{c}\,\in\, \mathcal{I}^-} [J^+(\tilde{c})\cap \mathcal{S}],
    \end{equation}
    where the minimization is taken so that the {minimizing} point $\tilde{c}=\tilde{c}_\ast$ satisfies $ J^+(\tilde{c}_\ast)\cap \mathcal{S}\subseteq  J^+(\tilde{c}^\prime)\cap \mathcal{S}$ for any $\tilde{c}^\prime$. Analogously, we define a past induced lightcone by replacing $c$ with $r$ and $+$ with $-$. See Figure~\ref{fig:ind_lc} for a visualization.
\end{definition}
The existence of the minimizing points $\tilde{c},\tilde{r}$ is guaranteed from the definition of null infinities $\mathcal{I}^\pm$. Now, let us state the connected wedge theorem in asymptotically dS${}_3$:
\begin{figure}
    \centering

        \includegraphics[width=8cm]{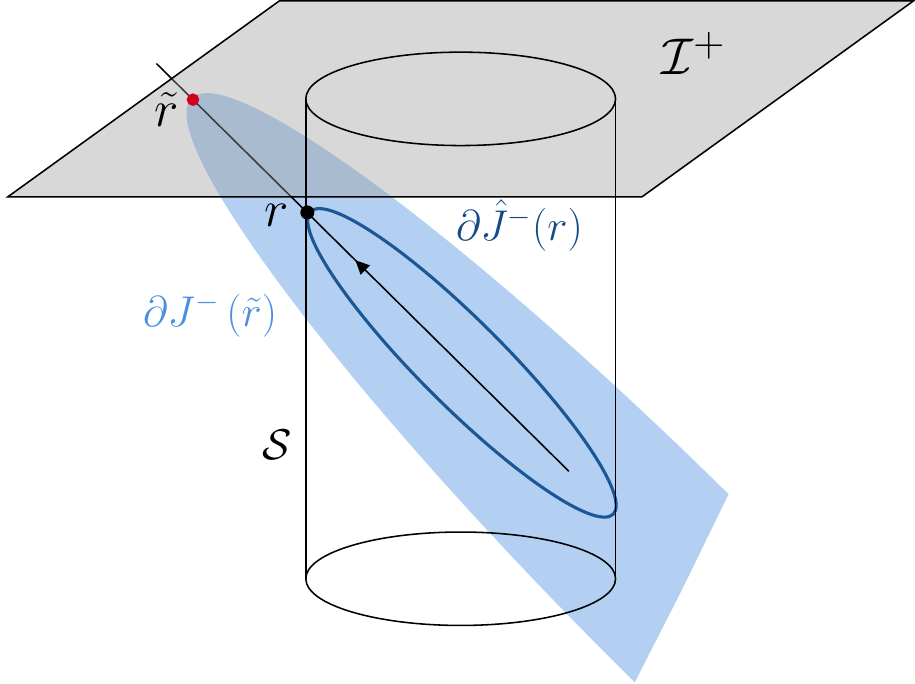}
    \caption{A bulk perturbation propagating towards the boundary is pictured by an arrow. Given an output $r_1$ on the holographic screen $\mathcal{S}$ (cylinder here), a point $\tilde r$ is defined on the future conformal boundary $\mathcal{I}^+$ (in grey). The boundary of the past induced lightcone of $r$ (in dark blue) is given by the intersection of the past lightcone of $\tilde{r}$ (light blue) with the screen. Similarly, by exchanging $+$ and $-$, future and past, one obtains a future induced lightcone.}
            \label{fig:ind_lc}
\end{figure}

\begin{theorem}[Connected wedge theorem in static patch]
 \label{th:CWTpatchsimplified}
Let $\mathcal{S}$ be the holographic screen in the static patch of an asymptotically dS$_3$ spacetime. Assuming static patch holography and its associated entropy prescription, if the $2$-to-$2$ scattering $c_1,c_2 \in \mathcal{S} \rightarrow r_1,r_2\in\mathcal{S}$ is possible in the bulk but not on the screen $\mathcal{S}$ based on the induced causality, then {the decision regions} $\mathcal{R}_1{\equiv\hat{J}^+(c_1)\cap \hat{J}^-(r_1) \cap \hat{J}^-(r_2)}$ and $\mathcal{R}_2{\equiv \hat{J}^+(c_2)\cap \hat{J}^-(r_1) \cap \hat{J}^-(r_2)}$
have mutual information as large as $O(1/G_N)$, implying their entanglement wedge is connected in the bulk bounded by $\mathcal{S}$. 
\end{theorem}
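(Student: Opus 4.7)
The plan is to reduce Theorem~\ref{th:CWTpatchsimplified}, via Definition~\ref{def:induced}, to a connected--wedge--type scattering problem anchored at $\mathcal{I}^\pm$, where focusing plus maximin arguments of the kind used by May--Penington--Sorce can be applied directly. Definition~\ref{def:induced} attaches to each input $c_i$ on $\mathcal{S}$ a minimizing point $\tilde c_i\in\mathcal{I}^-$, and similarly to each output $r_i$ a point $\tilde r_i\in\mathcal{I}^+$. The regions $\mathcal{R}_1,\mathcal{R}_2\subset\mathcal{S}$ in the statement are intersections $\hat J^+(c_i)\cap \hat J^-(r_i)$ of induced causal cones, i.e.\ the projections onto $\mathcal{S}$ of bulk causal diamonds between the tilted points.

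First, I would establish the equivalence that a $2$-to-$2$ bulk scattering $c_i\to r_j$ exists if and only if a bulk scattering $\tilde c_i\to \tilde r_j$ exists, and simultaneously that the hypothesis of non--scattering on $\mathcal{S}$ under induced causality amounts to the statement that the induced diamonds $\hat J^+(c_i)\cap \hat J^-(r_j)$ are pairwise disjoint on $\mathcal{S}$ when $i\ne j$. The forward direction uses the minimality of the induced cones: any null signal from $c_i$ into the bulk lies on a null geodesic emanating from $\tilde c_i$, and by maximal extension must exit through $\tilde r_j$. The reverse direction uses that a bulk scattering point $x\in\bigcap_i J^+(\tilde c_i)\cap\bigcap_j J^-(\tilde r_j)$, if it exists, projects back into each $\mathcal{R}_j$ along null rays crossing $\mathcal{S}$.

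With this lift in place, I would apply the bulk argument of~\cite{May:2019odp}, in the cutoff--surface form developed in~\cite{Mori:2023swn}. The existence of $x$, combined with the null energy condition and the focusing theorem, forces the extremal surface computing the entropy of $\mathcal{R}_1\cup\mathcal{R}_2$ in the static--patch prescription reviewed in Appendix~\ref{app:ext_proc} to be a single connected surface: any disconnected candidate anchored on $\partial\mathcal{R}_1$ and $\partial\mathcal{R}_2$ separately could be deformed along the null congruences through $x$ to strictly decrease its area, contradicting extremality. Equivalently, a maximin construction on a Cauchy slice of the static patch through $x$ directly selects the connected phase. The resulting area jump between the disconnected and connected phases is $O(1)$ in $G_N$-independent units, and the static--patch holographic entropy formula converts this into $I(\mathcal{R}_1:\mathcal{R}_2)=O(1/G_N)$.

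The principal obstacle I anticipate is the middle step: one must verify that the four independent minimizations over tilted points in Definition~\ref{def:induced} are mutually consistent with the existence of a single bulk scattering point, and that the ``scattering at $\mathcal{I}^\pm$'' so produced — which, because $\mathcal{I}^+$ is spacelike in dS, must be read as a purely bulk--geometric statement rather than a causal process on a boundary — is exactly the configuration the focusing argument requires. A secondary subtlety is confirming that the static--patch prescription admits a maximin formulation compatible with anchoring on the timelike screen $\mathcal{S}$; the equivalence between extremization, maximin, and constrained extremization proved in Appendix~\ref{app:ext_proc} should supply what is needed. Once these are settled, the rest of the proof reduces to the standard connected--wedge analysis transplanted into the static patch.
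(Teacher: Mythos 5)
Your overall strategy---lift the screen points to tilted points on $\mathcal{I}^\pm$, invoke the equivalence of extremization and maximin from Appendix~\ref{app:ext_proc}, and transplant the focusing argument of~\cite{May:2019odp,Mori:2023swn}---is the same route the paper takes. But the central step is asserted rather than carried out, and the sketch you give of it would not work as stated. You write that a disconnected candidate surface ``could be deformed along the null congruences through $x$ to strictly decrease its area, contradicting extremality.'' Extremality is not what gets contradicted: a disconnected extremal surface $\gamma_1\cup\gamma_2$ anchored on $\partial\mathcal{V}_1$ and $\partial\mathcal{V}_2$ may perfectly well exist (indeed in the example of Section~\ref{sec:resolution} it does, with area equal to the connected one); what must be shown is that it is not the \emph{minimal} extremal surface. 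The actual mechanism is the null membrane $\mathscr{N}_\Sigma=\mathscr{L}\cup\mathscr{S}_\Sigma$: (i) one first shows the lift's ridge is non-empty, which requires not just the existence of the bulk scattering point $x$ but the fact that maximin surfaces lie outside the causal wedge, so that $J_{12\rightarrow12}\subseteq J^+(\mathcal{C}_1\cup\mathcal{C}_2)\cap J^-(\tilde r_1)\cap J^-(\tilde r_2)$; (ii) the focusing theorem controls the area along the lift; and (iii) the \emph{second law of causal horizons} applied to the slope (a piece of the causal horizons of $\tilde r_{1,2}$) transports the comparison surface down to an arbitrary slice $\Sigma$, yielding ${\rm area}(\gamma_\Sigma)\leq{\rm area}(\gamma_1\cup\gamma_2)$ on \emph{every} $\Sigma$ and hence contradicting the maximin property. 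Ingredient (iii) is entirely absent from your sketch, and without it you cannot get the comparison surface onto a Cauchy slice where minimality is defined.

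Two smaller but genuine errors: the decision regions are $\mathcal{R}_i=\hat J^+(c_i)\cap\hat J^-(r_1)\cap\hat J^-(r_2)$, not $\hat J^+(c_i)\cap\hat J^-(r_i)$; and the non-scattering hypothesis is the emptiness of the fourfold intersection $\hat J^+(c_1)\cap\hat J^+(c_2)\cap\hat J^-(r_1)\cap\hat J^-(r_2)$ (equivalently $\mathcal{R}_1\cap\mathcal{R}_2=\varnothing$), not pairwise disjointness of the cross diamonds $\hat J^+(c_i)\cap\hat J^-(r_j)$ for $i\neq j$. Your claimed ``if and only if'' between scattering from the $c_i,r_j$ and from the $\tilde c_i,\tilde r_j$ also only holds in the forward direction ($J^+(c_i)\subseteq J^+(\tilde c_i)$, etc.), which fortunately is the only direction the proof needs.
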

A precise version of this theorem is given in Theorem~\ref{th:CWTpatch}. This statement is proven using the notion of induced causality on the holographic screen. The proof is analogous to that of~\cite{May:2019odp, Mori:2023swn}, using the extremality and {its} equivalence to the maximin prescription. Note that our proof of the connected wedge theorem applies not only to scattering among points on the cosmological horizon but also on a general holographic screen, defined in Definition~\ref{def:ap-sp} and Definition~\ref{def:ap}.

The induced causality on the holographic screen from asymptotic boundaries $\mathcal{I}^{\pm}$ hints at a possible relation between static patch holography and the dS/CFT correspondence. In particular, causality on the screen consistent with holographic entanglement entropy in static patch holography is induced from boundary local operators in dS/CFT.

Finally, the appendix of this paper contains formal results regarding the subtleties in the definition of holographic entropy prescription(s) in static patch holography. Identifying the prescription is of the highest importance in the context of the connected wedge theorem, as the proof relies on the extremality of the surface computing entropy as well as the equivalence of the surface to the maximin one. However, this is not trivial in the proposed dS holographic entropy prescriptions.
So far two proposals have been made: the monolayer and bilayer proposals~\cite{Susskind:2021esx,Shaghoulian:2021cef,Shaghoulian:2022fop}. The monolayer proposal appears inconsistent with the entanglement wedge reconstruction as pointed out in a paper by one of the authors~\cite{Franken:2023pni}. However, even if we adopt the bilayer proposal, there remains an ambiguity regarding the `extremization'. In contrast to AdS/CFT, where an extremized surface always exists and is equivalent to the maximin surface~\cite{Wall:2012uf}, the extremization problem may have no solution in static patch holography due to the non-asymptotic feature of the holographic screen. This can be resolved by relaxing the global extremizing condition, leading to the so-called constrained extremization~\cite{Franken:2023pni}. However, there remains a subtlety as it is not guaranteed equivalent to the maximin surface as in AdS/CFT. Indeed, we prove that they are different in general (Theorem~\ref{ineq}). Nevertheless, we find these three definitions (extremization, constrained extremization, and maximin) are equivalent within the static patch (Theorem~\ref{th:patch}).

\section{Connected wedge theorem in AdS$_3$/CFT$_2$}
\label{sec:recap}

The AdS/CFT correspondence is a duality between the asymptotically anti-de Sitter (AdS) bulk and a conformal field theory (CFT) on the boundary. In the semiclassical limit, this duality suggests that a quantity defined in the holographic CFT can be computed by a geometric quantity in the bulk and vice versa.
For instance, the entanglement entropy of a state in a holographic CFT can be calculated as the area of the associated extremal surface. 
This duality also applies to the dynamics, revealing an illuminating interplay between quantum operations and causality. In this section, we review one particular application, known as the connected wedge theorem~\cite{May:2019yxi}. 

\subsection{Holographic scattering from the asymptotic boundary}\label{sec:asympt-task}
\begin{figure}
    \centering
    \includegraphics[width=0.8\linewidth]{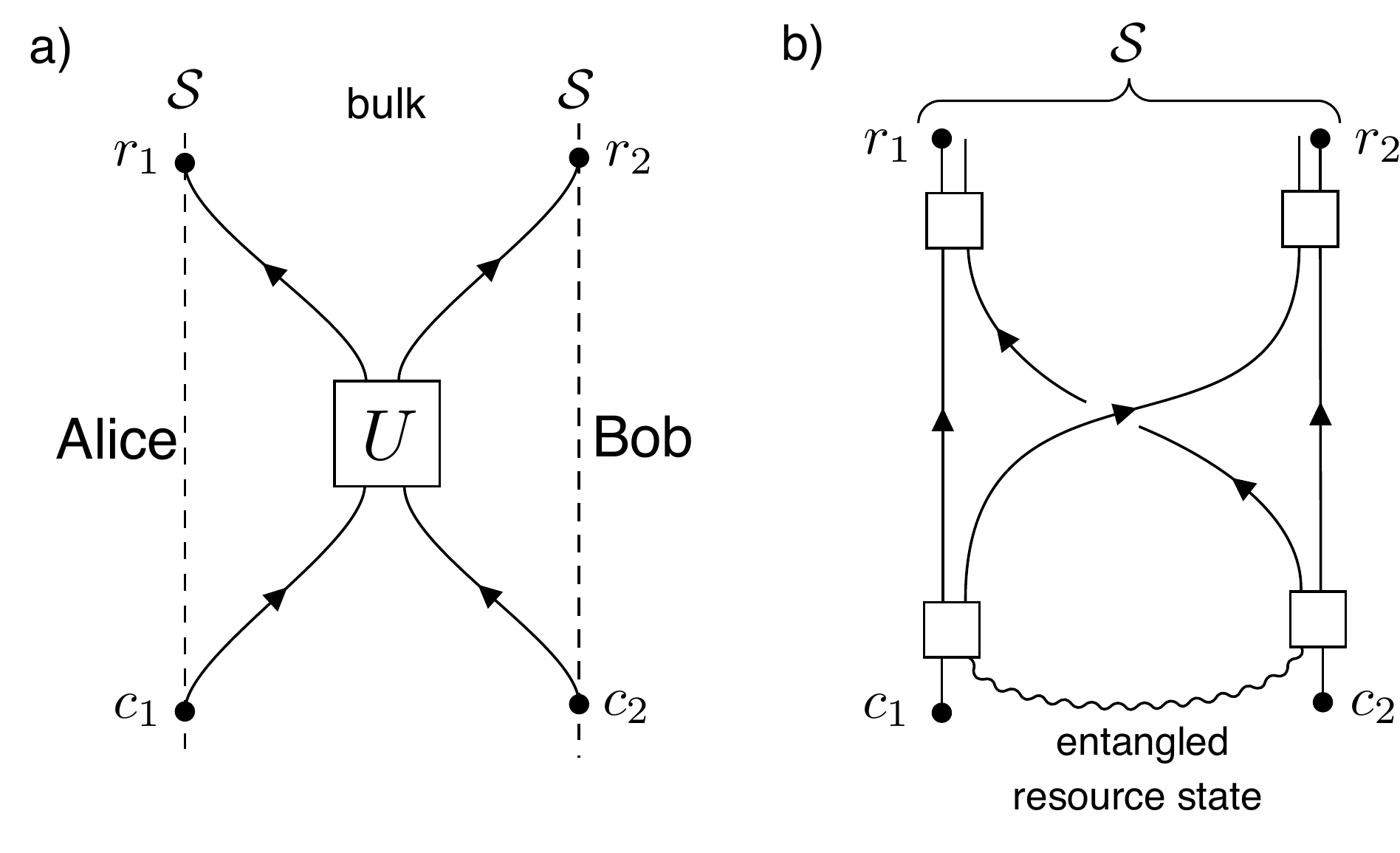}
    \caption{Two realizations of a %asymptotic 
    quantum task. a) A local implementation: %a quantum task of 
    Applying a unitary gate $U$ between Alice and Bob's qubits via {a} direct scattering in the bulk. b) An equivalent, nonlocal realization: the task can be achieved on the boundary/screen without direct contact using an entangled resource state between Alice and Bob. White boxes represent some local operations. This is known as a nonlocal quantum computation.}
    \label{fig:NLQC}
\end{figure}

In this subsection, we consider an asymptotic quantum task via holographic scattering {in AdS$_3$/CFT$_2$.  
Let Alice and Bob each have a qubit. They input each qubit at their respective input locations $c_1,c_2$, then process them through a holographic evolution, and receive their qubits back again at their respective output locations $r_1,r_2$. For the connected wedge theorem, we first consider performing a certain quantum task in the bulk through a direct scattering. The input and output points are placed on the boundary so that the causal future of the input points and the causal past of the output points have an intersection in the bulk, which is the scattering region. We configure the holographic evolution such that a unitary gate acts locally on the qubits in the scattering region to accomplish the task.} This protocol accomplishes the quantum task through {a} direct scattering as shown in Figure~\ref{fig:NLQC}a.\footnote{{Note that in principle, any unitary process can be implemented in a holographic setting.}} 
Now, let us turn to the boundary perspective. Since we work in holography, the task must be also achievable on the boundary, however, there are cases where no direct scattering is possible on the boundary.
Thus, we need to find a way to perform the same task nonlocally. Such a task is known as a nonlocal quantum computation~\cite{PhysRevLett.90.010402, Buhrman_2014}. In general, entanglement between Alice and Bob is required to accomplish the task nonlocally (Figure~\ref{fig:NLQC}b).
By considering a particular task called the \textbf{B}${}_{\bm{84}}$, it can be proven that to accomplish the task there must be a finite correlation between Alice and Bob. The relevant correlation must be available after the input is received and in the past of both outputs (Figure~\ref{fig:NLQC}b). This implies that there must be a finite mutual information between each decision region $\mathcal{R}_{1,2}$, defined as the intersection of the future domain of dependence of each input and the past domains of dependence of two outputs. The mutual information of these two regions is defined as
\begin{equation}
\label{eq:mut_inf}
    I(\mathcal{R}_1:\mathcal{R}_2)=S(\mathcal{R}_1) + S(\mathcal{R}_2) - S(\mathcal{R}_1\cup\mathcal{R}_2),
\end{equation}
where $S(\mathcal{R})$ is the entanglement entropy between $\mathcal{R}$ and its complement. Note that the definition of the decision regions heavily relies on the causal structure on the boundary $\mathcal{S}$. To accomplish the \textbf{B}${}_{\bm{84}}$ task $n$ times in parallel with high probability, one can show that the mutual information must be $I(\mathcal{R}_1:\mathcal{R}_2)\gtrsim n$. 
By taking $n$ any function as large as $o(1/G_N)$ to avoid backreaction, one can argue that the mutual information must be $O(1/G_N)$~\cite{May:2021nrl}.\footnote{The notation $f(x)=o(g(x))$ as $x\rightarrow \infty$ means that for any constant $\epsilon>0$ there exist $x_0$ such that $|f(x)|\leq \epsilon g(x)$ for $x\geq x_0$.} Note that this does not rely on the large-$N$ separation between the area term and the quantum correction of the mutual information.\footnote{This point is important, since later the dual quantum theory remains unknown so holographic entanglement entropy may not have a nice split into the area term of order $1/G_N$ (or possibly higher order) and quantum corrections of order unity.
There is possibly a term growing with a rate that is sub-linear but larger than any $1/G_N^a$ with $a<1$. Even if this is the case, the argument in the main body says the mutual information must be {at least as large as $1/G_N$,} implying a geometric correlation from the area terms. We thank Alex May for pointing out a loophole in the first version of this paper, and explaining this to us.}

By taking $n$ as large as $O(1/G_N^a)$ with $a<1$ to avoid backreaction, one can argue that the mutual information must be $O(1/G_N)$ due to the large-$N$ hierarchy between the area term and the quantum correction. In summary, the connected wedge theorem is given as follows:
\begin{theorem}[Connected wedge theorem~\cite{May:2019yxi}]\label{th:CWT}
    Consider two input points $c_1,c_2$ and output points $r_1,r_2$ on the boundary of an asymptotically AdS$_3$ spacetime. If a 2-to-2 scattering $c_1,c_2\rightarrow r_1,r_2$ is possible in the bulk, \emph{i.e.}
    \begin{equation}
        J_{12\rightarrow 12} = J^+(c_1)\cap J^+(c_2)\cap J^-(r_1) \cap J^-(r_2) \neq \varnothing,
    \end{equation}
    but not on the asymptotic boundary $\mathcal{S}$, \emph{i.e.}
    \begin{equation}
        J^+(c_1)\vert_\mathcal{S}\cap J^+(c_2)\vert_\mathcal{S}\cap J^-(r_1)\vert_\mathcal{S} \cap J^-(r_2)\vert_\mathcal{S} = \varnothing,
    \end{equation}
    then the decision regions $\mathcal{R}_{1,2}=J^+(c_{1,2})\vert_\mathcal{S}\cap J^-(r_1)\vert_\mathcal{S} \cap J^-(r_2)\vert_\mathcal{S}$ on the boundary must have a connected entanglement wedge, {equivalently}
    \begin{equation}
        I(\mathcal{R}_1:\mathcal{R}_2)=O(1/G_N).
    \end{equation}
\end{theorem}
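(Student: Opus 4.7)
The plan is to combine the nonlocal quantum computation argument on the boundary with the HRT formula on the bulk side. I would first establish a lower bound $I(\mathcal{R}_1:\mathcal{R}_2) = \Omega(1/G_N)$, and then upgrade it to the statement about connectivity of the entanglement wedge of $\mathcal{R}_1\cup\mathcal{R}_2$.

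For the mutual-information bound, I would follow the strategy sketched in Section~\ref{sec:asympt-task}: run $n$ parallel copies of the $\mathbf{B}_{84}$ task through the bulk scattering region. Because $J_{12\rightarrow 12}\neq\varnothing$, the bulk task can be accomplished locally by sending each signal qubit along a null geodesic to a point in $J_{12\rightarrow 12}$, applying the unitary $U$ there, and then distributing the outputs to $r_1,r_2$. Since boundary scattering is excluded by assumption, any boundary implementation must be nonlocal, and the standard entanglement lower bound for $\mathbf{B}_{84}$-type nonlocal computation gives $I(\mathcal{R}_1:\mathcal{R}_2)\gtrsim n$. Taking $n=o(1/G_N)$---large enough that $n$ dominates the $O(1)$ quantum corrections, but small enough that the signal stress-energy does not distort $J_{12\rightarrow 12}$ itself---then yields $I(\mathcal{R}_1:\mathcal{R}_2)$ of order $1/G_N$. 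To pass from a large mutual information to wedge connectivity, I would then invoke the HRT formula at leading order in $G_N$,
\begin{equation}
    I(\mathcal{R}_1:\mathcal{R}_2) = \frac{A(\gamma_{\mathcal{R}_1}) + A(\gamma_{\mathcal{R}_2}) - A(\gamma_{\mathcal{R}_1\cup\mathcal{R}_2})}{4G_N} + O(1).
\end{equation}
If the entanglement wedge of $\mathcal{R}_1\cup\mathcal{R}_2$ were disconnected, then $\gamma_{\mathcal{R}_1\cup\mathcal{R}_2}=\gamma_{\mathcal{R}_1}\cup\gamma_{\mathcal{R}_2}$, the area contribution to $I$ would vanish, and we would have $I=O(1)$, contradicting the previous step. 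Hence the entanglement wedge must be connected.

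The step I expect to be the main obstacle is the parallel-repetition/backreaction bookkeeping: one must check that the $n$ signal quanta injected at the scattering vertex do not spoil the validity of the classical scattering configuration, while $n$ still grows fast enough to beat the $O(1)$ quantum corrections to the mutual information. A parallel, purely geometric, route---which avoids the quantum-task detour---is to prove wedge connectivity directly by applying the focusing theorem to the null congruences $\partial J^{+}(c_i)$ and $\partial J^{-}(r_j)$, and constructing from their intersections a maximin competitor for $\mathcal{R}_1\cup\mathcal{R}_2$ whose total area is strictly smaller than $A(\gamma_{\mathcal{R}_1})+A(\gamma_{\mathcal{R}_2})$ whenever $J_{12\rightarrow 12}\neq\varnothing$; the delicate step in that route, following~\cite{May:2019odp}, is controlling the cusps where null generators focus and collide so that the comparison of areas remains rigorous.
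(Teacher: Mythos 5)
Your proposal follows essentially the same route as the paper: the $\mathbf{B}_{84}$ nonlocal-computation lower bound $I(\mathcal{R}_1:\mathcal{R}_2)\gtrsim n$ with $n$ taken as large as $o(1/G_N)$ to avoid backreaction, combined with the HRT formula to convert $O(1/G_N)$ mutual information into wedge connectivity, with the null-membrane/focusing construction of~\cite{May:2019odp} noted as the parallel geometric proof. This matches the paper's own presentation of the theorem, including its identification of the backreaction/parallel-repetition bookkeeping and the cusp control as the delicate steps.
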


This discussion is based on quantum information and does not rely on the details of the theory.\footnote{One caveat is that we ignore the complementary regions of the decision regions. As pointed out in~\cite{May:2019odp} and the erratum of~\cite{May:2019yxi}, GHZ-type resource states among the complementary regions and $\mathcal{R}_1\cup\mathcal{R}_2$ can offer a protocol that does not need a finite mutual information between $\mathcal{R}_1$ and $\mathcal{R}_2$. This is however unlikely the case for holographic scattering as we do not expect a large amount of the GHZ-type entanglement in holography~\cite{Susskind:2014yaa,Nezami:2016zni,Mori:2024gwe}.}
Thus, it should work in any holographic spacetime. In this paper, we use this connected wedge theorem as a guiding principle to constrain or check various proposals related to the holographic duality. One application in the previous study will be reviewed in the next subsection.

We note that for the asymptotic quantum task in AdS$_3$/CFT$_2$, there is a gravitational proof based on the focusing theorem and quantum extremal surface formula of the holographic entanglement entropy. This is explained in~\cite{May:2019odp} (see also~\cite{Mori:2023swn} for a short review). We will follow the basic strategy later for the dS case. We also note that the connected wedge theorem was initially claimed to be valid in any dimensions~\cite{May:2019odp,May:2019yxi,May:2021nrl}. It was however pointed out in~\cite{May:2022clu} that the geometric and quantum information arguments are not valid above three bulk dimensions. For the same reasons, the results of this paper are also strictly restricted to dS$_3$.

\subsection{Holographic scattering from a non-asymptotic boundary}

While the original proposal of the connected wedge theorem considers an asymptotic quantum task, where the nonlocal quantum computation takes place on the asymptotic boundary, there is no reason not to consider more general holographic spacetime such as non-AdS and/or a holographic screen not located at the asymptotic boundary.

The authors of~\cite{Mori:2023swn}, including one of us, extended the connected wedge theorem to a braneworld or cutoff surface in an asymptotically AdS$_3$ spacetime. In these setups, the holographic screen $\mathcal{S}$ is located somewhere other than the asymptotic boundary. It turns out that the causality based on the induced metric on $\mathcal{S}$ leads to an apparent violation of the connected wedge theorem. The resolution presented in the work is to fill behind the hypersurface with a fictitious asymptotically AdS space,\footnote{In the work~\cite{Mori:2023swn}, the focus was a braneworld/cutoff AdS so just extending the original spacetime beyond the hypersurface was sufficient to resolve the puzzle. In general, one can glue an arbitrary fictitious spacetime with a fictitious asymptotic boundary as long as it satisfies the Israel junction condition~\cite{Israel:1966rt} to be a smooth spacetime satisfying the Einstein equation.} and extend the scattering trajectories to the fictitious asymptotic boundary to define fictitious input and output points $\tilde{c}_{1,2}\in J^-(c_{1,2})$, $\tilde{r}_{1,2}\in J^+(r_{1,2})$. See Figure~\ref{fig:brane-ind} for its illustration.
The authors have shown that the boundary domains of dependence defined from the induced causality, denoted by $\hat{J}^\pm$, align with the connected wedge theorem.
After all,~\cite{Mori:2023swn} proposes the following refined connected wedge theorem:
\begin{theorem}[Refined connected wedge theorem]\label{thm:refined-CWT}
    Two input points $c_1,c_2$ and output points $r_1,r_2$ are on a holographic screen $\mathcal{S}$, which is not necessarily the conformal boundary of asymptotically AdS$_3$ spacetime. Suppose a 2-to-2 scattering $c_1,c_2\rightarrow r_1,r_2$ is possible in the bulk, \emph{i.e.}
    \begin{equation}
        J_{12\rightarrow 12} = J^+(c_1)\cap J^+(c_2)\cap J^-(r_1) \cap J^-(r_2) \neq \varnothing,
    \end{equation}
    but not on the holographic screen $\mathcal{S}$, \emph{i.e.}
    \begin{equation}
        \qty[J^+(\tilde{c}_1)\cap J^+(\tilde{c}_2)\cap J^-(\tilde{r}_1)\cap J^-(\tilde{r}_2)] \cap \mathcal{S} = \varnothing,
    \end{equation}
    where the boundary causality determining $\hat{J}^\pm (p)$ is given by the induced lightcones $\hat{J}^{\pm}(p)=J^\pm (\tilde{p})\cap\mathcal{S}$ from a fictitious point $\tilde{p}$ on the fictitious asymptotic boundary.\footnote{See also Definition~\ref{def:induced}.}
    Then, the decision regions $\mathcal{R}_{1,2}=\hat{J}^+(c_{1,2})\cap \hat{J}^-(r_1) \cap \hat{J}^-(r_2)$ on the screen $\mathcal{S}$ must have a connected entanglement wedge, implied from
    \begin{equation}
        I(\mathcal{R}_1:\mathcal{R}_2)=O(1/G_N).
    \end{equation}
\end{theorem}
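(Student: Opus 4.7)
The strategy is to reduce Theorem~\ref{thm:refined-CWT} to the original connected wedge theorem (Theorem~\ref{th:CWT}) by extending the spacetime across $\mathcal{S}$ with a fictitious asymptotically AdS$_3$ region, glued along $\mathcal{S}$ so that the Israel junction condition is satisfied and the result is a smooth solution of Einstein's equations obeying the null energy condition. The fictitious asymptotic boundary $\tilde{\mathcal{S}}$ then carries the fictitious points $\tilde{c}_i,\tilde{r}_i$ that, by definition, generate the induced lightcones $\hat{J}^\pm$ on $\mathcal{S}$. The task then becomes to translate a mutual-information statement on $\tilde{\mathcal{S}}$ into one on $\mathcal{S}$.

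The first step after the gluing is to verify that the bulk scattering hypothesis transfers to the fictitious points. This is immediate because $\tilde{c}_i\in J^-(c_i)$ and $\tilde{r}_i\in J^+(r_i)$ imply $J^+(\tilde{c}_1)\cap J^+(\tilde{c}_2)\cap J^-(\tilde{r}_1)\cap J^-(\tilde{r}_2) \supseteq J_{12\to 12}\neq\varnothing$. I would also argue that the absence of an induced boundary scattering on $\mathcal{S}$ implies the absence of an asymptotic boundary scattering on $\tilde{\mathcal{S}}$: a putative scattering point on $\tilde{\mathcal{S}}$ would descend, via the induced lightcone intersections, to a scattering point on $\mathcal{S}$ in the induced sense. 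One is then squarely in the hypotheses of Theorem~\ref{th:CWT} applied to the extended spacetime, yielding $I(\tilde{\mathcal{R}}_1:\tilde{\mathcal{R}}_2)=O(1/G_N)$ on $\tilde{\mathcal{S}}$, together with a connected HRT surface for $\tilde{\mathcal{R}}_1\cup\tilde{\mathcal{R}}_2$ that passes through the causal envelope of $J_{12\to 12}$. This uses the standard focusing argument of~\cite{May:2019odp} for null congruences bounding the relevant lightcones.

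The final step is to push connectedness down from the extended setup to the original spacetime bounded by $\mathcal{S}$. The HRT surface of $\tilde{\mathcal{R}}_1\cup\tilde{\mathcal{R}}_2$ lies in the causal envelope of $J_{12\to 12}$, which sits on the physical side of $\mathcal{S}$, so the topological obstruction to connectedness is localized near $\mathcal{S}$. Combining this with the extremization/maximin equivalence for the non-asymptotic screen entropy prescription (in the spirit of the later Theorem~\ref{th:patch}), I would argue that the true extremal surface for $\mathcal{R}_1\cup\mathcal{R}_2$, computed with the screen prescription, inherits the connected topology; applying the focusing theorem to the null congruences emanating from $\partial\hat{J}^\pm(c_i),\partial\hat{J}^\pm(r_i)$ controls how its area compares to $S(\mathcal{R}_1)+S(\mathcal{R}_2)$, giving the $O(1/G_N)$ gap.

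The main obstacle is this last transfer step: the HRT surface in the extended spacetime is not itself an admissible surface for the screen entropy prescription, and naive truncation at $\mathcal{S}$ is not extremal. One must show that whichever extremal/maximin surface is selected by the screen prescription still connects $\mathcal{R}_1$ and $\mathcal{R}_2$, which requires a focusing argument with boundary conditions adapted to $\mathcal{S}$ and a careful handling of possible cusps where null generators of $\hat{J}^\pm$ meet $\mathcal{S}$. A secondary subtlety, flagged by the authors for the dS case, is the ambiguity of the screen entropy prescription itself (monolayer vs.\ bilayer, extremization vs.\ constrained extremization); the proof implicitly commits to a prescription for which maximin equivalence can be invoked.
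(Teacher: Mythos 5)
Your proposal takes a genuinely different route from the paper's, and it stalls at the step you yourself flag. For reference: the paper does not reprove Theorem~\ref{thm:refined-CWT} (it quotes it from~\cite{Mori:2023swn}), but its proof of the closely analogous Theorem~\ref{th:CWTpatch} -- explicitly modeled on the proof of the refined theorem -- is a direct geometric argument. One takes the extremal surfaces $\gamma_1,\gamma_2$ of the decision regions \emph{as computed by the screen prescription in the region bounded by} $\mathcal{S}$, builds a null membrane from (i) the lightsheets fired from $\gamma_1\cup\gamma_2$ and (ii) portions of the causal horizons $\partial J^-(\tilde r_1)$, $\partial J^-(\tilde r_2)$, and uses the focusing theorem plus the second law of causal horizons to exhibit, on every achronal slice, a homologous surface no larger than $\gamma_1\cup\gamma_2$; together with the extremal--maximin equivalence this rules out the disconnected surface being minimal. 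The fictitious points enter only through the causal horizons used for the slope; no auxiliary holographic duality on $\tilde{\mathcal{S}}$ is ever invoked.

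The genuine gap in your reduction is the transfer step. Applying Theorem~\ref{th:CWT} in the extended spacetime yields a mutual-information statement for regions $\tilde{\mathcal{R}}_i$ on the fictitious asymptotic boundary, computed with HRT surfaces anchored to $\tilde{\mathcal{S}}$ that generically dip into the fictitious region and satisfy a homology condition in the full extended spacetime. The theorem concerns $I(\mathcal{R}_1:\mathcal{R}_2)$ for regions on $\mathcal{S}$, computed with extremal/maximin surfaces confined to, and homologous within, the region bounded by $\mathcal{S}$. These are different variational problems with different competitors, and connectedness of the former entanglement wedge does not imply connectedness of the latter; ``the true extremal surface inherits the connected topology'' is precisely the assertion that needs an argument, and it is where the actual proof does its work, by running the null-membrane contradiction directly against the disconnected candidate of the screen prescription. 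A secondary soft spot: your claim that absence of induced scattering on $\mathcal{S}$ implies absence of asymptotic scattering on $\tilde{\mathcal{S}}$ does not follow immediately, since the hypothesis only says the fictitious scattering region $J^+(\tilde c_1)\cap J^+(\tilde c_2)\cap J^-(\tilde r_1)\cap J^-(\tilde r_2)$ misses $\mathcal{S}$, not that it misses $\tilde{\mathcal{S}}$; one would need a separation or connectedness argument for that region to rule out a component touching the fictitious boundary.
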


While the fictitious spacetime and boundary behind the holographic screen are not necessarily unique, the induced causality from a local point on the fictitious boundary is anticipated from the apparent nonlocality/superluminality of the boundary theory based on holographic renormalization group flow~\cite{Freedman:1999gp,Girardello:1998pd,Distler:1998gb,McGough:2016lol}. The fictitious boundary behind the holographic screen serves as the `true' UV boundary, and a fictitious local excitation on the UV boundary induces an effective, apparently nonlocal excitation dual to a localized signal in the bulk. This idea of the induced lightcone identifies causal and entanglement structures consistent with the holographic description.
The connected wedge theorem serves as a nontrivial check for the induced lightcone proposal.

\begin{figure}
    \centering
    \includegraphics[width=0.3\linewidth]{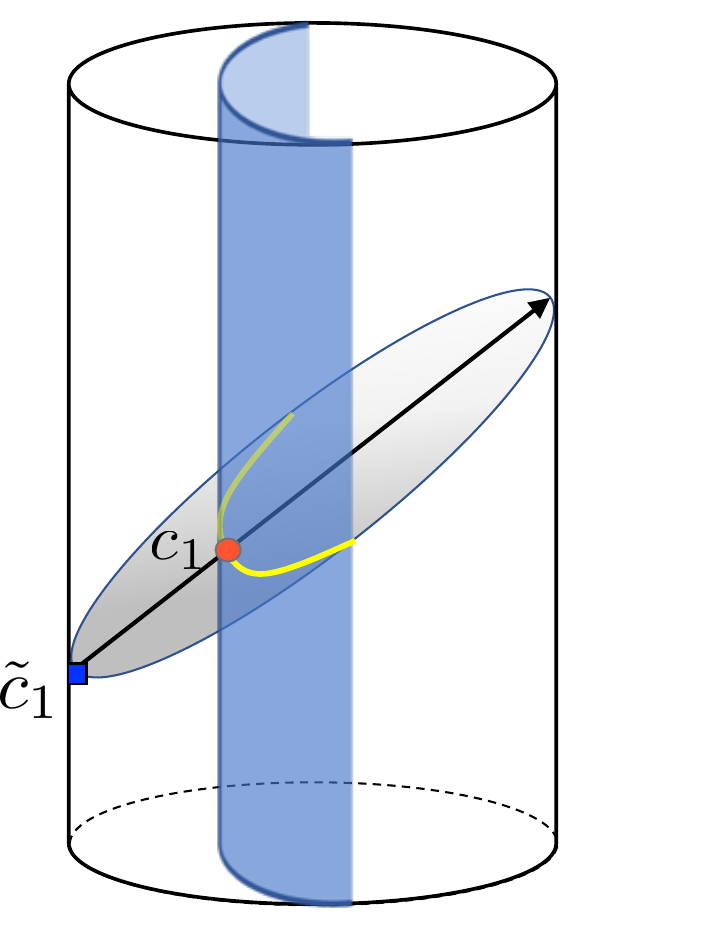}
    \caption{Given an input point $c_1$ on the holographic screen $\mathcal{S}$ (blue surface), a fictitious input point $\tilde{c}_1$ is defined on the fictitious asymptotic boundary. The intersection of the bulk lightcone emanating from $\tilde{c}_1$ and the screen $\mathcal{S}$ defines the induced lightcone on $\mathcal{S}$.}
    \label{fig:brane-ind}
\end{figure}

We emphasize that this induced lightcone approach in the light of the connected wedge theorem amounts to identifying the UV boundary that describes a non-local boundary theory locally. This determines the causality associated with holographic entanglement on the holographic screen. 

Unlike the cases considered in~\cite{Mori:2023swn}, where holographic entanglement entropy prescription is examined in some explicit models on the brane or a cutoff surface, so far there is no explicit model of static patch holography for dS$_3$. Hence the holographic entanglement entropy prescription is not verified from the dual quantum theory. However, by using the fact that various proposals reduce to a single prescription in a static patch (Appendix~\ref{app:proofs}), we assume the prescription and ask what causality on the screen is consistent with the dS holography. As the connected wedge theorem has its origin in quantum information, it should be true regardless of the dual UV theory. Thus, the theorem offers a nontrivial criterion on causality and a consistency check with the holographic entanglement entropy prescription.

\section{Static patch holography}
\label{sec:SPH}

The simplest model of the universe undergoing accelerated expansion is given by the dS geometry. We focus on three dimensions as there exist subtleties related to the proof of the connected wedge theorem in higher dimensions~\cite{May:2022clu}.

In conformal coordinates, the metric takes the form
\begin{equation}
\label{eq:conf}
    \d s^2 = \frac{1}{\cos^2\sigma}\left(-\d \sigma^2 +\d\theta^2 +\sin^2\theta\d\varphi^2 \right),
\end{equation}
where $\sigma\in(-\pi/2,\pi/2)$, $\theta\in[0,\pi]$, and $\varphi\in[0,2\pi)$. {See Figure~\ref{fig:Penrose_dS} for the associated Penrose diagram.}
\begin{figure}
    \centering
    \includegraphics[width=0.4\linewidth]{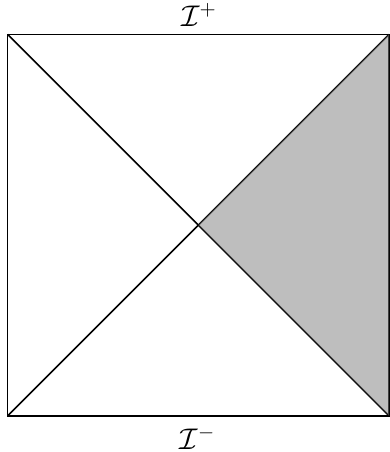}
    \caption{{Penrose diagram of de Sitter space. Two horizontal lines are the past and future null infinities $\mathcal{I}^{\pm}$. The shaded region is the causal region -- static patch -- of a static observer located at $\theta=\pi$ (south pole).}}
    \label{fig:Penrose_dS}
\end{figure}
Here the radius of curvature $l_{dS}$ has been set to $1$. Future and past null infinities $\mathcal{I}^\pm$ are located at $\sigma=\pm\pi/2$, respectively.
Once an observer is defined as a causal worldline in spacetime, de Sitter spacetime has the particularity to possess observer-dependent horizons. Indeed, the region where an observer can send and receive signals never covers the whole universe. Such a region is often called the static patch -- or causal region -- of the observer and it is bounded by a cosmological horizon.\footnote{As mentioned in footnote 2, the static patch in this paper is not necessarily described by a time-independent metric.} 

Contrary to black hole horizons, cosmological horizons depend on the worldline of the observer, as they are constructed as the union of the causal past and causal future of the future and past endpoints of the worldline. {The observer's and antipodal observer's worldlines are given by $\theta=\pi$ (south pole) and $\theta=0$ (north pole) in the conformal coordinates, respectively.} The {cosmological} horizon can be made manifest by writing the dS metric in static coordinates,
\begin{equation}
\label{eq:stat}
    \d s^2=-(1-r^2)\d t^2 + \frac{\d r^2}{1-r^2}+r^2\d\varphi^2,
\end{equation}
where $t\in\mathbb{R}$ and $r\in[0,1]$. The worldline of the observer is located at $r\rightarrow 0$, and its cosmological horizon is at $r=1$. Note that this coordinate system only covers the static patch of the observer. In particular, there is no global future-directed timelike Killing vector in dS spacetime.

An alternative definition of dS space, which will be very useful in the following, is given as a hypersurface embedded in the Minkowski spacetime
\begin{align}
    \d s^2 &= -dX_0^2+dX_3^2+dX_1^2+dX_2^2.
\end{align}
dS spacetime is then defined by considering the induced metric from the constraint
\begin{equation}
\label{eq:dS_hyperb}
   1 = -X_0^2 +X_3^2 +X_1^2 +X_2^2.
\end{equation}
One can show that the metrics \eqref{eq:conf} and \eqref{eq:stat} satisfy this constraint by parametrizing the embedding coordinates as
\begin{alignat}{2}
        X_0 & =  \tan \sigma &&= \sqrt{1-r^2} \sinh t, \\
        X_3 & =  \frac{\cos\theta}{\cos\sigma} &&= - \sqrt{1-r^2}\cosh t, \\
        X_1 & =  \frac{\sin\theta}{\cos\sigma} \cos\varphi && = r \cos\varphi, \\
        X_2 & =  \frac{\sin\theta}{\cos\sigma} \sin\varphi && = r \sin\varphi.
\end{alignat}
Note that we matched the time direction between the conformal patch and the right static patch.
The static coordinates above only cover the right static patch as $X_3\ge 0$ in the static coordinates above, which corresponds to $\theta\in[\pi/2,\pi]$.

Holographically describing dS spacetime is conceptually complicated. One of the main reasons for this is the absence of timelike boundaries. Indeed, spacelike slices of de Sitter are topologically closed; they do not have any boundary. One proposal by Strominger is that the holographic dual is located at null infinity $\mathcal{I}^+$, a conformal spacelike boundary of the spacetime~\cite{Strominger:2001pn, Bousso:2001mw}. This can be seen as a sort of a Euclidean continuation of AdS/CFT, known as the \emph{dS/CFT correspondence}. However, one of the largest differences between dS/CFT and AdS/CFT is that the notion of time is lacking in the former. Thus, the dual CFT is considered to be exotic like non-unitary and/or having an imaginary central charge~\cite{Hikida:2022ltr}. 
Even when the dS spacetime itself is Lorentzian, the dual CFT defined on $S^d$ at future null infinity $\mathcal{I}^+$ (which is a conformal boundary) is Euclidean. 

Identifying the Gibbons-Hawking dS entropy as a counting of holographic degrees of freedom, we have another proposal, called \emph{static patch holography}~\cite{Susskind:2021omt}. This conjectures that a quantum theory located on a stretched horizon encodes the state of its interior. In this paper, we refer to the interior of a cosmological horizon as the side containing the observer's worldline and the exterior as the side not contained in either static patch. 
We mainly focus on static patch holography in this paper; however, brief comments will be made later in relation to the other proposal.

Following~\cite{Susskind:2021omt, Franken:2023pni}, let us make the static patch holography proposal more precise. Consider an arbitrary observer in asymptotically dS spacetime. Achronal slices of the bulk are denoted by $\Sigma$. 
To encode bulk dynamics holographically, a holographic screen should be a timelike, codimension-one hypersurface. Furthermore, we expect that the (code subspace) algebra of the holographic screen should be equal to that of the observer's worldline $\zeta$. Since the timelike tube theorem~\cite{Borchers1961,osti_4665531,Strohmaier:2023opz} states that the algebra of the observer's worldline is equivalent to that of operators in its timelike envelope $E(\zeta)$~\cite{Witten:2023xze}, the domain of dependence of the holographic screen should equal $E(\zeta)$. This means $\partial\zeta$ should coincide with the future and past timelike edge of the holographic screen.\footnote{Physically, this means no signal from the observer travels beyond the holographic screen without crossing it. Conversely, any signal sent from a point beyond the screen cannot be received by the observer without crossing the screen.}
Additionally, the holographic screen should be convex to avoid any subtleties regarding holographic entanglement entropy prescription and induced causality~\cite{Mori:2023swn}.\footnote{This condition could possibly be removed by considering a generalized entanglement wedge~\cite{Bousso:2022hlz}.} After these considerations, we define the holographic screen in static patch holography as follows:
\begin{definition}[Holographic screen in static patch of asymptotically dS]
\label{def:ap-sp}
    The holographic screen $\mathcal{S}$ associated with an observer is a codimension-one convex timelike hypersurface in the static patch, anchored to the observer's worldline endpoints.
\end{definition}
\begin{remark}
    We can generalize this notion of the holographic screen to more general spacetimes such as a Friedmann-Lemaître-Robertson-Walker spacetime by explicitly requiring non-positive expansion toward the observer's worldline to ensure the Bousso bound~\cite{Bousso:1999xy} as follows:
\end{remark}
\begin{definition}[Holographic screen]
    \label{def:ap}
    The holographic screen $\mathcal{S}$ associated with an observer is the codimension-one convex timelike boundary of a region in which all closed codimension-two surfaces for which null geodesics orthogonal to the surface and directed towards the observer's worldline are of non-positive expansion. When referring to a holographic screen without mentioning an observer, $\mathcal{S}$ may be the union of holographic screens associated with different observers.
\end{definition}

In pure dS, $\mathcal{S}$ is any convex timelike codimension-one surface inside the cosmological horizon. 
When it is located at a distance of order $O(G_N)$ away from the horizon, it is called a stretched horizon. Note that our definition naturally generalizes to holographic screens associated with finite-lifetime observers.\footnote{We thank Cynthia Keeler for discussions on this point.}

\begin{conjecture}[Static patch holography~\cite{Susskind:2021omt}]
\label{conj:SPH}
The semiclassical gravity description of the region inside the screen $\mathcal{S}$ is dual to a quantum theory defined on $\mathcal{S}$.\footnote{Based on earlier works~\cite{Susskind:2021esx,Shaghoulian:2021cef, Shaghoulian:2022fop}, it was argued by the authors of~\cite{Franken:2023pni}, including one of us, that two stretched horizons associated with two antipodal observers can not only encode their associated interior but also the bulk region separating them. See Appendix~\ref{sec:FPRT} for more details.} 
\end{conjecture}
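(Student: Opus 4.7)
The plan is to establish Conjecture~\ref{conj:SPH} in two stages: first an algebraic identification between the bulk region inside $\mathcal{S}$ and an auxiliary algebra attached to $\mathcal{S}$, and then a concrete reconstruction of that algebra as a quantum theory. Because the statement is a duality conjecture rather than a purely geometric assertion, the proof can only be as rigorous as the definition of the candidate dual.

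First I would combine Definition~\ref{def:ap-sp} with the timelike tube theorem~\cite{Borchers1961,osti_4665531,Strohmaier:2023opz}. By construction $\partial\zeta$ coincides with the future and past timelike edges of $\mathcal{S}$, so the domain of dependence $D(\mathcal{S})$ contains the timelike envelope $E(\zeta)$, while convexity of $\mathcal{S}$ together with global hyperbolicity gives the reverse inclusion. The timelike tube theorem then equates the algebra of the worldline with that of $E(\zeta)$, hence with that of $D(\mathcal{S})$. Within the semiclassical QFT-in-curved-space approximation this identifies all bulk data inside $\mathcal{S}$ with data localized on $\mathcal{S}$, which is the kinematical content of the conjecture. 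A careful treatment of the gravitational dressing — for instance along the lines of a crossed-product construction with respect to the modular flow of the observer's worldline — should upgrade this to an algebra of the correct type.

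Next I would attempt a bulk-to-screen reconstruction of operators, analogous to the HKLL construction of AdS/CFT but adapted to the non-asymptotic nature of $\mathcal{S}$. Because $\mathcal{S}$ sits at finite distance, the reconstruction kernels would be built from mode expansions of bulk fields adapted to the static patch, and one should expect them to be inherently nonlocal on $\mathcal{S}$, consistent both with the hyperfast scrambling highlighted in the introduction and with the induced causality of Definition~\ref{def:induced}. Natural cross-checks include reproducing the Gibbons--Hawking entropy from state counting on $\mathcal{S}$, matching bulk correlators with screen correlators, and reproducing bulk causality via the connected wedge theorem, which is precisely the part this paper addresses.

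The hard part will be specifying the microscopic dynamics of the dual theory. Matching entropies fixes only the dimension of the Hilbert space and not its Hamiltonian, and although DSSYK-type proposals exist in lower dimensions they do not straightforwardly lift to $d=3$; in particular, no candidate Lagrangian or spectrum is available. My proposal is therefore necessarily a self-consistency argument: one shows that the conjectured duality, combined with the induced causality and entropy prescription entering Theorem~\ref{th:CWTpatchsimplified}, is internally consistent and reproduces all available geometric observables. A genuine proof must wait for an explicit dual model, and I would regard the results of the present paper as providing one further consistency check of the conjecture rather than a derivation of it.
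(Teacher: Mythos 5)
The statement you were asked to prove is not proven in the paper at all: it is stated as Conjecture~\ref{conj:SPH} and is explicitly used as an \emph{assumption} in the proof of Theorem~\ref{th:CWTpatch}. The paper only motivates it — via the covariant entropy bound, the timelike tube theorem (which it invokes to justify Definition~\ref{def:ap-sp} of the screen, i.e.\ to argue that $D(\mathcal{S})$ should equal the timelike envelope $E(\zeta)$), and external evidence such as the DSSYK matches. Your closing assessment, that the results of this paper are a consistency check of the conjecture rather than a derivation, is therefore exactly the paper's own stance, and your outline of a research program (algebraic identification, HKLL-type reconstruction, entropy matching) is reasonable as such.

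That said, be careful not to oversell your first paragraph as establishing ``the kinematical content of the conjecture.'' The timelike tube theorem is a statement \emph{within} bulk perturbative QFT: it equates the algebra of the observer's worldline with that of $E(\zeta)$, and hence (granting your inclusion argument) with that of $D(\mathcal{S})$. This shows that bulk operators inside the screen are generated by operators localized near $\zeta$ or $\mathcal{S}$ \emph{in the bulk effective theory}; it does not produce an independent quantum theory intrinsically defined on $\mathcal{S}$ with the finite, Gibbons--Hawking-sized Hilbert space that the holographic conjecture asserts, nor does it address the nonlocality and entropy structure that the rest of the paper relies on. The genuine content of the conjecture lies precisely in the existence of that microscopic dual, which — as you correctly note — no current argument supplies. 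So your proposal is an honest and well-organized plan of attack, but it is not a proof, and the paper does not contain one to compare it against.
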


This conjecture was initially motivated by the covariant entropy bound~\cite{Susskind:2021omt, Bousso:1999xy, Bousso:2002ju} and then developed in~\cite{Susskind:2021dfc, Susskind:2021esx, Lin:2022nss, Susskind:2023hnj}. See~\cite{Shyam:2021ciy, Lewkowycz:2019xse, Coleman:2021nor, Banihashemi:2022htw, Banihashemi:2022jys} for additional supportive evidence. Moreover, a recent proposal of Conjecture~\ref{conj:SPH} for asymptotically dS${}_3$ identifies $\mathcal{S}$ with either the worldline of the two antipodal observers or the two stretched horizons~\cite{Susskind:2021esx,Lin:2022nss,Rahman:2022jsf,Goel:2023svz,Narovlansky:2023lfz,Verlinde:2024znh,Verlinde:2024zrh,Blommaert:2023opb,Blommaert:2023wad,Rahman:2024iiu,Milekhin:2023bjv,Xu:2024hoc,Milekhin:2024vbb}. Following a dimensional reduction, the dual theory is conjectured to be the double-scaled Sachdev-Ye-Kitaev (DSSYK) model, with a number of nontrivial matches, including correlation functions, the partition function, quasinormal modes, hyperfast scrambling and the algebra of observables. Conjecture~\ref{conj:SPH} will be one of the main assumptions in our proof of the connected wedge theorem in the static patch (Theorem~\ref{th:CWTpatch}).

To compute holographic entanglement entropy of subsystems of spatial slices of the holographic screen, we will consider the bilayer proposal~\cite{Shaghoulian:2021cef,Franken:2023pni}. A detailed introduction to this prescription is provided in Appendix~\ref{app:ext_proc}. Consider a spacelike subsystem of the screen, $A\in \mathcal{S}\vert_{\Sigma}$. Its holographic entanglement entropy in the proposal is given by the sum of the areas of the two extremal surfaces homologous to $A$ on both sides of the screen, $D$ and $D_E$, where $D$ denotes the codimension-zero interior of the screen, and $D_E$ denotes the codimension-zero region sandwiched by the screen and the complementary one associated with an antipodal observer. For their precise definition, see Appendix~\ref{app:ext_proc} and Fig.~\ref{fig:penrose}. (Note that $D$ is denoted as $D_R$ in the appendix.) The holographic entanglement entropy is proposed as
\begin{equation}
    S(A) = \frac{1}{4G_N}\left[\mathrm{area}(\gamma_e(A;D))+\mathrm{area}(\gamma_e(A;D_E))\right]+O(1),
\end{equation}
where $\gamma_e(A;D)$ and $\gamma_e(A;D_E)$ are the minimal extremal surfaces homologous to $A$ in $D$ and $D_E$, respectively.

We show in Appendix~\ref{app:proofs} that analogous to the AdS case, $\gamma_e(A;D)$ can be found by a maximin procedure, which amounts to finding the minimal area surface on any spatial slice $\Sigma$ such that $\partial\Sigma=\mathcal{S}\vert_{\Sigma}$, and choosing the maximum-area surface among them:
\begin{equation}
    \mathrm{area}(\gamma_e(A;D)) = \max_{\Sigma}\left(\min_{\gamma\in\Sigma}\left(\mathrm{area}(\gamma)\right)\right).
\end{equation}

In contrast, $\gamma_e(A;D_E)$ depends on which extremal surface prescription we use, as mentioned in Appendix~\ref{app:def}. In particular, extremal surfaces are not always maximin surfaces in $D_E$, and conversely. Nevertheless, it does not matter for the discussion of the connected wedge theorem as explained now. The homology condition in $D_E$ suggests that any subsystem of the screen is \emph{not} homologous to its complementary regions on the screen, due to the presence of another boundary. Thus, given two spacelike subsystems $A$ and $B$ of the screen,
$\gamma_e(A\cup B;D_E)$ is necessarily equal to $\gamma_e(A;D_E)\cup \gamma_e(A;D_E)$. It follows that the contributions from $D_E$ cancel out in the mutual information. Finally, the mutual information is purely given in terms of the area in the interior as
\begin{equation}
\label{eq:for}
    I(A:B)=\frac{1}{4G_N}\left[\mathrm{area}(\gamma_e(A;D))+\mathrm{area}(\gamma_e(B;D))-\mathrm{area}(\gamma_e(A\cup B;D))\right].
\end{equation}

\section{Causality on the holographic screen}
\label{sec:causality}

In this section, we present an apparent failure of the connected wedge theorem in dS$_3$ spacetime. As the theorem has a proof based on quantum information, which should apply in any holographic setup, this poses a puzzle. {(Note that the quantum information argument implicitly requires unitarity. See a remark after this paragraph.)} We resolve it by revisiting causality on the holographic screen. This resolution leads to three important consequences: 1) An insight for the dS connected wedge theorem, presented in Section~\ref{sec:theorem} for three-dimensional asymptotically dS spacetime. 2) Bulk local excitations in the interior of an observer's holographic screen, emanating from the screen, are not described by local operators on the screen. 3) A local excitation on $\mathcal{S}$ should be mapped to a local operator on $\mathcal{I}^{\pm}$, hinting at a relation between static patch holography and the dS/CFT correspondence.

{
\begin{remark}
    In Section~\ref{sec:asympt-task}, we implicitly assume unitarity on the screen/boundary in the quantum information argument for the connected wedge theorem. One may think this is not the case for a static patch of dS as it is a part of the global dS. We argue however that this would be the case. Recall that there is a timelike Killing vector in a static patch of pure dS. This implies that there is a Hamiltonian that generates a time translation so the state in the static patch is expected to evolve unitarily, according to the Liouville-von Neumann equation. This does not contradict with the static patch being a part of the global spacetime since the state can be mixed; in fact, it should be an (almost) maximally mixed state carrying dS entropy. We should stress that, however, a static patch may not have a timelike Killing vector in asymptotically dS, as mentioned in footnote 2. Nevertheless, our gravitational proof that we present later in Section~\ref{sec:proof} works even in these cases by defining the static patch as the causal region (timelike envelope) of the observer's worldline.
\end{remark}
}

\subsection{An apparent violation of the connected wedge theorem}
\label{sec:violation}

Let us consider the limiting case where the holographic screen $\mathcal{S}$ of an observer in pure dS${}_3$ is located on the cosmological horizon. We consider here an example of a $2$-to-$2$ scattering $c_1,c_2\rightarrow r_1,r_2$ with
\begin{align}
\label{eq:Pscat}
\begin{alignedat}{2}
    c_1 &= (-\pi/4,3\pi/4,\pi/2), &\quad c_2 &= (-\pi/4,3\pi/4,-\pi/2), \\
    r_1 &= (\pi/4,3\pi/4,0), &\quad r_2 &= (\pi/4,3\pi/4,\pi),
\end{alignedat}
\end{align}
in conformal coordinates $(\sigma,\theta,\varphi)$. This scattering is possible in the bulk. In particular, the scattering can only occur at one point:
\begin{equation}
    J_{12\rightarrow 12} = J^+(c_1)\cap J^+(c_2) \cap J^-(r_1) \cap J^-(r_2) = (0,\pi/2,0).
\end{equation}
We would like to test if the connected wedge theorem (Theorem~\ref{th:CWT}) applies to this example of scattering in dS spacetime. For this, one computes the decision regions, where each decision region $\mathcal{R}_i$ is defined as an intersection on the screen $\mathcal{S}$ among the future domain of dependence of $c_i$ and the past domains of dependence of $r_1$ and $r_2$ for $i=1,2$.

To this end, one needs to identify the appropriate causality on the screen $\mathcal{S}$. 
One possibility would be the causality based on the induced metric on $\mathcal{S}$. In this case, the screen causality is completely blind to the holographic bulk. In the current limiting case, the discrepancy between the bulk causality and boundary causality is the highest, as $\mathcal{S}$ is located on the cosmological horizon, i.e. the induced metric is
\begin{equation}
    ds^2_\mathcal{S}=d\varphi^2.
\end{equation}
Thus, under this causality, only light can propagate at a fixed angle $\varphi$. This suggests that the decision region is empty as the spatial location of each input and output point differs.\footnote{There is a subtlety at $\sigma=0$ as the cosmological horizon bifurcates. However, this subtlety can be avoided without significantly modifying the metric by replacing $\mathcal{S}$ with the stretched horizon. Thus, the conclusion should be the same.} Note that the decision region becomes nonempty only when the spatial location of one of the input points and both output points coincide. Even if this happens, the decision region is pointlike so it leads to a significant violation of the connected wedge theorem. Indeed, equation \eqref{eq:for} leads to a vanishing mutual information since the area of $\gamma_e(A;D)$ goes to $0$ as $A$ becomes pointlike. The disconnected extremal surfaces $\gamma_e(A;D)\cup \gamma_e(B;D)$ therefore always dominate when $A$ and $B$ are pointlike. 
See Appendix~\ref{app:ext_proc} for more details on the holographic entanglement entropy prescription.

Another possibility is to consider the bulk causality restricted on the screen. In this case, each decision region is given by $J^+(c_{i})\vert_\mathcal{S}\cap J^-(r_1)\vert_\mathcal{S} \cap J^-(r_2)\vert_\mathcal{S}$ for $i=1,2$, where we define the lightcone of a point $p$ on the screen as
\begin{equation}
\label{eq:deflc}
    J^{\pm}\vert_{\mathcal{S}}(p) = J^{\pm}(p) \cap \mathcal{S},\quad p\in\mathcal{S}.
\end{equation}
To find the decision region, we need to find the intersection between the boundary of the bulk lightcone of a point on the horizon and the cosmological horizon itself. This reduces to finding the points in the embedding spacetime that are lightlike separated from the horizon, and on the horizon. Noting that the horizon in the embedding coordinates are $(X_0=T>0,X_3=-T,X_1=X,X_2=\pm\sqrt{1-X^2})$, the intersection is found by solving
\begin{equation}
    \begin{split}
        1+TX_0-XX_1\pm \sqrt{1-X^2}X_2-TX_3 &=0,\\
        X_3 &= - |X_0|,\\
        X_1^2+X_2^2&=1.
    \end{split}
\end{equation}
The solution in conformal coordinates is
\begin{equation}
    \begin{cases}
        \begin{alignedat}{2}
            \varphi &= \varphi_0 &\quad &(\sigma\sigma_0 \geq 0)\\
            \tan\sigma &= -\cot\sigma_0\sin^2\left(\frac{\varphi-\varphi_0}{2}\right) &\quad &(\sigma\sigma_0 \leq 0)
        \end{alignedat}
    \end{cases}
    ,
    \label{eq:embed-sol}
\end{equation}
where $(\sigma_0,\varphi_0)$ are the coordinates on the horizon of $p$. An example lightcone on the horizon for $T=\sigma_0> 0$ is pictured in Figure~\ref{fig:LC}.
\begin{figure}[h]
    \centering
    \begin{subfigure}[t]{0.3\textwidth}
    \centering
    \includegraphics[width=5cm]{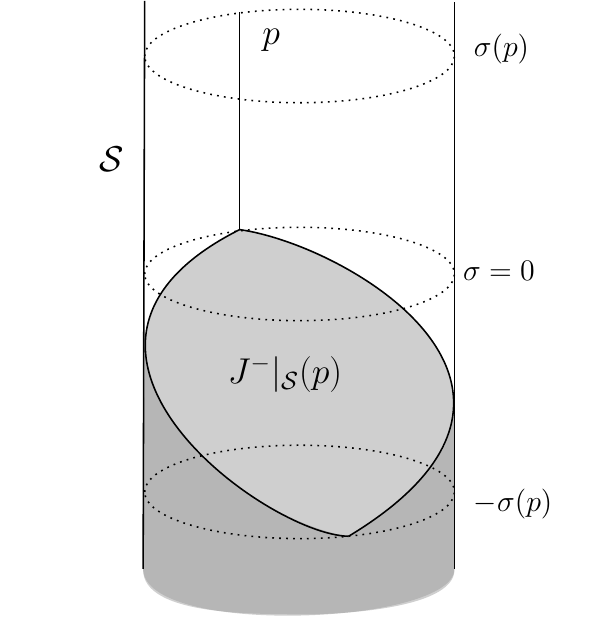}
    \caption{}
    \label{fig:LC}
    \end{subfigure} 
    \hfill
    \begin{subfigure}[t]{0.6\textwidth}
    \centering
    \includegraphics[width=10cm]{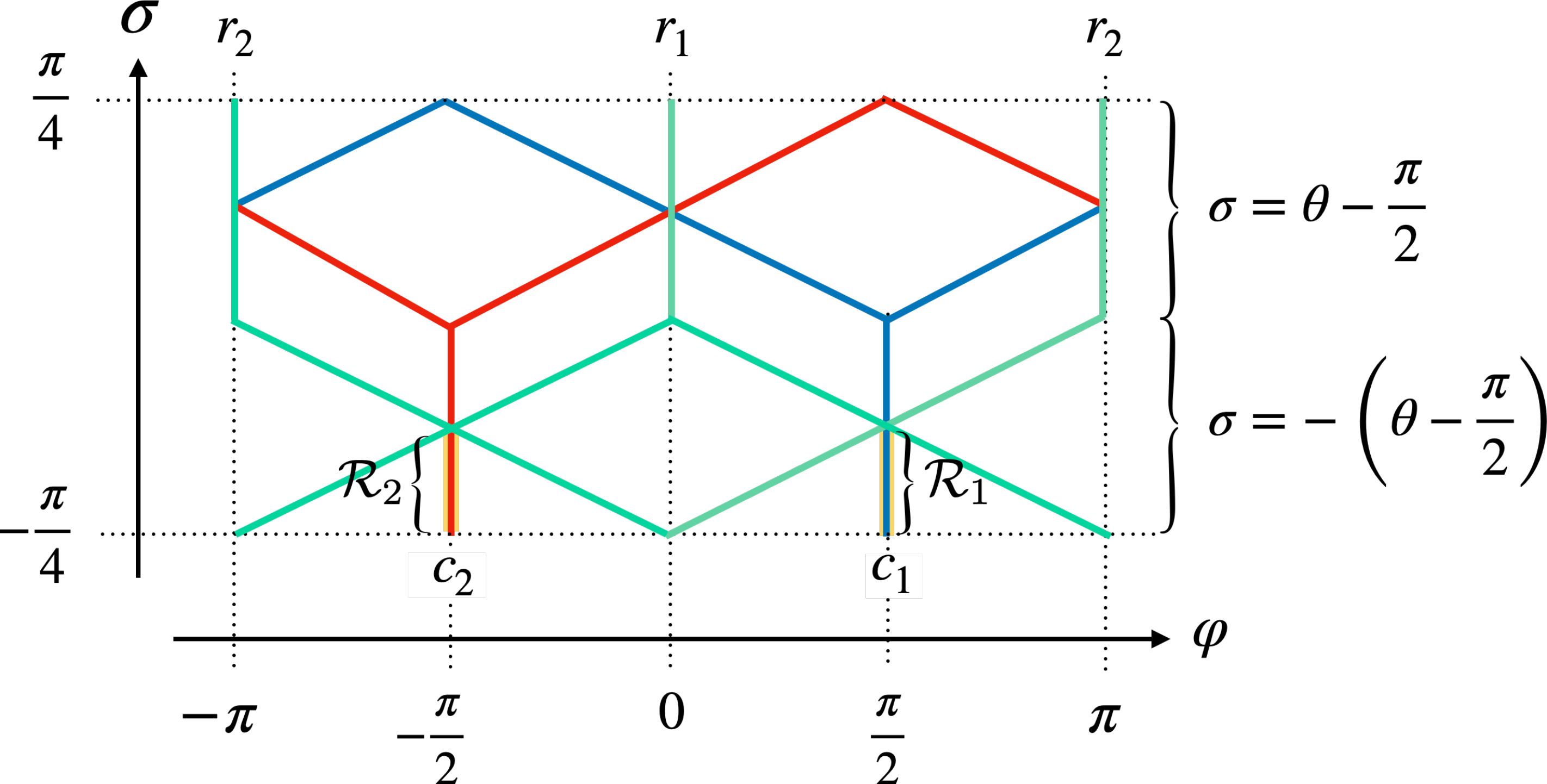}
    \caption{}
    \label{fig:decR}
    \end{subfigure}
    \caption{(a) Past lightcone (shaded in grey) of a point $p$ on the screen $\mathcal{S}$. The screen is located on the cosmological horizon, which has the topology of a cylinder $X_1^2+X_2^2=1$. The boundary of the lightcone consists of the union of the segment in the region $\sigma > 0$ with the ellipsoid curve in the region $\sigma \leq  0$. (b) Decision regions $\mathcal{R}_1$ and $\mathcal{R}_2$ for the scattering \eqref{eq:Pscat} constructed from lightcones on the screen $\mathcal{S}$. $\partial J\vert_{\mathcal{S}}^+(c_1)$ is denoted by a red line, $\partial J\vert_{\mathcal{S}}^+(c_2)$ is denoted by a blue line, and $\partial J\vert_{\mathcal{S}}^-(r_{1,2})$ are denoted by green lines. The decision regions have pointlike spatial sections, and therefore vanishing entanglement.}
\end{figure}
Using equation \eqref{eq:embed-sol}, we find that the decision regions associated with the input and output points~\eqref{eq:Pscat} are two timelike segments (Figure~\ref{fig:decR}). If the connected wedge theorem is true in dS, the decision regions should be connected by an entanglement wedge, but this is not the case here. A spacelike slice of the decision regions reduces to a set of two points on the horizon. The holographic entanglement entropy of two pointlike regions is zero and the associated entanglement wedge vanishes. This is an explicit apparent counterexample to the connected wedge theorem in dS spacetime.
This presents a puzzle as the connected wedge theorem (Theorem~\ref{th:CWT}) is expected to hold in any holographic setup.

\subsection{Induced causality from dS boundary \texorpdfstring{$\mathcal{I}^{\pm}$}{} }
\label{sec:ind_LC}
{What needs to be modified for
the connected wedge theorem to hold?} Since we start from the semiclassical limit of the holographic duality, the existence of the bulk scattering from the bulk causality should be taken for granted.
On the other hand, there is room to change the causality and decision regions fixed from it on the screen where the dual theory lives. We implicitly assumed through the definition of the decision regions that a bulk signal emanating from or reaching a point on the screen is given by an excitation of a local operator on the screen. 
This is the case in AdS/CFT, where a bulk excitation can be created by a local operator on the asymptotic boundary~\cite{Nozaki:2013wia,Terashima:2023mcr}. It was noted by one of the authors in~\cite{Mori:2023swn} that this is not the case in AdS holography with braneworld or cutoff surfaces. 
They suggest that a correct boundary dual of a localized wave packet in the bulk is given by a local excitation on a fictitious boundary so that it is effectively smeared and nonlocal on the brane/cutoff surface.\footnote{Quantum mechanically, the fictitious evolution is interpreted as a state preparation on the hypersurface by a finite Lorentzian time evolution. Alternatively, a fine-tuned causal structure from microscopic aspects of the dual theory may resolve the puzzle~\cite{Omiya:2021olc}. In this paper, we seek a resolution that only relies on the geometry, without such a fine tuning of the theory.}
This hints that we need to define another notion of causality on the non-asymptotic boundary, namely, the induced causality.

In this work, we follow the strategy of~\cite{Mori:2023swn} and consider `fictitious' local perturbations at the conformal boundaries of asymptotically de Sitter spacetime. We quote the word `fictitious' because in dS the screen is not a boundary, where spacetime terminates, so the conformal boundary is not fictitious contrary to the previous work in AdS.
This will be first motivated by the analogy with the case studied in~\cite{Mori:2023swn}, reviewed in Section~\ref{sec:recap}, and related to geometric properties of causal horizons, which will be made precise in Section~\ref{sec:proof}. The second motivation is the dS/CFT correspondence~\cite{Strominger:2001pn}, which would provide a physical interpretation of the fictitious points providing the induced lightcones of perturbations on the holographic screen. 

Considering a $2$-to-$2$ scattering among points on the screen, $c_1,c_2,\in \mathcal{S} \rightarrow r_1,r_2 \in \mathcal{S}$, we define `fictitious' input and output points as follow.
\begin{definition}[`Fictitious'/tilded point]
\label{def:R}
    We define points on the conformal boundaries $\Tilde{c}_1,\Tilde{c}_2\in \mathcal{I}^-,\Tilde{r}_1,\Tilde{r}_2\in \mathcal{I}^+$ as points that are causally connected to the points $c_1,c_2,r_1,r_2$ on the screen $\mathcal{S}$. That is,
    \begin{equation}
        \tilde{c}_i \in J^-(c_i) \cap \mathcal{I}^- \quad,\quad \tilde{r}_i \in J^+(r_i) \cap \mathcal{I}^+.
    \end{equation}
\end{definition}
The tilded point associated with a point on the screen leads to the definition of an induced lightcone:
\begin{definition}[Induced causality]
\label{def:ind}
    The induced lightcone of a point $p\in\mathcal{S}$ induced from $\tilde p$ is
    \begin{equation}
        \hat J^{\pm}(p) = J^{\pm}(\tilde p) \cap \mathcal{S}.
    \end{equation}
\end{definition}
See Figure~\ref{fig:ind_lc} for a schematic example of the induced past lightcone. For convenience, we define the induced lightcone in this way, however, due to its arbitrariness of choosing $\tilde{p}$, there are multiple alternative induced lightcones for a given point $p\in\mathcal{S}$. Most of these induced lightcones may look counterintuitive as these future/past induced lightcones can contain a point causally in the past/future of $p$ (in terms of the bulk causality), respectively. In general, $\hat{J}^{\pm}(p)$ may refer to any of them arbitrarily, and the proof of the connected wedge theorem is valid for any choice of $\tilde{p}$, as we will see in the next section. However, in some cases, it is useful to select a special induced lightcone such that it agrees with the conventional expectation for a definition of a lightcone, that is, no point causally in the past/future relative to a point $p$ should be present in the future/past lightcone of $p$. This leads us to the following remark:
\begin{remark}
    It is useful to think of $\hat{J}^{\pm}$ as the most restrictive. By the most restrictive, we mean the choice of point $\tilde p$ which minimizes the span of $\hat{J}^{\pm}$, in particular, such that the tip of the cone lies exactly at $p$, or as close as possible to it. In particular, one may replace Definition~\ref{def:ind} by
    \begin{equation}
        \hat{J}^{\pm}(p) = \min_{\tilde{p}} [J^{\pm}(\tilde{p})\cap \mathcal{S}],
    \end{equation}
    where we defined the minimisation so that the point $\tilde{p}=\tilde{p}_\ast$ found by the minimisation satisfies $\hat J^{\pm}(\tilde{p}_\ast)\subseteq \hat J^{\pm}(\tilde{p}^\prime)$ for any $\tilde{p}^\prime$ satisfying Definition~\ref{def:R}.
\end{remark}
We construct the induced lightcones on a screen $\mathcal{S}$ located on the stretched horizon of pure dS$_3$ defined by a fixed $r\leq 1$ coordinate, and induced from an arbitrary point $\tilde c\in \mathcal{I}^-$ or $\tilde r \in \mathcal{I}^+$. We denote their conformal coordinates as
\begin{equation}
    (\sigma_i,\theta_i,\varphi_i)=(k(\pi/2 - \epsilon_{dS}), \theta_i , \varphi_i),
\end{equation}
where $k=+1$ for $\tilde r$ and $k=-1$ for $\tilde{c}$, and $\epsilon_{dS}$ is the UV cutoff of $\mathcal{I}^{\pm}$.
Moving to embedding coordinates,
\begin{equation}
    (X_0^i,X_A^i)=\frac{1}{\cos(\pi/2-\epsilon_{dS})}(k,\sin\theta_i \cos\varphi_i,\sin\theta_i \sin\varphi_i, \cos\theta_i).
    \label{eq:embed-cr}
\end{equation}
The boundary of the lightcone satisfies $-(X_0-X_0^i)^2 + \sum_{A=1}^3 (X_A-X_A^i)^2 =0$, where $\{X_{M}\}_{M=0,1,2,3}$ are the embedding coordinates of a point on the lightcone, and $\{X_{M}^i\}_{M=0,1,2,3}$ are the embedding coordinates of $\tilde{c}$ or $\tilde{r}$. The intersection between this lightcone and the de Sitter hypersurface \eqref{eq:dS_hyperb} gives to leading order in $\epsilon_{dS}$
\begin{equation}
    X_0 = \sum_{A=1}^3 f_A X_A,
\end{equation}
where $f_A = X_A^i/X_0^i$. The boundary of the induced lightcone is given by the intersection of this hypersurface with the stretched horizon:
\begin{equation}
\label{eq:ind_lc}
    \tan\sigma = kr\frac{\cos(\varphi-\varphi_i)}{\sin\theta_i} \pm \cot\theta_i\sqrt{1-r^2\sin^2(\varphi-\varphi_i)}.
\end{equation}
This gives the formula for the induced lightcone on a screen at a constant $r$ in the static patch in the conformal coordinates $(\sigma,\theta=\pi-\arcsin(r\cos\sigma),\varphi)$ as a function of the location of the {tilded} point $(\sigma_i=k(\pi/2-\epsilon_{dS}),\theta_i,\varphi_i)$ at $\mathcal{I}^\pm$.\footnote{We define the domain of the inverse sine function as $\arcsin x \in [-\pi/2,\pi/2]$.}

\subsection{Induced causality vs local causality}
\label{sec:stretched}

We highlighted in Section~\ref{sec:violation} that causality on the screen leads to results in contradiction with the connected wedge theorem. In Section~\ref{sec:ind_LC}, we introduced the notion of the induced lightcone, which we conjecture to be the right object to consider when computing the causal region of the screen operator encoding a localized wavepacket in the static patch. We will further motivate this in the following paper by showing that this prescription leads to a well-defined connected wedge theorem. In this section, we consider the effect of the location of the holographic screen on induced lightcones.

We construct the lightcone of a point $p$ on the holographic screen located at fixed $r$. The result for $r=1$ was given in equation \eqref{eq:embed-sol}. The computation for $r<1$ is analogous, using the parametrization $X_3^2-X_0^2=1-r^2$, $X_1=r\cos\varphi$ and $X_2=r\sin\varphi$ of the fixed $r$ hypersurface in embedding coordinates. Combined with the bulk lightcone equation, one obtains the general solution for a lightcone from a point
\begin{equation}
    p:\, (\sigma_0,\theta_0=\pi-\arcsin(r\cos\sigma_0),\varphi_0)
\end{equation}
(in the conformal coordinates) on a screen $\mathcal{S}$ at a fixed $r$:
\begin{align}
\begin{split}
    \tan\sigma &= \frac{1}{1-r^2}\biggl[-\tan\sigma_0(1-r^2\cos(\varphi-\varphi_0))\biggr.\\
    &+\biggl.\sqrt{r^2(1-\cos(\varphi-\varphi_0))(1-r^2\cos(\varphi-\varphi_0)+1-r^2)(\tan^2\sigma_0+1-r^2)}\biggr].
    \end{split}
\end{align}
To compare this with the induced lightcone, we note the conformal coordinates of the induced point $\tilde{p}$ associated with $p$:
\begin{align}
\begin{split}
    \tilde{p}: \, (\pm(\pi/2-\epsilon_{dS}),\sigma_0+\arccos(r\cos\sigma_0),\varphi_0)).
    \end{split}
\end{align}
Applying equation \eqref{eq:ind_lc} to $\tilde{p}$, we find the induced lightcone of $p$:
\begin{equation}
\begin{split}
    \tan\sigma =&
    kr\frac{\cos(\varphi-\varphi_0)}{\sin(\sigma_0+\arccos(r\cos\sigma_0))} \\
    &\pm \cot(\sigma_0+\arccos(r\cos\sigma_0))\sqrt{1-r^2\sin^2(\varphi-\varphi_0)}.
    \end{split}
\end{equation}
The lightcones from a point $p$ on $\mathcal{S}$ and induced lightcones for the point $p$ are pictured in Figure~\ref{fig:comp} for different values of $r$. 

\begin{figure}
    \centering
      \includegraphics[width=0.9\linewidth]{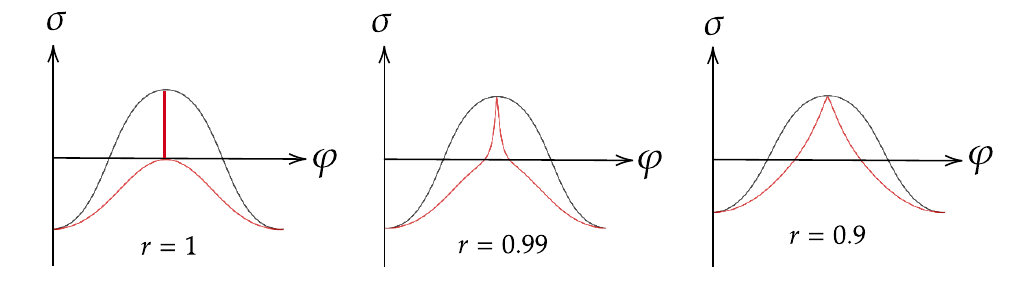}

    \includegraphics[width=0.9\linewidth]{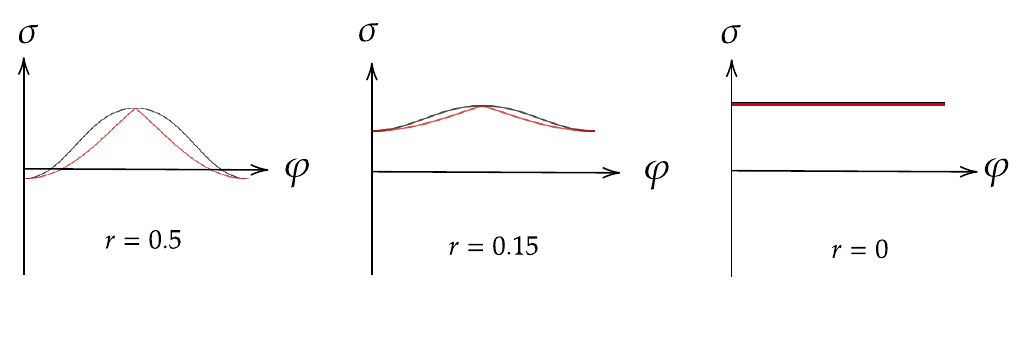}
    \caption{Induced lightcone $\hat{J}^-(p)=\min_{\tilde{p}} J^-(\tilde{p})\cap \mathcal{S}$ (in black) and lightcone from a point $p$ on the screen $J^-\vert_{\mathcal{S}}(p)=J^-(p)\cap\mathcal{S}$ (in red) located at the fixed $r=1,0.99,0.9,0.5,0.15,0$ hypersurface. As $r\rightarrow 0$, the lightcone increases in size, becoming a better approximation of the induced lightcone. Note that at $r=0$ there is no spatial extent in the $\varphi$ direction because it is a pode in dS. }
    \label{fig:comp}
\end{figure}
The induced lightcone $\hat{J}$ always contains the lightcone $J\vert_{\mathcal{S}}$. Moreover, the induced lightcone covers a considerably larger portion of the screen than the lightcone of a point when the screen is located close to the cosmological horizon. As $r\rightarrow 0$, the lightcone of a point spreads and gets closer to the induced lightcone. However, they do not converge, even to the first order in~$r$.

The induced lightcone is interpreted as the causal region associated with the nonlocal operator encoding the local perturbation in the bulk at $p$. The fact that it contains trajectories that are apparently superluminal is due to the nonlocality of this operator, and the fact that it needs some time to be prepared from a local operator. 
On the other hand, the lightcone of a point $p$ is the set of points on the screen that are causally connected to a local operator at $p\in\mathcal{S}$ through the bulk. The difference between the induced lightcone and the lightcone of point $p$ becomes smaller as $r\rightarrow 0$. We interpret this as a localization of the operator on the screen encoding the perturbation at $p$. 

\section{Connected wedge theorem in the static patch}
\label{sec:theorem}
In Section~\ref{sec:violation}, we observed an apparent violation of the connected wedge theorem in the static patch with a specific example. This section is devoted to its resolution by employing induced causality. In particular, we show that the decision regions enlarge with the induced causality prescription. This enables us to recover the connected wedge between them as expected. We then give a bulk proof of the connected wedge theorem in the interior of the holographic screen in asymptotically dS$_3$ spacetime. The proof relies on the static patch holographic conjecture (Conjecture~\ref{conj:SPH}) as well as the focusing theorem and the second law for causal horizons.\footnote{When we also include quantum matter corrections, these theorems are replaced by the restricted quantum focusing conjecture and the generalized second law.}

\subsection{Resolution of the apparent contradiction}
\label{sec:resolution}

We reconsider the example of the 2-to-2 scattering of Section~\ref{sec:violation}. We now apply the prescription of induced lightcones from $\mathcal{I}^{\pm}$ for a screen located at the cosmological horizon $r=1$. The equation of $\partial \hat{J}^{\pm}(p)$ \eqref{eq:ind_lc} simplifies to
\begin{equation}
\label{eq:ind}
    \tan\sigma = \left\{
    \begin{array}{ll}
        k\tan\frac{\theta_i}{2}\cos(\varphi-\varphi_i) & \mbox{if } \sigma k \geq 0 \\
        k\cot\frac{\theta_i}{2}\cos(\varphi-\varphi_i) & \mbox{if } \sigma k \leq 0 
    \end{array}
\right..
\end{equation}
We choose `fictitious' points $\tilde{c}_i$ and $\tilde{r}_i$ by merely extrapolating the lightlike signals emanating from $c_i$ or reaching $r_i$. In this case, these points are 
\begin{align}
\label{eq:fic_coord}
\begin{alignedat}{2}
    \Tilde{c}_1 &=(-\pi/2+\epsilon_{dS},\pi/2+\epsilon_{dS},\pi/2), &\quad \Tilde{c}_2 &=(-\pi/2+\epsilon_{dS},\pi/2+\epsilon_{dS},-\pi/2), \\
    \Tilde{r}_1 &=(\pi/2-\epsilon_{dS},\pi/2+\epsilon_{dS},0), &\quad \Tilde{r}_2 &=(\pi/2-\epsilon_{dS},\pi/2+\epsilon_{dS},\pi).
\end{alignedat}
\end{align}
It is easy to check that this choice of `fictitious' points is the minimal one, as $J^+(\tilde c_1)\cap J^+(\tilde c_1) \cap J^-(\tilde r_1) \cap J^-(\tilde r_2) = (0,\pi/2,0) = J_{12\rightarrow 12}$. We now modify the previous definition of decision regions to accommodate the notion of induced lightcones.
\begin{definition}
\label{def:decision}
    The (induced) decision regions $\mathcal{R}_1$ and $\mathcal{R}_2$ associated with the $2$-to-$2$ scattering $c_1,c_2\rightarrow r_1,r_2 \in \mathcal{S}$ are defined as
    \begin{equation}
    \mathcal{R}_i = \hat{J}^+(c_i)\cap \hat{J}^-(r_1) \cap \hat{J}^-(r_2).
    \end{equation}
Let us denote the largest spatial section of $\mathcal{R}_i$ by $\mathcal{V}_i$, such that $\mathcal{R}_i=\hat{D}(\mathcal{V}_i)$ where we defined $\hat{D}(\mathcal{V}_i)$ as the set of points $p\in\mathcal{S}$ such that $\mathcal{V}_i\in \hat{J}^-(p)\cup\hat{J}^+(p)$. 
\end{definition}
In the example \eqref{eq:Pscat}, we get the following edges of decision region $\mathcal{R}_1$ and $\mathcal{R}_2$, see Figure~\ref{fig:comparison} for an illustration.
\begin{figure}[h]
    \centering
    \includegraphics[width=0.9\linewidth]{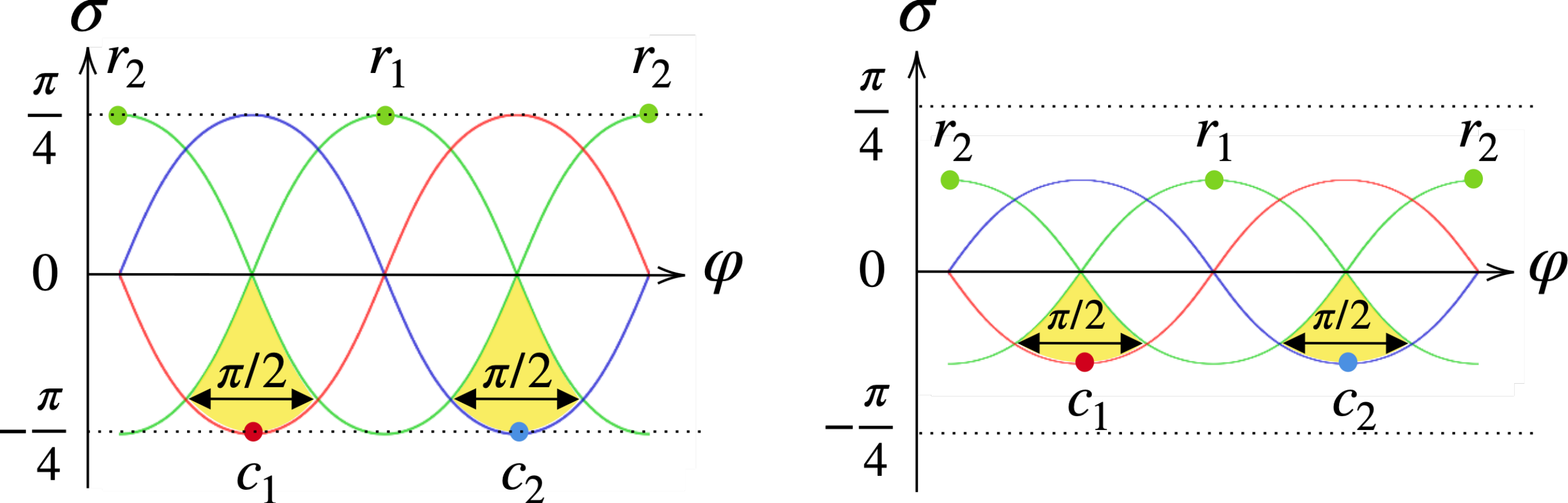}
    \caption{The decision regions $\mathcal{R}_{1,2}$ for the scattering \eqref{eq:Pscat} on a screen located on the cosmological horizon (left diagram) and on a stretched horizon located at $r=0.5$ (right diagram) extend over $\pi/2$ when their lightcones are induced from the input/output points on the null infinities. $\partial \hat{J}^+(\tilde{c}_1)$ is denoted by a red line, $\partial \hat{J}^+(\tilde{c}_2)$ is denoted by a blue line, and $\partial \hat{J}^-(\tilde{r}_{1,2})$ are denoted by green lines.}
    \label{fig:comparison}
\end{figure}
\begin{align}
\begin{split}
\label{eq:V1V2}
  \mathcal{V}_1:\quad & \sigma=-\arctan(1/\sqrt{2})~\quad \varphi\in [\pi/4,3\pi/4],\\
  \mathcal{V}_2: \quad& \sigma=-\arctan(1/\sqrt{2}) ~\quad \varphi\in [-3\pi/4,-\pi/4].
  \end{split}
\end{align}
A direct scattering on $\mathcal{S}$ is not possible from the induced screen causality, as $\mathcal{R}_1\cap\mathcal{R}_2=\varnothing$. Now that $\mathcal{V}_1$ and $\mathcal{V}_2$ are specified, their entanglement entropy can be computed. As reviewed in Appendix~\ref{sec:FPRT}, the entanglement entropy of a spatial subsystem of the screen $\mathcal{S}$ has three contributions. The entropy is given by the sum of the areas of the homologous minimal extremal surfaces 1) in the interior of $\mathcal{S}$, 2) in the interior of a complementary screen associated with an antipodal observer, and 3) in the exterior region bounded by the union of these two screens. These three regions are illustrated in gray in Figure~\ref{fig:penrose}. We denote $D$ the causal diamond inside $\mathcal{S}$ in which we extremize the area.

There are two candidate extremal surfaces of $\mathcal{V}_1\cup\mathcal{V}_2$ in the interior of $\mathcal{S}$. The first one, associated with a disconnected entanglement wedge, is the union of the extremal surfaces for $\mathcal{V}_1$ and $\mathcal{V}_2$, namely, 
\begin{equation}
    \gamma_e^{dis}(\mathcal{V}_1\cup\mathcal{V}_2;D) = \gamma_e(\mathcal{V}_1;D) \cup \gamma_e(\mathcal{V}_2;D),
\end{equation}
where $\gamma_e(\mathcal{V}_i;D)$ is the geodesic anchored to $\mathcal{V}_i$ in the region inside $\mathcal{S}$, whose length is $\min(\Delta\varphi,2\pi-\Delta\varphi)$ where $\Delta\varphi$ is the angle of the arc.\footnote{See appendix B.1 of~\cite{Franken:2023pni} for a detailed computation.} See Figure~\ref{fig:homol-ex} for the extremal surface and the homology region in the interior in each case.
\begin{figure}
    \centering
    \includegraphics[width=0.8\linewidth]{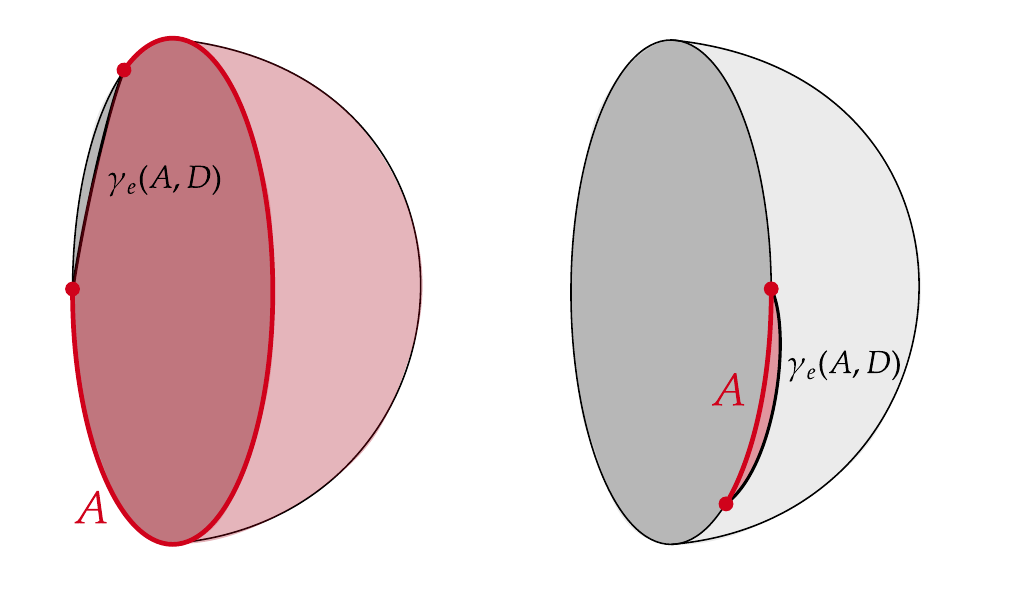}
    \caption{The extremal surfaces and the corresponding homology regions that compute the interior contribution of the holographic entanglement entropy. The time slice of the interior of $\mathcal{S}$ is shown as a semisphere and a subsystem $A\in\Sigma\vert_\mathcal{S}$ is shown as a red arc of length $\Delta\phi$. In the left picture $\Delta\phi \geq \pi$ and in the right picture $\Delta\phi \leq \pi$.The extremal surface $\gamma_e(A;D)$ (thick black curve) is either close to the arc itself or its complement.\protect\footnote{In the limit where $\mathcal{S}$ is located exactly on the cosmological horizon, the extremal surface lies on the horizon \cite{Franken:2023pni}} The homology region $\mathcal{C}_A$ is shaded in red.}
    \label{fig:homol-ex}
\end{figure}
There is a second candidate of extremal surface in the interior of $\mathcal{S}$, given by the extremal surfaces associated with the complement of $\mathcal{V}_1\cup\mathcal{V}_2$ on $\mathcal{S}$. These extremal surfaces are homologous to $\mathcal{V}_1\cup\mathcal{V}_2$, since the union of $\mathcal{V}_1\cup\mathcal{V}_2$ with its complement defines a slice of~$\mathcal{S}$.\footnote{Note that, unlike AdS, the spatial edge of the Penrose diagram of an asymptotically dS is a pole of a sphere and not a boundary.} The associated entanglement wedge is connected through the interior of $\mathcal{S}$, namely,
\begin{equation}
    \gamma_e^{con}(\mathcal{V}_1\cup\mathcal{V}_2;D) = \gamma_e^{dis}((\mathcal{V}_1\cup\mathcal{V}_2)^c;D),
\end{equation}
where $(\cdot)^c$ represents a codimension-two complementary subregion on an achronal surface on the (single) screen containing the subregion.

The exterior contribution comes from the arcs themselves, as the existence of the complementary screen prevents $\mathcal{V}_1\cup\mathcal{V}_2$ from being homologously connected to their complement $(\mathcal{V}_1\cup\mathcal{V}_2)^c$.
In other words, the exterior part is mixed after tracing out the complementary screen while the (right) static patch remains pure.
Essentially, the state dual to $\Sigma$ in the bulk is decomposed as $\ket{\psi}_R\otimes\ket{\psi^c}_{LE}$ up to a local isometry on the right screen. Here, the bulk effective quantum state in $D$ is described by $\ket{\psi}$ in a suitable code subspace, and a quantum state in the causal complement is described by $\ket{\psi^c}$ in another suitable code subspace. One can see that by tracing out the degrees of freedom on the left screen ($L$ and a part of $E$), the state corresponding to the right static patch remains pure as $\dyad{\psi}$ while that corresponding to the exterior becomes mixed by the partial trace.

After all, due to the homology condition, the entanglement wedge of two disjoint subsystems $\mathcal{V}_1\cup\mathcal{V}_2$ on a single screen $\mathcal{S}$ is always disconnected. After the extremization within the exterior domain, the exterior contribution to the entropy (multiplied by $4G_N$) turns out to be just the sum of the lengths of two arcs.

Finally, $\mathcal{V}_1\cup\mathcal{V}_2$ does not extend on the complementary screen so that there is no entropy contribution from the complementary static patch. Thus, the entanglement wedge of $\mathcal{V}_1\cup\mathcal{V}_2$ can only be connected through the interior of $\mathcal{S}$, if $\gamma_e^{con}$ is smaller than $\gamma_e^{dis}$. 

Considering the scattering \eqref{eq:Pscat} and associated induced decision regions \eqref{eq:V1V2}, we find
\begin{equation}
    {\rm area}(\gamma_e^{dis}(\mathcal{V}_1\cup\mathcal{V}_2;D_L)) -{\rm area}(\gamma_e^{con}(\mathcal{V}_1\cup\mathcal{V}_2;D_L))= 0.
\end{equation}
In other words, the scattering we considered corresponds to the transition case where the disconnected and connected entanglement wedges are equivalent. Even though we considered input and output points on a screen located on the cosmological horizon, this result would have been identical for any set of input/output points on an arbitrary stretched horizon at fixed radius $r$ such that $\tilde{c}_i$ and $\tilde{r}_i$ of equation \eqref{eq:fic_coord} are their associated set of {tilded} points on $\mathcal{I}^{\pm}$. Indeed, for these {tilded} points, equation \eqref{eq:ind_lc} becomes linear in $r$, such that the size of $\mathcal{V}_1$ and $\mathcal{V}_2$ is independent of $r$. This is pictured in Figure~\ref{fig:comparison}. Hence, the fact that we found an exact match between the lengths of the connected and disconnected geodesics for these induced points is not specific to a screen on the horizon, but a general property of these `fictitious' points that are associated with a $J_{12\rightarrow 12}$ reducing to a point. This is consistent with the connected wedge theorem, and the typical behavior of pointlike holographic scatterings where the lengths of the connected and disconnected geodesics are equivalent~\cite{May:2019odp, May:2019yxi, May:2021nrl, May:2022clu}. This example thus provides evidence that the induced lightcones defined in Section~\ref{sec:ind_LC} are the right objects to consider when studying causality on the holographic screen.

\subsection{Geometric proof}
\label{sec:proof}

Inspired by the definition of the induced lightcone from points at $\mathcal{I}^{\pm}$, we generalize the result obtained in the last section. The proof of our statement closely follows that of~\cite{May:2019odp, Mori:2023swn}. The statement of the connected wedge theorem in the context of static patch holography goes as follows.

\begin{theorem}[Static patch connected wedge theorem]
 \label{th:CWTpatch}
Let $\mathcal{S}$ be the holographic screen of an observer in an asymptotically dS$_3$ spacetime. Assuming static patch holography (Conjecture~\ref{conj:SPH}), if the $2$-to-$2$ scattering $c_1,c_2 \in \mathcal{S} \rightarrow r_1,r_2\in\mathcal{S}$ is possible in the bulk,
\begin{equation}
    J_{12\rightarrow 12} \neq \varnothing,
\end{equation}
and not on $\mathcal{S}$, 
\begin{equation}
\label{eq:nosc}
    \hat{J}_{12\rightarrow 12} = \hat{J}^+(c_1)\cap \hat{J}^+(c_2)\cap \hat{J}^-(r_1)\cap \hat{J}^-(r_2) =\varnothing,
\end{equation}
then $\mathcal R_1$ and $\mathcal R_2$ have a mutual information $O(1/G_N)$, and their entanglement wedge is connected in the interior of $\mathcal{S}$. We assume that $\mathcal R_1$ and $\mathcal R_2$ consist of connected regions.\footnote{We assume this because some obstacles with a causal horizon in $J^-(\tilde{r}_{1,2})\cap J^+(\tilde{c}_{1,2})$ may cause the decision regions split apart, invalidating our proof below~\cite{Mori:2023swn}.}
\end{theorem}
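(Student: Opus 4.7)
The plan is to adapt the geometric proof of~\cite{May:2019odp, Mori:2023swn} to the static patch of asymptotically dS$_3$, with induced causality from $\mathcal{I}^\pm$ playing the role that asymptotic boundary causality plays in AdS. The central step is to show that, inside $\mathcal{S}$, the minimal extremal surface candidate homologous to $\mathcal{V}_1 \cup \mathcal{V}_2$ is connected and strictly shorter than the sum $|\gamma_1| + |\gamma_2|$ of the individual extremal surfaces for $\mathcal{V}_1$ and $\mathcal{V}_2$. Combined with the equivalence of constrained extremization and maximin established in Theorem~\ref{th:patch}, this forces the entanglement wedge to be connected through the interior of $\mathcal{S}$, which is equivalent to $I(\mathcal{R}_1:\mathcal{R}_2) = O(1/G_N)$.

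First, I would translate the hypotheses into statements about bulk null surfaces. Choose the tilted points $\tilde{c}_i, \tilde{r}_i$ as the minimal ones in Definition~\ref{def:induced}, so that the four null congruences $N^+_i := \partial J^+(\tilde{c}_i)$ and $N^-_i := \partial J^-(\tilde{r}_i)$ all pass through (or asymptote to) a common bulk point $x \in J_{12\to 12}$. The induced decision region $\mathcal{V}_i$ on a symmetric spatial slice of $\mathcal{S}$ is bounded by arcs of $N^+_i \cap \mathcal{S}$ and $N^-_1 \cap \mathcal{S}, N^-_2 \cap \mathcal{S}$, and the no-scattering condition~\eqref{eq:nosc} ensures $\mathcal{V}_1 \cap \mathcal{V}_2 = \varnothing$. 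The key geometric input, following~\cite{May:2019odp}, is that any extremal surface homologous to $\mathcal{V}_1 \cup \mathcal{V}_2$ in the interior of $\mathcal{S}$ must intersect a maximin Cauchy slice that contains $x$, and that the four null surfaces $N^\pm_i$ partition such a slice so that the minimal candidate surface lies in $J^+(x) \cup J^-(x)$.

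The area comparison is controlled by focusing. Starting from $\mathcal{V}_1 \cup \mathcal{V}_2$, I would launch the null generators of $N^-_1 \cup N^-_2$ inward and take a transverse codimension-two section through a neighborhood of $x$; this section defines a connected candidate curve $\mu^{\text{con}}$ whose length is bounded by the focusing theorem (or the restricted quantum focusing conjecture in the semiclassical case). A parallel bound from the null congruences launched from the disconnected extremal surfaces $\gamma_i$ outward to the screen, using the second law of causal horizons (or its generalized quantum version), yields $|\gamma_1| + |\gamma_2| \geq |\mu^{\text{con}}|$, with strict inequality for any generic extended bulk scattering region, and saturation only when $J_{12\to 12}$ collapses to a point, as in the explicit example of Section~\ref{sec:resolution}. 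Since any nondegenerate bulk scattering produces a strict focusing contribution, the minimal extremal surface is strictly shorter than the disconnected one, yielding the $O(1/G_N)$ mutual information.

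The main obstacle I anticipate is handling caustics of $N^\pm_i$ before the congruences reach $x$, which are generic for null surfaces emanating from points on $\mathcal{I}^\pm$: one must truncate at the caustic and replace pointwise focusing with the second law of causal horizons, closely paralleling~\cite{May:2019odp, Mori:2023swn}. A secondary technical issue is ensuring that $\mu^{\text{con}}$ lies in a valid maximin slice with the correct homology through the interior of $\mathcal{S}$ rather than the exterior; this relies on the convexity of $\mathcal{S}$ (Definition~\ref{def:ap-sp}) together with $x$ being in the static patch, and is also why the connectedness of $\mathcal{R}_1$ and $\mathcal{R}_2$ is included as an explicit assumption. Finally, the exterior contribution to the entropy reviewed in Section~\ref{sec:resolution} cancels identically between the connected and disconnected candidates because the homology condition cannot link $\mathcal{V}_1$ to $\mathcal{V}_2$ through the complementary screen, so it drops out of the comparison and does not affect the final area inequality.
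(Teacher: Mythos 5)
Your overall strategy is the right one --- the paper's proof is indeed a null-membrane argument in the style of~\cite{May:2019odp, Mori:2023swn}, resting on exactly the three ingredients you name: the maximin/extremal equivalence inside the static patch (Theorem~\ref{th:patch}), the focusing theorem, and the second law of causal horizons. However, you have assembled these ingredients backwards, and this breaks the derivation of your central inequality $|\gamma_1|+|\gamma_2|\geq|\mu^{\rm con}|$. In the paper's proof the focusing theorem is applied to the \emph{lift}, i.e.\ the lightsheets $\partial J^{+}_{in}(\gamma_1\cup\gamma_2)$ launched \emph{inward} from the disconnected extremal surfaces $\gamma_1,\gamma_2$ (this is the only place focusing is available, because $\theta=0$ on an extremal surface guarantees $\theta\leq 0$ afterwards); this gives ${\rm Area}(\gamma_1\cup\gamma_2)\geq {\rm Area}(\mathscr{A})+2\,{\rm Area}(\mathscr{R})$. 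The second law is then applied to the \emph{slope}, i.e.\ the portions of the causal horizons $\partial[J^{-}(\tilde{r}_1)\cup J^{-}(\tilde{r}_2)]$, giving ${\rm Area}(\mathscr{A})\geq{\rm Area}(\gamma_\Sigma)$ where $\gamma_\Sigma$ is the horizon cross-section on the slice $\Sigma$. You instead invoke focusing on sections of $N^{-}_i=\partial J^{-}(\tilde{r}_i)$ (these are causal horizons, for which focusing gives no useful one-sided bound starting from the screen --- the relevant statement there is the second law, and it runs in the \emph{opposite} direction, with area increasing toward the future) and invoke the second law on congruences launched from $\gamma_i$ ``outward to the screen'' (these are not causal horizons, and the lift must in any case go inward, not outward). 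As written, neither of your two bounds is justified, so the comparison between the connected and disconnected candidates does not go through.

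Two further gaps. First, you never establish that the lightsheets from $\gamma_1$ and $\gamma_2$ actually meet --- the non-emptiness of the ridge $\mathscr{R}$ --- which is precisely where the hypothesis $J_{12\rightarrow 12}\neq\varnothing$ enters: the paper uses the fact that maximin surfaces lie outside the causal wedge to get $J_{12\rightarrow 12}\subseteq J^{+}(\mathcal{C}_1\cup\mathcal{C}_2)\cap J^{-}(\tilde{r}_1)\cap J^{-}(\tilde{r}_2)$, forcing the two future congruences to intersect in the past of $\tilde{r}_1,\tilde{r}_2$. Your substitute statement that the null surfaces ``partition a maximin slice containing $x$'' is neither proved nor obviously sufficient. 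Second, your claimed \emph{strict} inequality away from pointlike $J_{12\rightarrow 12}$ is stronger than what the argument yields and is not needed: the paper only derives ${\rm Area}(\gamma_1\cup\gamma_2)\geq{\rm Area}(\gamma_\Sigma)$ on \emph{every} slice $\Sigma$, concludes that the disconnected surface is not minimal on any slice and hence not a maximin surface, and then invokes Theorem~\ref{th:patch} to rule it out as the minimal extremal surface. Your remarks on caustics, on the convexity of $\mathcal{S}$, on the role of the connectedness assumption for $\mathcal{R}_i$, and on the exterior contribution dropping out are all consistent with the paper.
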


\begin{proof}
    We will use three important properties here. 
    \begin{itemize}
        \item In the bulk region inside $\mathcal{S}$, minimal extremal surfaces are maximin surfaces and conversely. We develop this statement in Theorem~\ref{th:patch} of Appendix~\ref{app:proofs}.
        \item A congruence of lightrays emanating orthogonally from an extremal surface is of non-positive expansion. This follows from the focusing theorem which states that, under the null energy condition, the expansion parameter $\theta$ of a congruence of lightrays satisfies
        \begin{equation}
            \frac{\d\theta}{\d\lambda}\leq 0,
        \end{equation}
        where $\lambda$ is an affine parameter of the congruence. By definition, $\theta=0$ on an extremal surface. It follows that $\theta\leq 0$ on any null congruence emanating from it. 
        \item The second law of causal horizons~\cite{Jacobson:2003wv}: A causal horizon is the boundary of the causal past of a timelike worldline ending at time infinity $\mathcal{I}^+$. The second law states that the area of such a horizon cannot decrease in time.
    \end{itemize}

Let $\mathcal{V}_i$ be the achronal codimension-two surface on $\mathcal{S}$ such that $\hat{D}(\mathcal{V}_i)=\mathcal{R}_i$, as in Definition~\ref{def:decision}. $\Sigma\vert_\mathcal{S}$ denotes an achronal slice of $\mathcal{S}$ and $\Sigma$ denotes an achronal slice containing the observer of interest in the bulk such that $\partial\Sigma=\Sigma\vert_\mathcal{S}$.
$D=D(\Sigma)$ denotes the causal diamond of slices in the static patch that is bounded by $\Sigma\vert_\mathcal{S}$. Note that the definition of $\Sigma$ here is either $\Sigma_L$ or $\Sigma_R$, which is different from the convention used in Appendix~\ref{app:ext_proc}.
The exact location of the complementary screen $\mathcal{S}_R$ is not important here. 
Additionally, the exterior contribution to the entropy is irrelevant here as the exterior entanglement wedge is always disconnected. The contribution from the complementary patch is absent because the decision regions do not lie on the complementary screen. For a more detailed discussion, see Section~\ref{sec:resolution}.
Let us denote $\gamma_i=\gamma_e(\mathcal{V}_i;D)$ with $i=1,2$ and $\gamma_{1\cup 2}=\gamma_e(\mathcal{V}_1\cup\mathcal{V}_2;D)$ for simplification. To show Theorem~\ref{th:CWTpatch}, we follow the strategy of~\cite{May:2019odp}. We need to prove that 
\begin{equation}
    {\rm area}(\gamma_1\cup\gamma_2) \leq {\rm area}(\gamma_{1\cup 2}).
\end{equation}
If this is true, the entanglement wedge is connected and mutual information of order $O(1/G_N)$. 

For any $\Sigma$, let us construct a codimension-one surface, called the null membrane $\mathscr{N}_{\Sigma}$. See Figure~\ref{fig:membrane} for a sketch.
\begin{figure}
    \centering
    \includegraphics[width=0.5\linewidth]{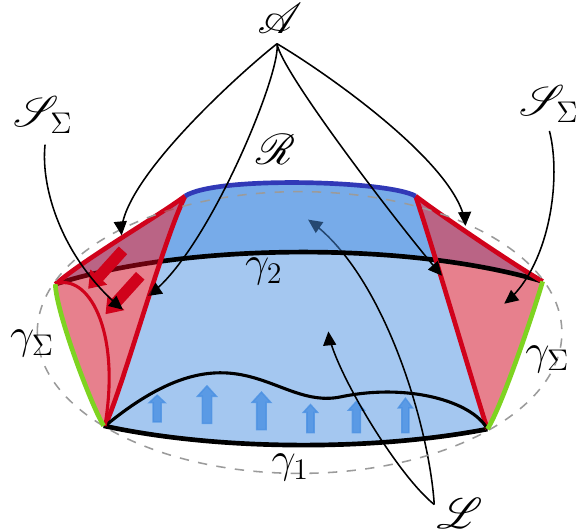}
    \caption{Sketch of the null membrane $\mathscr{N}_{\Sigma}=\mathscr{L}\cup\mathscr{S}_{\Sigma}$ created from $\gamma_1\cup\gamma_2$, depicted by two thick black lines. The slice $\Sigma$ is a disc-like surface bounded by the dashed line representing $\Sigma\vert_{\mathcal{S}}$. The lift $\mathscr{L}$ is depicted by a light blue surface, while the slope $\mathscr{S}_\Sigma$ is depicted by two light red surfaces. The ridge $\mathscr{R}$ is the thick blue line, and the four spacelike cusps $\mathscr{A}=\mathscr{L}\cap\mathscr{S}_{\Sigma}$ are depicted by thick red lines. The surface $\gamma_{\Sigma}$ leading to the contradiction is depicted by two green lines. A deformation of $\gamma_1$ on the lift is depicted by a thin black curve, with the direction of contraction of lightrays depicted by blue arrows. Similarly, a deformation of a set of two connected cusps is depicted by a thin red curve, with the direction of contraction of lightrays depicted by red arrows.}
    \label{fig:membrane}
\end{figure}
$\mathscr{N}_{\Sigma}$ is constructed from the union of null surfaces. The first one, called the lift is
\begin{equation}
    \mathscr{L} = \partial J^+_{in}(\gamma_1\cup\gamma_2) \cap J^-(\tilde{r}_1) \cap J^-(\tilde{r}_1),
\end{equation}
where $J^+_{in}(\gamma_1\cup\gamma_2)$ refers to the congruence of lightrays emanating orthogonally from $\gamma_1\cup\gamma_2$ and directed towards the interior of $D$. The lift $\mathscr{L}$, therefore, consists of the portion of the union of the lightsheets emanating from $\gamma_1$ and $\gamma_2$ that lies in the past of both points $\tilde{r}_1$ and $\tilde{r}_2$, and ends at their meeting points. These meeting points form a spacelike codimension-two surface called the ridge $\mathscr{R}$.

Let us show that the ridge is non-empty. It was shown in~\cite{Wall:2012uf} that whenever an extremal surface is also a maximin surface, the surface must lie outside the causal wedge. This implies that
\begin{equation}
\label{eq:P}
    J_{12\rightarrow 12} \subseteq J^+(\mathcal{C}_1\cup \mathcal{C}_2) \cap J^-(\tilde{r}_1) \cap J^-(\tilde{r}_1),
\end{equation}
where $\mathcal{C}_1,\mathcal{C}_2\in \Sigma$ are the homology regions associated with $\gamma_1$ and $\gamma_2$ on $\Sigma$.\footnote{The homology region is a spacelike slice bounded by the extremal surface and its associated subsystem of the screen. Its causal diamond gives the corresponding entanglement wedge. See Definition~\ref{def:homol}.} $J_{12\rightarrow 12}$ is assumed to be non-empty, such that the right-hand side of equation \eqref{eq:P} is also non-empty. This region being not empty implies that a subset of $J^+(\mathcal{C}_1\cup \mathcal{C}_2)$ exists in the past of $\tilde{r}_1$ and $\tilde{r}_2$. Because $J^+(\mathcal{C}_1\cup \mathcal{C}_2) = J^+(\mathcal{C}_1) \cap J^+(\mathcal{C}_2)$, $J^+(\mathcal{C}_1)$ and $J^+(\mathcal{C}_2)$ must meet in the past of  $\tilde{r}_1$ and $\tilde{r}_2$. The ridge is therefore non-empty.

The other constituent of $\mathscr{N}_{\Sigma}$, called the slope $\mathscr{S}_{\Sigma}$, is defined as
\begin{equation}
    \mathscr{S}_{\Sigma} = \partial[J^-(\tilde{r}_1)\cup J^-(\tilde{r}_2)] \cap J^-[\partial J^+_{in}(\gamma_1\cup\gamma_2)] \cap J^+(\Sigma).
\end{equation}
The slope is a subsystem of two causal horizons associated with $\tilde{r}_1$ and $\tilde{r}_2$ on $\mathcal{I}^+$, which is in the future of $\Sigma$ and the past of the lightsheets emanating from $\gamma_1$ and $\gamma_2$. We now construct the null membrane as
\begin{equation}
    \mathscr{N}_{\Sigma} = \mathscr{S}_{\Sigma} \cup \mathscr{L}.
\end{equation}
Let $\gamma_{\Sigma}$ be the past boundary of the slope, that is
\begin{equation}
     \gamma_{\Sigma} = \partial[J^-(\tilde{r}_1)\cup J^-(\tilde{r}_2)] \cap J^-[\partial J^+_{in}(\gamma_1\cup\gamma_2)] \cap \Sigma.
\end{equation}
The boundary of $\gamma_{\Sigma}$ must be the same as the one of $\gamma_1\cup\gamma_2$, as the latter is defined to be on the past lightcone of $\tilde{r}_1$ and $\tilde{r}_2$.\footnote{Note that the boundary of $\gamma_1\cup\gamma_2$ is the same as the boundary of $\gamma_{1\cup 2}$ by the homology condition.} In particular, the intersection of these past lightcones with $\partial\Sigma=\Sigma\vert_\mathcal{S}$ must therefore coincide with $\partial \mathcal{V}_i = \partial \gamma_i$. Hence, $\gamma_{\Sigma}\cup\gamma_1\cup\gamma_2$ is a closed codimension-two surface on $\Sigma$.\footnote{The connectivity of $\gamma_{\Sigma}\cup\gamma_1\cup\gamma_2$ is ensured by hyperbolicity of asymptotically de Sitter spacetime.}

The slope and the lift intersect on a set of four spacelike cusps, denoted by $\mathscr{A}$. It must exist, otherwise, it would imply that the past lightcones of $\tilde{r}_1$ and $\tilde{r}_2$ have intersected and hit the boundary before reaching the ridge $\mathscr{R}$, meaning $\mathcal R_1$ and $\mathcal R_2$ have an intersection, such that the theorem is trivially satisfied. The lift is a subsystem of a lightsheet (since it emanated from an extremal surface), it is of negative expansion. Therefore,
\begin{equation}
\label{eq:lift}
    \text{Area}(\gamma_1\cup\gamma_2) \geq \text{Area}(\mathscr{A}) + 2\text{Area}(\mathscr{R}).
\end{equation}
The second law of causal horizons implies that $\text{Area}(\mathscr{A})\geq \text{Area}(\gamma_{\Sigma})$. Combining this with equation \eqref{eq:lift}, we get
\begin{equation}
\label{eq:slope}
    \text{Area}(\gamma_1\cup\gamma_2) \geq \text{Area}(\gamma_{\Sigma}).
\end{equation}
We have constructed on every slice $\Sigma$ a codimension-two surface homologous to $\mathcal{V}_1\cup\mathcal{V}_2$ with a smaller area than the extremal area surface $\gamma_1\cup\gamma_2$ with a disconnected entanglement wedge, implying that $\gamma_1\cup\gamma_2$ is not a true maximin surface,
since it is not minimal on any slice $\Sigma$. Minimal extremal surfaces in the interior of $D$ are maximin surfaces (Theorem~\ref{th:patch}) so $\gamma_1\cup\gamma_2$ cannot be the smallest extremal surface, concluding the proof.
\end{proof}

\begin{remark}
    Theorem~\ref{th:CWTpatch} and its proof may be generalized to semiclassical spacetimes, where holographic entanglement entropy includes quantum matter corrections, by replacing everywhere the area of surfaces by their generalized entropy~\eqref{eq:Sgen}, and assuming the restricted quantum focusing conjecture~\cite{Shahbazi-Moghaddam:2022hbw, Bousso:2015mna} as well as the generalized second law of causal horizons~\cite{Wall:2009wm,Wall:2011hj}.
\end{remark}

\section{Discussion}
\label{sec:discussion}
In the present paper, we explore causality on the screen in asymptotically de Sitter spacetime through holographic scattering within the framework of static patch holography. The main takeaway of the paper is that the causality on the screen consistent with the connected wedge theorem is induced by local excitations at null infinities. Using this induced causality, we can provide a holographic proof of the connected wedge theorem in the static patch of an observer. Let us conclude this paper with some remarks on dS holography and future directions.

\paragraph{Precise nature of excitations on a screen}
In this paper, we did not discuss the detailed nature of a local excitation at null infinity or `smeared' excitation on the screen. In fact, knowing what types of excitations are allowed is very important to support our proof of the connected wedge theorem~\cite{Mori:2023swn}. Nevertheless, at this point the precise holographic dual of de Sitter, in particular within static patch holography, is unclear. The DSSYK model~\cite{Susskind:2021esx,Lin:2022nss,Rahman:2022jsf,Goel:2023svz,Narovlansky:2023lfz,Verlinde:2024znh,Verlinde:2024zrh,Blommaert:2023opb, Blommaert:2023wad,Rahman:2024iiu} is one candidate, however, the precise location of the screen is still debatable. Additionally, it is dimensionally reduced so the dual of dS$_3$ is further unclear. In the proposed dS/DSSYK duality, Susskind has argued that only a small number of special collective excitations of the fermions in the DSSYK model would propagate deep in the bulk \cite{Susskind:2022bia}. It is interesting to see what type of excitations in the theory realize holographic scattering if there are, and see if the model shows signs of induced causality. We can already point out that the induced lightcones constructed in this paper showcase a difference between perturbations propagating into the bulk and perturbations that are confined close to the stretched horizon. Indeed, the induced lightcone of a bulk propagating perturbation is apparently superluminal while it does not violate the microcausality~\cite{Mori:2023swn}. This means such an excitation is nonlocal and obeys a nonlocal evolution from the screen point of view. On the other hand, the lightcone of a perturbation confined on the screen is a lightcone based on the induced metric, and it does not exhibit superluminality. A further investigation of these perturbations and their duals in a concrete theory could help us to identify the true UV boundary, from which holographic dS emerges.

Another possible direction is to construct a finite-dimensional toy model. Since quantum tasks in nonlocal quantum computations are better understood in finite dimensions, often by using circuit diagrams, we may be able to discuss a general feature of holographic quantum tasks by modeling static patch holography by a (random) quantum error correcting code as in AdS/CFT~. This also naturally provides a holographic dictionary {between} the effective picture, where a local unitary via {a} direct scattering takes place, and the fundamental picture, where the nonlocal quantum computation {may take} place~\cite{Akers:2022qdl,Akers:2021fut}.

\paragraph{Pushing the screen inside the static patch}

We briefly commented in Section~\ref{sec:stretched} on the effect of the location of the screen on causality. 
We found that pushing the screen deeper into the static patch tends to localize the effective holographic theory, as the causal region tied to a bulk perturbation approaches that of a local operator near the observer’s worldline.
In general, we expect that the effective holographic theory defined on the screen follows some kind of renormalization group flow as one pushes the screen closer to the observer's worldline. It would be interesting to describe the details of the coarse-graining associated with moving the holographic screen in the static patch. In parallel, one might consider holographic scattering in alternative static patch holographic setups, such as worldline holography or half-de Sitter holography~\cite{Anninos:2011af,Banihashemi:2022htw,Kawamoto:2023nki}.

\paragraph{Scattering between holographic screens}

In this work, we only considered holographic scatterings in the static patch. Since global de Sitter spacetime is conjectured to be encoded on the holographic screens associated with two antipodal observers, an interesting direction would be to consider scatterings connecting the two screens to probe the exterior region. One clear obstruction is that no direct scattering is possible in the static patch, due to cosmological horizons between the screens. One way to overcome this issue would be to consider explicit solutions of asymptotically de Sitter spacetimes in which the static patches always overlap~\cite{Gao:2000ga,Leblond:2002ns}. Another possibility would be to use the mapping between perturbations on the screen and operators at null infinity. One can define scattering from two points on two disconnected holographic screens to two points at future null infinity. We hope that it is possible to find a precise mapping between localized output points at $\mathcal{I}^+$ and extended regions on the holographic screens by making use of entanglement between the screens, although the details of this mapping are yet to be explored. In addition to providing additional evidence for the connection between static patch holography and dS/CFT, scattering in the exterior region may provide a useful tool in determining the correct entanglement entropy prescription in the region between the screen.

An implicit assumption made in the connected wedge theorem is that the evolution on the holographic screen(s) is local. Otherwise, the time evolution manifestly entangles two decision regions so it cannot be viewed as a nonlocal quantum computation.\footnote{We thank Beni Yoshida for pointing this out.} The Hamiltonian generating time evolution in the two static patches is $H=H_R-H_L$, where $H_{L,R}$ are the Hamiltonian for each static patch. On the other hand, the screen dual of the evolution in the Milne patch toward $\mathcal{I}^+$ {may require coupling the left and the right boundaries similar to the eternal AdS black hole.} Thus, it is essential to understand the locality of the time evolution on the screens to justify the connected wedge theorem in the exterior region.

\paragraph{Relating static patch holography to dS/CFT}

In the de Sitter version of the connected wedge theorem argued in this paper, we related input and output points to points at past and future null infinities. This maps a bulk scattering from the static patch holographic screens to a bulk scattering from conformal boundaries. It would be interesting if the dS/CFT correspondence could provide a precise framework to describe this type of scattering. In particular, can we prove a connected wedge theorem in dS/CFT? Doing so would require precise notions of causality on the Euclidean CFT, which is problematic as there is no notion of time evolution. 
Our construction may serve as an alternative definition of `time' in dS/CFT as it provides a mapping between points on the static patch, which follows the real time evolution, to the points on the conformal boundary. See Figure~\ref{fig:mapping} for its picture. 
Additionally, studying connected wedges in dS/CFT would require a precise prescription to compute the entanglement entropy of CFT subregions. This has been studied in~\cite{Doi:2022iyj,Narayan:2015vda,Narayan:2020nsc,Narayan:2022afv, Narayan:2023zen,Doi:2023zaf, Das:2023yyl,Doi:2024nty}. 
Succeeding in this challenge would open a path to a better understanding of time, entanglement, and causality in dS/CFT.

Another interesting aspect hinted at in this paper is the bulk reconstruction. By extending the holographic scattering on the screen to null infinities, the corresponding four-point correlators should share the same bulk point singularity~\cite{Gary:2009ae,Heemskerk:2009pn,Penedones:2010ue,Maldacena:2015iua}. This operator perspective suggests that there might be a mapping between a nonlocal operator or an operator prepared by a finite time on the screen in static patch holography and local operator(s) on some Euclidean CFT on null infinity in dS/CFT. Is there any relation between two dS holography proposals at the operator level? This question reminds us of the HKLL reconstruction in dS/CFT, where a bulk local operator at point $p$ is reconstructed from CFT operators supported on $\mathcal{I}^+\cap J^+(p)$~\cite{Goldar:2024crc}. 
In the present case, we consider a fixed-momentum operator so the support of the smearing function on null infinity seems to be consistent with our expectation that the `fictitious' excitation is obtained by extending the null ray beyond the screen.
The present work and the notion of induced causality provide a potential guiding line to explore the possible connection between these two previously disconnected approaches to de Sitter holography. 

\emph{Note added} -- The mapping of `time' between an observer worldline in static patch and null infinity is argued in~\cite{Parikh:2002py}.\footnote{We thank Zixia Wei for introducing this reference and a relevant discussion.} The authors focus on flat slicing by starting from dS/$\mathbb{Z}_2$, where the quotient is given by the antipodal identification. The `time' in null infinity is obtained by analytically continuing the timelike Killing vector to the Milne patch. As also mentioned by Strominger \cite{Strominger:2001pn}, this flow corresponds to the scale transformation on null infinity. This gives an interpretation of the state evolution in light of the radial quantization. The setup in~\cite{Parikh:2002py} is different from ours, as dS is a quotient or global and the mapping is to the observer's worldline or a holographic screen. When we focus on the ingoing or outgoing mode of scattering, the mapping only requires flat slicing, thus for each mode, there is no essential difference. However, to argue the decision regions relevant to the connected wedge, we need to combine the future and past induced light cones. This requires us to consider global dS, not a quotient one. Nevertheless, the mapping itself is the same and our discussion on the connected wedge theorem gives another evidence for the mapping proposal.

\paragraph{Generalization to flat space holography}

\begin{figure}
    \centering
    \includegraphics[width=0.7\linewidth]{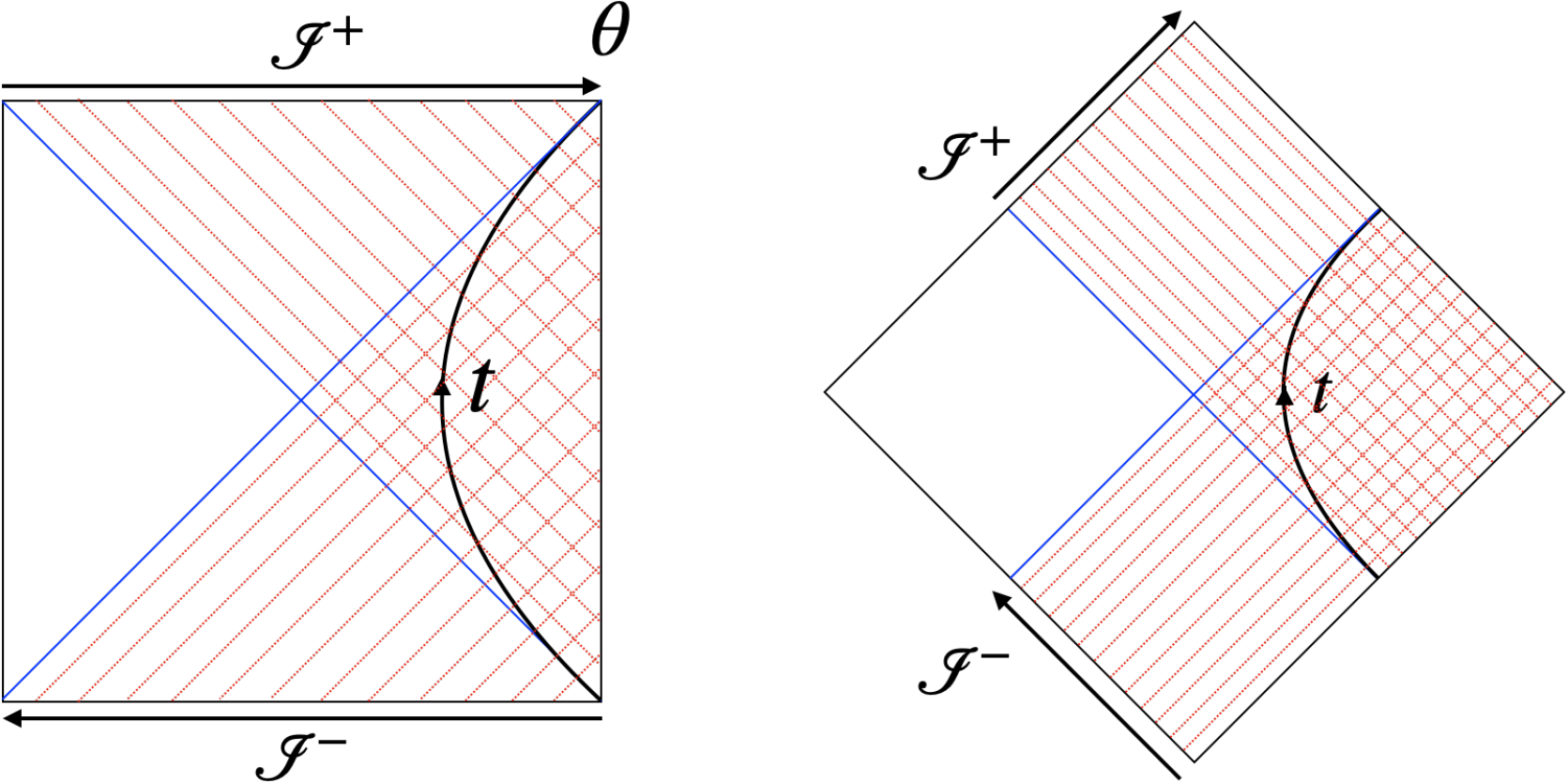}
    \caption{The mapping between points on a holographic screen $\mathcal{S}$ and points on the null infinities $\mathcal{I}^\pm$. Left: The proposed mapping is based on the induced causality in the de Sitter Penrose diagram. The time translation in the static patch is mapped to a spatial translation on $\mathcal{I^\pm}$. Right: A possible generalization to the Minkowski spacetime. The time translation in the Rindler patch is mapped to a null translation (supertranslation) on $\mathcal{I^\pm}$.}
    \label{fig:mapping}
\end{figure}

In this paper, we start from a holographic screen at the cosmological/stretched horizon dual to the interior in the static patch of asymptotically de Sitter spacetimes. Holographic scattering supports causality induced from the null infinities. Considering the most restricted, future, and past induced lightcones, we find a one-to-one correspondence between the points on the timelike screen, extending from $\mathcal{I}^-$ to $\mathcal{I}^+$, to the points on the null infinities $\mathcal{I}^\pm$. This offers a natural way of mapping the Lorentzian time evolution in a static patch to a spatial evolution on the null infinities.

We might ask whether it's possible to extend this to more general spacetimes, including the Minkowski spacetime.\footnote{We thank Sabrina Pasterski for a discussion on this subject.} One straightforward generalization is to consider a Rindler observer in the Minkowski spacetime. Then, the cosmological horizon is identified with the Rindler horizon. By assuming a holographic screen near the Rindler horizon, we can ask if a similar mapping is possible. Indeed, the null extension seems to suggest there could be a dual description on a part of the null boundary. See Figure~\ref{fig:mapping} for the comparison between the de Sitter case (left) and the Minkowski case (right). Like the dS case, the Lorentzian time evolution in the Rindler patch is mapped to an evolution on the null infinities, however, this time the evolution is generated by null generators like supertranslations. If we take the Minkowski spacetime to be four-dimensional, the picture resembles to Bondi-van der Burg-Metzner-Sachs (BMS) field theory, or equivalently, the Carrollian field theory (see~\cite{Donnay:2022aba,Donnay:2022wvx} and references therein).
Furthermore, flat space holography for a wedge region reminds us of the wedge holography from the de Sitter slicing~\cite{Ogawa:2022fhy}, where the wedge region is argued to be dual to a dS braneworld on the screen and it is further reduced to a codimension-two sphere on the null infinities. In our case, the mapping approach suggests a different UV realization on the codimension-one null infinities. Since the mapping stems from scattering in the bulk, this procedure may be compatible with celestial holography~\cite{Pasterski:2016qvg,Pasterski:2017kqt}, which is a conventional approach to flat space holography.

If we focus on the Rindler patch, rather than the whole Minkowski spacetime, the procedure breaks some isometry, so establishing a concrete flat space holography may be more involved. Nevertheless, as we could examine causality and entanglement entropy of a holographic screen from the connected wedge theorem without knowing the details of the theory, we might also be able to probe causality and holographic entanglement entropy prescription in flat space, which has been debated~\cite{Jiang:2017ecm,Apolo:2020bld}, in a similar manner.

\acknowledgments

We are grateful to Sergio Aguilar-Gutierrez, Nirma
lya Kajuri, Cynthia Keeler, Josh Kirklin, Zhi Li, Alex May, Hervé Partouche, Sabrina Pasterski, Jan Pieter van der Schaar, Zixia Wei, and Beni Yoshida for helpful discussions. V.F. and T.M. would like to thank the Berkeley Center for Theoretical Physics for their hospitality during the early stages of this work. V.F. would like to thank Perimeter Institute for their hospitality during the final stage of this work. 

This research is partially supported by the Cyprus Research and Innovation Foundation grant EXCELLENCE/0421/0362. 
This research was also supported in part by the Perimeter Institute for Theoretical Physics. Research at Perimeter Institute is supported by the Government of Canada through the Department of Innovation, Science and Economic Development and by the Province of Ontario through the Ministry of Research, Innovation and Science. This work was supported by JSPS KAKENHI Grant Number 23KJ1154, 24K17047.

\appendix

\section{Summary of notations}
\label{app:notations}

This paper involves several notations.
\begin{itemize}
    \item Plain capital letters $B;D,J$,... denote bulk subregions.
    \item Italic capital letters $\mathcal{C,I,L,R,S}$,... denote codimension-one hypersurfaces.
    \item The lower-case Greek letter $\gamma$ denotes an achronal codimension-two surface.
    \item Lower case letters $c,r,p$,... denote points in the bulk.
    \item Calligraphic capital letters $\mathscr{A,L,N,R,S}$,... denote subsystems of the null membrane, an object used in the proof of Theorem~\ref{th:CWTpatch}.
\end{itemize}
To avoid confusion, we sometimes follow the common notations in the literature. For example, an achronal codimension-one hypersurface is denoted by $\Sigma$; an entanglement wedge, which is codimension-zero, is denoted by $\mathcal{W}$. Some additional notations used in this paper are listed below.
\begin{itemize}
    \item $X_M$ with $M=0,1,2,3$ denotes the embedding coordinates of dS$_3$.
    \item $(\sigma,\theta,\varphi)$ and $(t,r,\varphi)$ are the conformal and static coordinates of dS$_3$, respectively.
    \item  $\mathcal{L}(\gamma)$ denotes the null hypersurface defined by the congruence of geodesics\footnote{A congruence of geodesics is a set of non-intersecting geodesics whose union constitutes an open subregion of a spacetime. Alternatively, it is a set of geodesics passing through an open subregion such that for each point in the subregion, only one geodesic passes through it.} emanating from $\gamma$. $\lambda$ denotes the affine parameter along it, and $\theta$ is its expansion.
    \begin{equation}
        \theta = \frac{1}{\mathcal{A}}\frac{d\mathcal{A}}{d\lambda},
    \end{equation}
    where $\mathcal{A}$ is the infinitesimal area element spanned by nearby null geodesics. A null hypersurface with $\theta\leq 0$ everywhere is called a lightsheet.
    \item $J^{\pm}(p)$ are the past $(-)$ and future $(+)$ lightcones of $p$ in the bulk, and $D(\Sigma)$ is the bulk causal diamond of $\Sigma$.
    %domain of dependance of $\Sigma$.
    \item $\mathcal{I}^{\pm}$ are the past $(-)$ and future $(+)$ null infinities of $dS$. Points or regions defined on $\mathcal{I}^{\pm}$ are {tilded} with $~\tilde~$.
    \item The conditioning $(\cdot)\vert_\mathcal{S}$ will be used from Section~\ref{sec:asympt-task} to describe bulk causality restricted to the boundary/screen $\mathcal{S}$ (except for $\Sigma\vert_{\mathcal{S}}$ introduced later). For a precise definition, see~\eqref{eq:deflc}.
    \item The subscript ${}_{\mathcal{S}}$ (without conditioning as above) will be used from Section~\ref{sec:violation} to describe causality based on the induced metric on the screen. As mentioned in Section~\ref{sec:violation}, this is distinguished from the causality above, denoted by the subscript $(\cdot)\vert_\mathcal{S}$.
    \item The accent $\hat{\phantom{a}}$ will be used in Theorem~\ref{thm:refined-CWT} and from Section~\ref{sec:ind_LC} to describe the notion of causality induced from $\mathcal{I}^{\pm}$. It is also used for some quantity defined from causality, such as the induced causal diamond $\hat{D}$.
    \item $\mathcal{R}_i$ denotes a decision region, a subregion on the screen $\mathcal{S}$, which appears in the connected wedge theorem.
    \item $\mathcal{V}_{i}$ is the largest spatial section of $\mathcal{R}_i$, meaning $\mathcal{R}_i=\hat{D}(\mathcal{V}_i)$.
    \item A spatial subregion on a screen is denoted by $A$ and $\Sigma\vert_{\mathcal{S}}$ denotes the intersection between an achronal slice $\Sigma$ and the screen $\mathcal{S}$. This defines the complementary subregion $A^c$ as $A\cup A^c=\Sigma\vert_\mathcal{S}$.
\end{itemize}

\section{Formal aspects of holographic entropy in de Sitter }
\label{app:ext_proc}

We review the covariant holographic entanglement prescription of~\cite{Franken:2023pni} in Section~\ref{sec:FPRT}. In Appendix~\ref{app:def}, we define the definitions of three alternative prescriptions to find the codimension-two surfaces computing entropies. The first one, extremality, corresponds to the direct adaptation of the Hubeny-Rangamani-Takayanagi (HRT) prescription in AdS/CFT~\cite{Hubeny:2007xt}. The second one, maximin, is the direct adaptation of the maximin procedure which is equivalent to the HRT prescription in AdS spacetime. The third one, C-extremality, is the proposed adaptation of extremality motivated by the non-existence of extremal surfaces in some cases in de Sitter spacetime~\cite{Franken:2023pni}. We prove their inequivalence and that a maximin and C-extremal surface is an extremal surface. We finally show that the three prescriptions are equivalent in the interior of the holographic screen of an observer in asymptotically de Sitter spacetime.

\subsection{Holographic entanglement entropy}
\label{sec:FPRT}

Inspired by the Ryu-Takayanagi (RT) formula and its generalization called the HRT formula~\cite{Ryu:2006bv, Hubeny:2007xt, Wall:2012uf, Faulkner:2013ana, Engelhardt:2014gca}, one can consider the problem of computing entanglement entropies in the holographic dual theory in terms of bulk quantities. This question was tackled in~\cite{Susskind:2021esx, Shaghoulian:2021cef,Shaghoulian:2022fop,Franken:2023pni} and we review the covariant proposal of~\cite{Franken:2023pni}, which is used in Section~\ref{sec:theorem}. 

Considering an arbitrary observer worldline, we can perform a change of coordinates using the isometries of de Sitter spacetime such that this observer lies at the north pole $\theta=0$. One can associate an antipodal observer to this trajectory, located at $\theta=\pi$. Their static patches are complementary in the sense that their union contains an infinite number of complete Cauchy slices.\footnote{These slices all pass through the bifurcate horizon which is the closed surface located at $\sigma=0,\theta=\pi/2$.} The Penrose diagram of pure de Sitter spacetime is depicted in Figure~\ref{fig:penrose}, with related notations.
\begin{figure}
    \centering
    \includegraphics[width=0.4\linewidth]{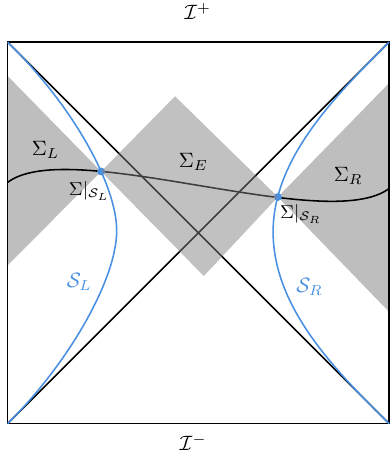}
    \caption{Penrose diagram of dS. The two blue lines are the two screens, and the three parts of a complete slice $\Sigma$ are depicted by black lines. The worldlines of two antipodal observers follow the two vertical lines. The bulk regions $D_I$ are depicted by shaded regions. For $D_E$ to be encoded on $\mathcal{S}_E$, the two screens should be located at a Planck length away from the horizons.}
    \label{fig:penrose}
\end{figure}
The holographic screens associated with two antipodal observers are denoted as $\mathcal{S}_L$ and $\mathcal{S}_R$. Each screen is an arbitrary convex timelike surface that resides in each static patch. In this appendix, we consider global dS, including the region beyond a single static patch. Thus, we slightly generalize the notation such that $\mathcal{S}$ denotes the union of the two screens, i.e. $\mathcal{S}=\mathcal{S}_L \cup \mathcal{S}_R$. We note an arbitrary spacelike slice of $\mathcal{S}_L\cup \mathcal{S}_R$ by $\Sigma\vert_{\mathcal{S}_L}\cup \Sigma\vert_{\mathcal{S}_R}$, associated with a complete achronal slice $\Sigma$. In particular, $\Sigma\vert_{\mathcal{S}_L}$ and $\Sigma\vert_{\mathcal{S}_R}$ are spacelike regions and they are spacelike separated. We denote the boundary of the middle region (exterior) by $\mathcal{S}_E=\mathcal{S}_L\cup\mathcal{S}_R$. We denote $\Sigma_I$, with $I=L,R,E$, any achronal slice such that $\partial \Sigma_I = \Sigma\vert_{\mathcal{S}_I}$ for $i\in\{L,E,R\}$. (Note that $\Sigma$ in the main body is now denoted by $\Sigma_R$.) We denote $D_I=D(\Sigma_I)$ the bulk region spanned by slices $\Sigma_I$. Now let us consider any spacelike subregion $A$ of $\Sigma\vert_{\mathcal{S}_L}\cup\Sigma\vert_{\mathcal{S}_R}$.
\begin{definition}[Homology condition]
\label{def:homol}
    A surface $\gamma\in D$ is $D$-homologous to the codimension-two achronal surface $A \in\partial D$ if there exists an achronal codimension-one surface $\mathcal{C}(A)$ such that $\partial \mathcal{C}(A) =\gamma \cup A$.\footnote{This implies in particular that $\gamma$ is anchored on $ A$, i.e. $\partial \gamma=\partial A$.\label{nb}} Surface $\mathcal{C}(A)$ is called the homology region in $D$. In the case of de Sitter space, we use the abuse of language that $\gamma$ is $D_I$-homologous to $A\in\Sigma\vert_{\mathcal{S}_L}\cup \Sigma\vert_{\mathcal{S}_L}$ if it is $D_I$ homologous to $A_I = A\cap\mathcal{S}_I$. The homology region is then denoted by $\mathcal{C}_I(A)$ and we impose that $\mathcal{C}_I(A)\in\Sigma'_I$ where $\Sigma_I'$ is an achronal slice such that $\partial\Sigma_I' = \Sigma\vert_{\mathcal{S}_I}$.
\end{definition}
\begin{figure}
    \centering
    \includegraphics[width=0.4\linewidth]{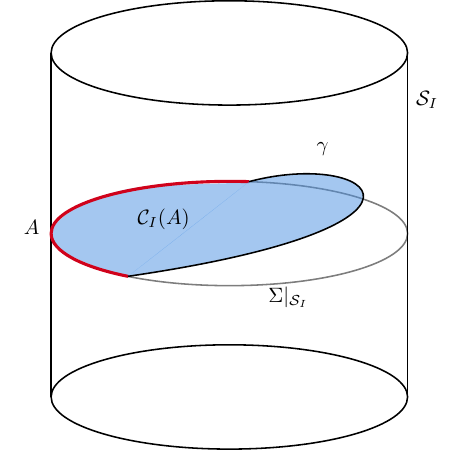}
    \caption{Homology condition in $D_I$ for $I=L$ or $I=R$. The screen $\mathcal{S}_I$ has the topology of a cylinder. $\Sigma\vert_{\mathcal{S}_I}$ has the topology of a circle on the cylinder. $\gamma$ is homologous to $A$ as their union defines the boundary of a spacelike region $\mathcal{C}_I(A)\in\Sigma_I'$ shaded in blue.}
    \label{fig:homology}
\end{figure}
We now review the covariant holographic entanglement entropy proposal from~\cite{Franken:2023pni}, based on~\cite{Shaghoulian:2021cef}. One of the main motivations for this conjecture comes from wedge holography~\cite{Shaghoulian:2021cef, Geng:2020fxl, Mollabashi:2014qfa, Miao:2020oey,Akal:2020wfl, Bousso:2020kmy} and the consistency of its results with the Gibbons-Hawking formula as well as strong subadditivity and entanglement wedge nesting~\cite{Shaghoulian:2021cef,Shaghoulian:2022fop,Franken:2023pni}.
\begin{conjecture}[Covariant holographic entanglement entropy~\cite{Shaghoulian:2021cef,Shaghoulian:2022fop,Franken:2023pni}]
\label{con:ent_pr}
    Let $A$ be a subsystem of $\Sigma\vert_{\mathcal{S}_L}\cup\Sigma\vert_{\mathcal{S}_R}$. The entanglement entropy of $A$ is given by
    \begin{equation}
    	\label{eq:entropy}
    S(A)= \sum_{I=L,E,R}{\frac{{\rm area}(\gamma_e(A;D_I))}{4G_N}} +O(1),
    \end{equation}
    where $\gamma_e(A;D_I)$ is the minimal extremal surface $D_I$-homologous to $A$.\footnote{When there is no such surface in some region $D_I$, one has to extend the notion of extremal surface, see Appendix~\ref{app:def} and~\cite{Franken:2023pni}. We show in Appendix~\ref{app:proofs} that the usual notion of extremality is enough in all cases considered in this paper.} The entanglement wedge $\mathcal{W}(A)$ of $A$ is given by the union of the causal diamonds of all three homology regions $\mathcal{C}_I(A)$. Each of them is denoted as $\mathcal{W}_I(A)=D(\mathcal{C}_I(A))$
    \begin{equation}
        \mathcal{W}(A) = \bigcup_{I} D(\mathcal{C}_I) = \bigcup_I \mathcal{W}_I(A).
    \end{equation}
\end{conjecture}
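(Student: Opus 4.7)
Since Conjecture~\ref{con:ent_pr} is not stated as a theorem but as a holographic prescription without a known microscopic dual, the plan is not to produce a first-principles derivation but to assemble the strongest possible case for the three-term formula by combining: (i) a geometric argument that partitions the problem into independent contributions on the three regions $D_L,D_E,D_R$ cut out by $\mathcal{S}_L\cup\mathcal{S}_R$; (ii) a matching of limiting cases to known answers, in particular the Gibbons-Hawking entropy and the results of wedge holography; and (iii) verification of the consistency requirements that any holographic entropy prescription must satisfy (strong subadditivity, purity, and entanglement-wedge nesting).

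For the three-term sum structure, I would adapt the Lewkowycz-Maldacena argument to the assumed Euclidean continuation of static patch holography. Positing a replica path integral for $\mathrm{tr}\,\rho_A^n$ on the screens, together with a bulk saddle realised by a $\mathbb{Z}_n$-quotient with a cosmic brane, one observes that since $\mathcal{S}_L\cup\mathcal{S}_R$ splits the bulk into the three globally hyperbolic regions $D_L$, $D_E$, and $D_R$, the cosmic-brane action decouples across the partition. Within each region the brane equation of motion forces a codimension-two extremal surface anchored on $A\cap\mathcal{S}_I$ in the sense of Definition~\ref{def:homol}, and the $n\to 1$ limit then produces the area contribution ${\rm area}(\gamma_e(A;D_I))/4G_N$. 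Imposing minimality within each region selects among competing extrema, delivering exactly the sum in~\eqref{eq:entropy}; the entanglement wedge is accordingly the union of the three causal diamonds of the homology regions.

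I would then fix the normalization and corroborate the prescription by examining limiting cases. Taking $A=\Sigma\vert_{\mathcal{S}_L}\cup\Sigma\vert_{\mathcal{S}_R}$ in pure dS${}_3$ with both screens pushed onto the cosmological horizons, the interior contributions degenerate and the exterior extremal surface reduces to the two horizon bifurcation circles, reproducing the Gibbons-Hawking value $2\pi/G_N$ per horizon. Wedge-holography setups~\cite{Shaghoulian:2021cef}, where a dual boundary theory exists, provide an independent microscopic check of the same three-term structure. Strong subadditivity, entanglement-wedge nesting, and Araki-Lieb would then be verified by running the standard maximin proofs of~\cite{Wall:2012uf} separately in each $D_I$, which is legitimate because each $D_I$ is globally hyperbolic with boundary components on $\mathcal{S}$; the key technical input is Theorem~\ref{th:patch} of Appendix~\ref{app:proofs}, which establishes the equivalence of extremization, constrained extremization, and maximin inside the static patch and thereby ensures that the surfaces appearing in~\eqref{eq:entropy} are well-behaved in $D_L$ and $D_R$.

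The main obstacle, and the reason the statement is only a conjecture, is twofold. First, the Lewkowycz-Maldacena-style derivation presupposes a well-posed Euclidean dual theory on the timelike screen, and no such theory has been constructed for asymptotically dS spacetimes in $d\geq 3$; without it, the decoupling of the brane action across $D_L,D_E,D_R$ must be taken as an input rather than a consequence. Second, an extremal surface $D_E$-homologous to $A$ may simply fail to exist because $\mathcal{S}_E=\mathcal{S}_L\cup\mathcal{S}_R$ is not an asymptotic boundary and area need not attain a stationary value; this is the reason for the constrained-extremization variant introduced in~\cite{Franken:2023pni}. Promoting Conjecture~\ref{con:ent_pr} to a theorem therefore reduces, in practice, to extending Theorem~\ref{th:patch} to the exterior region and constructing a microscopic dual on $\mathcal{S}$ that supplies the required Euclidean path integral, both of which remain open problems that go beyond what the present proof strategy can achieve.
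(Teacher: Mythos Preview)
The statement is a \emph{conjecture}, and the paper does not attempt to prove it; it is simply stated as a proposal, with a single sentence of motivation citing wedge holography and consistency with the Gibbons--Hawking formula, strong subadditivity, and entanglement wedge nesting. You correctly recognize this and, rather than fabricating a proof, lay out a motivational case along the same lines the paper gestures at (wedge holography, Gibbons--Hawking limit, maximin-based consistency checks), together with a Lewkowycz--Maldacena-style heuristic that the paper does not mention. In that sense your write-up is appropriate and more detailed than what the paper itself offers.

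One concrete error: in your Gibbons--Hawking check you take $A=\Sigma\vert_{\mathcal{S}_L}\cup\Sigma\vert_{\mathcal{S}_R}$, i.e.\ the \emph{entire} holographic system. For a pure global state this must have vanishing entropy, and indeed in the prescription the homology condition forces all three $\gamma_e(A;D_I)$ to be empty. The case that reproduces the Gibbons--Hawking entropy is $A=\Sigma\vert_{\mathcal{S}_L}$ (or $\Sigma\vert_{\mathcal{S}_R}$) alone: the interior contributions are empty by homology, while the exterior extremal surface is the bifurcate horizon, giving a single area term $\mathrm{area}(\text{horizon})/4G_N$. You should also temper the claim that the maximin proofs of strong subadditivity and nesting go through ``separately in each $D_I$'': as you yourself note later, Theorem~\ref{th:patch} only covers $D_L$ and $D_R$, and in $D_E$ the maximin, extremal, and C-extremal prescriptions are genuinely inequivalent (Theorem~\ref{ineq}), so the standard Wall argument cannot simply be imported there.
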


\begin{remark}
    Quantum corrections due to the entropy of bulk fields can be taken into account by replacing the sum over the area terms with the generalized entropy~\cite{Faulkner:2013ana, Engelhardt:2014gca, Franken:2023pni}
    \begin{equation}
    \label{eq:Sgen}
    S_{gen}(A)= \sum_{I=L,E,R}{\frac{{\rm area}(\gamma_i(A;D_I))}{4G_N}}\\ + S_{\rm bulk}\left(\bigcup_{I}{\cal {C}}_{I}(A) \right).
\end{equation}
The surface extremizing the generalized entropy is called the quantum extremal surface.
\end{remark}
In the definitions above, we considered a restricted version of the usual HRT formula, in analogy with the restricted maximin procedure~\cite{Marolf:2019bgj, Grado-White:2020wlb}, which is analogous to the original maximin definition, except that it considers bulk surfaces defined on achronal slices whose boundary is identified with the screen $\Sigma\vert_{\mathcal{S}}$, instead of slices with the looser condition that the boundary contains the subsystem $A$. In AdS the results agree when the maximin surface lies in a smooth region of spacetime~\cite{Marolf:2019bgj}. The restricted maximin or HRT is necessary in some backgrounds to ensure strong subadditivity and entanglement wedge nesting~\cite{Grado-White:2020wlb}. From another perspective, restricted extremization is important when dealing with systems that are not isolated, such as evaporating black holes where time evolution on the boundary alters the entropy~\cite{Penington:2019npb, Almheiri:2019psf} . Similarly, in our scenario, each region $D_I$ is not isolated, motivating the use of the restricted prescription (Conjecture~\ref{con:ent_pr}). For instance, when considering one holographic screen, applying a unitary transformation to the complementary screen could affect the entropy through the exterior region.

Conjecture~\ref{con:ent_pr} is central in Section~\ref{sec:theorem} when discussing the connected wedge theorem in de Sitter spacetime. As noted in~\cite{Franken:2023pni}, defining a covariant holographic entanglement entropy prescription in the context of doubled static patch holography leads to several subtleties. In particular, there are situations where the HRT-like formula does not have any solution. A solution to this problem was proposed in~\cite{Franken:2023pni}.\footnote{See \cite{Hao:2024nhd} for related issues in islands computations.} We review this prescription in the following section, in addition to another possibility which is to adapt the maximin prescription of~\cite{Wall:2012uf} to de Sitter spacetime. There are tensions between these three alternative adaptations of the holographic entanglement entropy prescription in AdS/CFT, as studied in detail in the next section. Moreover, we prove in Appendix~\ref{app:proofs} that they are equivalent in the static patch of an observer. This fact is essential for a proof of the connected wedge theorem.

The goal of this paper is not to prove which is the correct prescription. To answer this question one should derive it from a path integral computation. We do not have any particular reason to expect that the maximin method should be equivalent to any holographic entropy prescription. On the other hand, we expect holographic entropy prescriptions in non-AdS spacetimes to involve an extremization process. Indeed, a great number of results in holography, such as entanglement wedge nesting, the boundary causality condition, the inclusion of the causal wedge in the entanglement wedge~\cite{Akers:2016ugt}, as well as the connected wedge theorem~\cite{May:2019yxi,May:2019odp}, heavily rely on the fact that extremal surface is of vanishing (quantum) expansion in all directions.\footnote{In particular, all these properties rely on the classical focusing theorem or restricted quantum focusing conjecture~\cite{Bousso:2015mna,Shahbazi-Moghaddam:2022hbw}, which constrains the evolution of the expansion or quantum expansion along null hypersurfaces.} These conditions emerge from fundamental constraints such as bulk and boundary causality, and we expect them to be satisfied beyond AdS/CFT.

\subsection{Extremal, maximin, and C-extremal surfaces}
\label{app:def}

In AdS/CFT, entropy is given by the area divided by $4G_N$ of the so-called HRT surface~\cite{Hubeny:2007xt}. HRT surfaces are homologous to the associated subsystem and extremal.
\begin{definition}
\label{def:ext}
\textbf{[Extremization]}
The extremal surface of $A$ in $D$ is denoted as $\gamma_e(A;D)$. It is a spacelike codimension-two surface $\gamma$ which is $D$-homologous to $A$ and extremizes the area functional. When there are multiple $D$-homologous surfaces, we choose the one with the smallest area.
\end{definition}
When the bulk region $D$ is not the whole spacetime, the extremization problem may not have a solution. This can happen in the holographic entanglement entropy prescription we are about to define~\cite{Franken:2023pni}. So we must relax our condition of extremality to take into account surfaces that have the smallest or largest area in $D$ without being extremal. In this case, the surface must lie at least partially on the boundary of $D$ otherwise it would be a true extremum of the area functional. Since the usual HRT prescription asks to select the minimal area surface among the extremal surfaces, we only select the surface with the smallest area in $D$. In \cite{Hao:2024nhd}, a similar formula was derived in the context of islands computations in de Sitter space.

\begin{definition}
\label{def:Rext}
\textbf{[Constrained extremization/C-extremization]}
    The C-extremal surface of $A\in\partial D$ in region $D$, denoted as $\gamma_{c}(A;D)$, is a spacelike codimension-two surface $\gamma$, $D$-homologous to $A$ that satisfies one of the two following conditions.
    \begin{enumerate}
        \item It is the extremal surface of $A$.
        \item A positive measure subset\footnote{By a positive measure subset of $\gamma$, we mean that a subset of $\gamma$ of non-vanishing area lies on $\partial D$. This is the negation of the statement that $\gamma\in D \backslash \partial D$ except for a countable number of points $p\in\gamma$.} of $\gamma$ lies on $\partial D$ and $\gamma$ is the minimal area surface in $D$. See Figure~\ref{fig:const_ext} for a schematic example.
    \end{enumerate}
    When there are multiple C-extremal surfaces, we chose the one with the smallest area.
\end{definition}

Here, we do not mean by minimal or maximal area surface in $D$ that the surface is an extremum of the area function, simply that the area takes its minimal or maximal value in $D$. If there exists a C-extremal surface satisfying the second condition of Definition~\ref{def:Rext}, it must be the minimal C-extremal surface.

\begin{figure}[h!]
    \centering
    \includegraphics[width=0.6\linewidth]{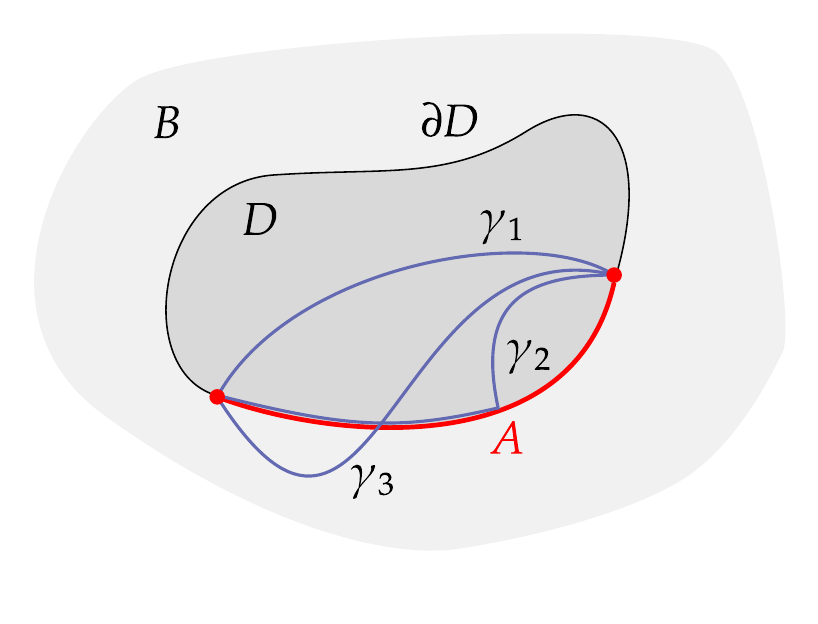}
    \caption{Schematic example of constrained extremization. Consider a region $D$ of the bulk $B$. We look for constrained extremal surfaces (curves here) in $D$ and homologous to a subsystem $A$ of $\partial D$. $\gamma_1$ is extremal and has no positive measure subset on $\partial D$, so it is a constrained extremal curve. $\gamma_2$ is not extremal but has a positive measure subset on $\partial D$ while being the curve of minimal length, so it is a constrained extremal curve. Finally, $\gamma_3$ is extremal in $B$ but is not fully contained in $D$ so it is not a constrained extremal curve in $D$. }
    \label{fig:const_ext}
\end{figure}
In the context of AdS/CFT, the HRT proposal is proven to be equivalent to a maximin procedure~\cite{Wall:2012uf}. Again, we consider the restricted maximin procedure~\cite{Marolf:2019bgj, Grado-White:2020wlb}.
\begin{definition}
    On any achronal slice $\Sigma_I$ such that $A\in\partial\Sigma_I$, which is not necessarily complete, $\gamma_{\min}(A;\Sigma_I)$ is the codimension-two surface $D(\Sigma_I)$-homologous to $A$ with minimal area.
\end{definition}
This definition differs from that of~\cite{Wall:2012uf} by the fact the achronal slices considered here are not complete. This leads to the possibility that $\gamma_{\min}(A;\Sigma)$ is not an extremum with respect to variations along $\Sigma_I$. 

\begin{definition}
\label{def:maximin}
\textbf{[Maximin]}
    Let $\Sigma\vert_{\mathcal{S}}$ be a spacelike slice of some holographic screen $\mathcal{S}$ and $A\in \Sigma\vert_{\mathcal{S}}$ some subsystem. Let $D$ denote the bulk region whose spatial boundary is $\Sigma\vert_{\mathcal{S}}$. The maximin surface associated with $A$, denoted $\gamma_{m}(A;D)$, is the surface $\gamma_{\min}(A;\Sigma')$ which is maximal when varying over all slices $\Sigma'$ such that $\partial\Sigma' = \Sigma\vert_{\mathcal{S}}$. The achronal slice on which $\gamma_{m}(A;D)$ is minimal is denoted $\Sigma_{m}(A;D)$. Additionally, $\gamma_{m}(A;D)$ must be stable in the sense that considering a small variation of $\Sigma_{m}(A;D)$, the new slice must have a local minimum (possibly non-extremal) in the neighborhood of $\gamma_{m}(A;D)$, with no greater area.
\end{definition}

We will not prove the existence of maximin surfaces, as it goes well beyond the scope of this paper. Note that this was proven for certain classes of asymptotically AdS spacetime~\cite{Wall:2012uf, Marolf:2019bgj}. We will always assume that it exists, which does not seem to be an extravagant assumption at least in the simple examples we consider in this paper, that is connected arcs, and possibly unions of them, of spacelike slices of the horizon/stretched horizon in three-dimensional de Sitter spacetime.

\subsection{Proofs of (in)equivalence}
\label{app:proofs}

First, we show a few properties to specify the origin of the inequivalence between the three prescriptions. We then show the equivalence of the three prescriptions in the interior of the holographic screen of an observer.
\begin{lemma}
\label{lem:minimin}
    A codimension-two spacelike surface $\gamma_{\min}(A;D_I)$ that has the smallest area in bulk region $D_I$ and is $D_I$-homologous to $A$ is the minimin surface of $A$ in region $D_I$, and conversely.\footnote{Here minimin is defined by changing the maximization of Definition~\ref{def:maximin} by a minimization.}
\end{lemma}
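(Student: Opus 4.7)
Proof plan.

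The key bridge between the two prescriptions is the observation that any candidate for one is a candidate for the other. Concretely, I will first show that a spacelike codimension-two surface $\gamma\subset D_I$ is $D_I$-homologous to $A$ if and only if there exists an achronal slice $\Sigma'\subset D_I$ with $\partial\Sigma'=\Sigma\vert_{\mathcal S_I}$, containing $\gamma$, on which $\gamma$ is $D(\Sigma')$-homologous to $A$. One direction is immediate from Definition~\ref{def:homol}: given the homology region $\mathcal C(A)$ with $\partial\mathcal C(A)=\gamma\cup A$, one extends $\mathcal C(A)$ across $\gamma$ by any achronal piece $\mathcal C^c$ in $D_I$ with $\partial\mathcal C^c=\gamma\cup(\Sigma\vert_{\mathcal S_I}\setminus A)$, obtaining $\Sigma'=\mathcal C(A)\cup\mathcal C^c$ with $\partial\Sigma'=\Sigma\vert_{\mathcal S_I}$. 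Existence of such an extension follows from the assumed (AdS) global hyperbolicity of the ambient spacetime applied to $D_I$. The converse is the trivial reverse reading.

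Given this dictionary, the two directions of the lemma are short.

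\textbf{Forward direction.} Suppose $\gamma$ has the smallest area among all codimension-two spacelike surfaces $D_I$-homologous to $A$. By the dictionary above, $\gamma\subset\Sigma'_\ast$ for some admissible slice $\Sigma'_\ast$, and $\gamma$ is $D(\Sigma'_\ast)$-homologous to $A$. On $\Sigma'_\ast$ the minimum-on-slice $\gamma_{\min}(A;\Sigma'_\ast)$ is a particular $D_I$-homologous surface, so its area is at least $\mathrm{area}(\gamma)$; but since $\gamma$ itself is a competitor on $\Sigma'_\ast$, we also have $\mathrm{area}(\gamma_{\min}(A;\Sigma'_\ast))\le\mathrm{area}(\gamma)$. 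Hence $\gamma_{\min}(A;\Sigma'_\ast)=\gamma$. For any other admissible slice $\Sigma'$, $\gamma_{\min}(A;\Sigma')$ is likewise $D_I$-homologous to $A$, so $\mathrm{area}(\gamma_{\min}(A;\Sigma'))\ge\mathrm{area}(\gamma)$. Minimizing over $\Sigma'$ (i.e.\ taking the inner minimum in the minimin procedure) therefore yields $\gamma$, proving it is the minimin surface.

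\textbf{Converse.} Suppose $\gamma_m$ is a minimin surface of $A$ in $D_I$, realized on the slice $\Sigma'_\star$. By the dictionary, $\gamma_m$ is $D_I$-homologous to $A$. If some $D_I$-homologous surface $\gamma'$ had strictly smaller area, we could embed it in an admissible slice $\Sigma'$; then $\mathrm{area}(\gamma_{\min}(A;\Sigma'))\le\mathrm{area}(\gamma')<\mathrm{area}(\gamma_m)$, contradicting that $\gamma_m$ achieves the minimum over slices. Hence $\gamma_m$ has the smallest area among all $D_I$-homologous surfaces.

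The only nontrivial ingredient, and what I would expect to be the main obstacle if one tried to push the argument to more singular settings, is the slice-extension step: one must know that an achronal homology region $\mathcal C(A)$ can be completed to a full achronal slice of $D_I$ with the prescribed boundary $\Sigma\vert_{\mathcal S_I}$. Under the global hyperbolicity assumption stated in Section~\ref{sec:assump}, and with the convex timelike screen of Definition~\ref{def:ap-sp}, this is standard, so the proof reduces to the short argument above.
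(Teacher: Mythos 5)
Your proof is correct and follows essentially the same route as the paper's: the global minimum is the minimum on any slice containing it, and the minimum on every other slice cannot beat it, so the two prescriptions coincide. The only difference is that you spell out the slice-extension "dictionary" explicitly, whereas the paper builds it into Definition~\ref{def:homol} by requiring the homology region $\mathcal{C}_I(A)$ to lie on an achronal slice $\Sigma'_I$ with $\partial\Sigma'_I=\Sigma\vert_{\mathcal{S}_I}$, making that step immediate.
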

\begin{proof}
    The smallest area surface $\gamma_{\min}(A;D_I)$ must be $\gamma_{\min}(A;\Sigma_I)$ for any $\Sigma_I$ containing it. The minimal area surface on every other slice cannot be smaller than $\gamma_{\min}(A;D_I)$ otherwise it would itself be the smallest area surface. Hence $\gamma_{\min}(A;D_I)$ is the minimin surface.
\end{proof}

\begin{proposition}[Maximin+Extremality along $\Sigma_m$ $\Rightarrow$ Extremality]
\label{th:ext=maximin}
    A surface $\gamma_I\in D_I$ is the extremal surface of $A$ if it is the maximin surface of $A$ and is extremal along $\Sigma_{m}(A;D_I)$. The converse also holds only if there exists such a maximin surface extremal along $\Sigma_m$.
\end{proposition}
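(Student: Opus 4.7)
The plan is to adapt the classical Wall argument~\cite{Wall:2012uf} to the restricted (boundary-anchored) maximin prescription of Definition~\ref{def:maximin}, using the hypothesis ``extremal along $\Sigma_m$'' as a substitute for the slice-completeness used in the standard setup. The starting point is to recast the hypothesis in terms of the two null expansions $\theta_\pm$ of the congruences emanating orthogonally from $\gamma_I$: extremality of $\gamma_I$ along $\Sigma_m$ is equivalent to $\theta_+ + \theta_- = 0$ pointwise on $\gamma_I$, while full extremality in $D_I$ requires the stronger $\theta_+ = \theta_- = 0$. The goal thus reduces to showing that both expansions vanish individually.

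Next I would exploit the maximin condition to produce one-sided inequalities on $\theta_\pm$. Fixing a point $p \in \gamma_I$ and a nonnegative bump function $f$ supported in a small neighborhood of $p$ away from $\Sigma\vert_{\mathcal S}$, consider the one-parameter family $\Sigma'_\epsilon$ obtained by pushing $\Sigma_m$ to the future by proper time $\epsilon f$, keeping $\partial \Sigma'_\epsilon = \Sigma\vert_{\mathcal S}$. The minimal surface $\gamma'_\epsilon = \gamma_{\min}(A; \Sigma'_\epsilon)$ satisfies $\operatorname{area}(\gamma'_\epsilon) \le \operatorname{area}(\gamma_I)$ by the maximin property. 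On the other hand, $\operatorname{area}(\gamma'_\epsilon)$ is bounded above by the area of the trial surface obtained by transporting $\gamma_I$ into $\Sigma'_\epsilon$ along the future-directed null generators, whose first-order variation is $\epsilon \int_{\gamma_I} f\, \theta_+\, dA$. Taking $\epsilon \to 0^+$ forces $\int_{\gamma_I} f\, \theta_+\, dA \le 0$, and localizing $f$ around $p$ yields $\theta_+(p) \le 0$. A symmetric argument with past-directed deformations gives $\theta_-(p) \le 0$. Combined with $\theta_+ + \theta_- = 0$, both inequalities collapse to $\theta_+ = \theta_- = 0$ pointwise, so $\gamma_I$ is extremal in $D_I$.

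To secure the minimality clause built into Definition~\ref{def:ext}, I would argue by contradiction: if another extremal surface $\gamma'$ homologous to $A$ had strictly smaller area, then any achronal slice through $\gamma'$ would give $\operatorname{area}(\gamma_{\min}(A; \Sigma')) \le \operatorname{area}(\gamma')<\operatorname{area}(\gamma_I)$, contradicting the maximality step in Definition~\ref{def:maximin}. For the converse, I would note that it is essentially a restatement: given an extremal $\gamma$ and granted that \emph{some} maximin surface extremal along its maximin slice exists, the forward direction identifies that maximin surface with an extremal one, and the minimality clause identifies it with $\gamma$ up to the tie-breaking rule of Definition~\ref{def:ext}. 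The ``only if'' qualifier is necessary because in general a maximin surface need not be extremal along $\Sigma_m$ in the restricted setting.

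The hardest step will be the justification of the trial-surface bound and the passage from the integrated inequality $\int f\, \theta_\pm\, dA \le 0$ to the pointwise bound $\theta_\pm \le 0$. This requires controlling $\operatorname{area}(\gamma'_\epsilon)$ against the null-translated trial surface up to $o(\epsilon)$ corrections, which in turn relies on the stability condition built into Definition~\ref{def:maximin} together with a standard calibration estimate in a tubular neighborhood of $\gamma_I$. A second, setting-specific subtlety is that the deformed slices $\Sigma'_\epsilon$ must remain achronal with the prescribed boundary, forcing $f$ to be supported strictly in the interior of $D_I$. Finally, because Definition~\ref{def:maximin} only guarantees $\gamma'_\epsilon$ is minimal on $\Sigma'_\epsilon$ (not extremal), care is needed not to attribute extremal-surface properties to $\gamma'_\epsilon$; fortunately only an upper bound on $\operatorname{area}(\gamma'_\epsilon)$ enters the argument, so this is not an obstruction.
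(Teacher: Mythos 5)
Your strategy is a reconstruction of Wall's Theorem~15 (which is all the paper's own proof does: it cites that theorem and observes that the only nontrivial hypothesis, extremality of the minimal surface along each slice, is supplied here by assumption). However, as written your central step is a non sequitur. You establish two \emph{upper} bounds on $\mathrm{area}(\gamma'_\epsilon)$ --- the maximin bound $\mathrm{area}(\gamma'_\epsilon)\le\mathrm{area}(\gamma_I)$ and the trial-surface bound $\mathrm{area}(\gamma'_\epsilon)\le\mathrm{area}(\gamma_I)+\epsilon\int f\,\theta_+\,dA+o(\epsilon)$ --- and then conclude $\int f\,\theta_+\,dA\le 0$. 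Two upper bounds on the same quantity imply nothing about the sign of $\int f\,\theta_+$: if $\theta_+(p)>0$, both inequalities are perfectly consistent. What is actually needed is the matching \emph{lower} bound $\mathrm{area}(\gamma'_\epsilon)\ge\mathrm{area}(\gamma_I)+\epsilon\int f\,\theta_+\,dA-o(\epsilon)$, obtained by transporting $\gamma'_\epsilon$ back to $\Sigma_m$ and invoking the minimality of $\gamma_I$ on $\Sigma_m$ together with the stability-controlled convergence $\gamma'_\epsilon\to\gamma_I$; only then does the maximin inequality force $\int f\,\theta_+\le 0$. You gesture at this ("controlling $\mathrm{area}(\gamma'_\epsilon)$ against the trial surface up to $o(\epsilon)$") but present it as a technical refinement of an already-valid inference, when in fact the inference is invalid without it.

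A second, independent error is in the minimality clause. You argue that a smaller extremal surface $\gamma'$ would yield a slice $\Sigma'$ with $\mathrm{area}(\gamma_{\min}(A;\Sigma'))<\mathrm{area}(\gamma_I)$, "contradicting the maximality step." There is no contradiction: maximin asserts $\mathrm{area}(\gamma_{\min}(A;\Sigma'))\le\mathrm{area}(\gamma_I)$ for \emph{every} slice, so a slice with strictly smaller min-area is entirely consistent. The correct argument (Wall's) runs the other way: shoot the null congruence from the competing extremal surface $\gamma'$ toward $\Sigma_m$; by extremality and the focusing theorem this congruence has non-positive expansion, so its intersection with $\Sigma_m$ is a homologous surface of area at most $\mathrm{area}(\gamma')$, and minimality of $\gamma_I$ on $\Sigma_m$ then gives $\mathrm{area}(\gamma_I)\le\mathrm{area}(\gamma')$. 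Your identification of extremality-along-$\Sigma_m$ with $\theta_++\theta_-=0$ (in the future-outward/past-outward convention) and your handling of the converse are fine.
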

\begin{proof}
    Proof of Theorem~$15$ of~\cite{Wall:2012uf} applies directly here when the maximin surface is extremal along $\Sigma_m$. The main assumptions in~\cite{Wall:2012uf} are that the bulk spacetime is classical, smooth, satisfies the null energy condition, and is globally hyperbolic. All these conditions are considered to be satisfied here, see Section~\ref{sec:assump}. The proof does not depend on the AdS geometry and the only assumption that is nontrivial here is the existence of an extremum on every slice since the slices are not global. To avoid this difficulty we impose explicitly the extremality along $\Sigma_{m}(A;D_I)$. However, even though we assume the maximin surface exists, there might be no maximin surface that is extremal along their $\Sigma_m(A;D_I)$. In such cases, the converse of the proposition is trivially not satisfied, and the extremal surface is a minimax surface.
\end{proof}
An example of the converse of Proposition~\ref{th:ext=maximin} being false is the extremal surface in the exterior region associated with one of the screens. The extremal surface homologous to a screen in the exterior is the bifurcate horizon. However, the stable maximin surface associated with the screen is always the screen itself or the complementary screen. The bifurcate horizon is an unstable maximin surface, but a stable minimax surface.

\begin{theorem}
\label{ineq}
     Maximin, extremization, and C-extremization are inequivalent. Moreover, none of them implies another.
\end{theorem}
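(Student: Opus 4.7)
The plan is to establish Theorem~\ref{ineq} by exhibiting, for each of the three pairs of prescriptions, an explicit configuration in asymptotically dS$_3$ where they disagree. Since "none implies another" requires six non-implications, I will lean on examples rich enough to yield failures in both directions of a given pair.

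For extremal versus maximin, I would revisit the exterior region $D_E$ with $A = \Sigma|_{\mathcal{S}_L}$, the spatial slice of one screen. The bifurcate horizon at $\sigma=0$, $\theta=\pi/2$ is a closed extremal surface $D_E$-homologous to $A$, so it serves as $\gamma_e(A; D_E)$. However, as noted just below Proposition~\ref{th:ext=maximin}, it is a saddle of the area functional and is unstable under slice variations; the stability clause of Definition~\ref{def:maximin} therefore forbids it from being the maximin surface, forcing $\gamma_m(A; D_E)$ to coincide with the screen slice itself (or its complement), which has strictly smaller area. This single example handles both "extremal $\not\Rightarrow$ maximin" and "maximin $\not\Rightarrow$ extremal," because the maximin object is not itself a critical point of the area functional.

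For extremal versus C-extremal, I would invoke the failure-of-existence mechanism from~\cite{Franken:2023pni}. For sufficiently large boundary subregions $A \subset \mathcal{S}_L$, no spacelike extremum of the area functional in $D_L$ is $D_L$-homologous to $A$, so $\gamma_e(A; D_L)$ is undefined, while $\gamma_c(A; D_L)$ still exists by the second clause of Definition~\ref{def:Rext}, landing with positive measure on $\partial D_L$; this yields "C-extremal $\not\Rightarrow$ extremal." The reverse direction is obtained by choosing $A$ small enough that a genuine interior extremum $\gamma_e(A; D_L)$ exists but is strictly larger in area than a competing surface anchored on the past or future light sheet of $\partial A$; C-extremization then selects the boundary-anchored competitor by minimality, so the true extremum is not C-extremal.

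The separation of maximin and C-extremal is the main obstacle, since both prescriptions were designed to succeed precisely when extremization fails. I would construct $A \subset \mathcal{S}_L$ so that the stable maximin surface $\gamma_m(A; D_L)$ is a proper interior minimum of area on its witness slice $\Sigma_m$, verified via the focusing theorem to rule out unstable alternatives, while an auxiliary surface anchored along the ingoing lightsheet of $\partial A$ has strictly smaller area and therefore wins the C-extremization competition. The area gap can be quantified by explicit computation in embedding coordinates along the lines of Section~\ref{sec:ind_LC}, and a symmetric choice of $A$, where an interior C-extremal surface exists but the stable maximin is pushed onto $\partial D_L$, delivers the opposite direction. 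Together, these configurations establish all six non-implications and complete the proof.
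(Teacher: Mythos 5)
Your first counterexample is exactly the paper's: in the exterior region $D_E$ with $A=\Sigma\vert_{\mathcal{S}_L}$, the bifurcate horizon is extremal but only an unstable maximin, while the stable maximin is the screen slice itself, which is not extremal. That correctly delivers both directions of the extremal/maximin non-implication.

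The remaining four non-implications, however, have a genuine gap: you place the counterexamples inside $D_L$, the interior of a single observer's screen, but Theorem~\ref{th:patch} of the same appendix proves that maximin, extremal, and C-extremal surfaces \emph{coincide} precisely in that region. So no configuration $A\subset\mathcal{S}_L$ with competition in $D_L$ can separate the prescriptions; if your constructions went through they would contradict Theorem~\ref{th:patch}. Concretely, your claim that for sufficiently large $A\subset\mathcal{S}_L$ no extremal surface $D_L$-homologous to $A$ exists is false --- in the interior the geodesic of length $\min(\Delta\varphi,2\pi-\Delta\varphi)$ always exists; the failure-of-existence mechanism of~\cite{Franken:2023pni} occurs in the \emph{exterior} region $D_E$. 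Likewise, the proof of Theorem~\ref{th:patch} shows that in $D_L$ any surface with a positive-measure subset on $\partial D_L$ can be shrunk along an ingoing lightsheet (non-positive expansion by Definition~\ref{def:ap}), so a boundary-anchored competitor can never win the C-extremization or minimality contest there. All four of these non-implications must instead be exhibited in $D_E$, or argued structurally as the paper does: a C-extremal surface satisfying the second clause of Definition~\ref{def:Rext} is a minimin by Lemma~\ref{lem:minimin} and minimin surfaces are generically not maximin; a maximin surface that is neither extremal along $\Sigma_m$ nor on $\partial D$ cannot be C-extremal; and the extremal/C-extremal separation follows directly from Definition~\ref{def:Rext} together with the exterior examples. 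Your maximin-versus-C-extremal paragraph is also only a plan ("I would construct\dots"), not an actual construction, so even setting aside the wrong region it does not yet establish the claim.
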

\begin{proof}
    Minimin surfaces are not maximin surfaces in general so a C-extremal surface is not always a maximin surface. Conversely, if the maximin surface is not extremal along $\Sigma_m$ and does not lie on $\partial D$, it cannot be C-extremal. Moreover, as we already noted maximin surfaces can fail to be extremal along $\Sigma_{m}$, so being a maximin surface does not imply extremality. Conversely, Proposition~\ref{th:ext=maximin} states that extremality does not imply being a maximin surface. Finally, C-extremality does not imply extremality, and conversely, by Definition~\ref{def:Rext}.
\end{proof}
Theorem~\ref{ineq} implies that the C-extremization and maximin prescriptions introduced in Section~\ref{sec:FPRT} are not appropriate analogs of their AdS counterpart, as they fail to be equivalent. This is not necessarily a problem in the context of the connected wedge theorem since the proof only uses the fact that the extremal surface must be a minimum on at least one slice. This is the case for all C-extremal surfaces and maximin surfaces. The only important issue is that C-extremal surfaces are not always extremal, such that we cannot always apply the focusing theorem. In the rest of this section, we show that the tension between the two prescriptions is absent in the interior of the screen associated with any observer. The analysis of the exterior region is deferred to future work.

For the rest of this section, we denote by $\Sigma$ any achronal slice intersecting an observer's worldline and such that $\partial\Sigma=\Sigma\vert_{\mathcal{S}}$ with $\mathcal{S}$ the holographic screen associated with this observer.
\begin{lemma}
\label{prop:maximax}
    The surface $\Sigma\vert_{\mathcal{S}}$ is not a non-extremal minimal surface on any $\Sigma$.
\end{lemma}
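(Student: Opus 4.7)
The plan is to prove the contrapositive: if $\Sigma|_{\mathcal{S}}$ were minimal on some achronal slice $\Sigma$ with $\partial\Sigma=\Sigma|_{\mathcal{S}}$, I will show it must be extremal in spacetime. I would proceed by computing the first variation of area under spacelike inward deformations within $\Sigma$, which are the only admissible deformations since $\Sigma|_{\mathcal{S}}$ lies on $\partial\Sigma$. Minimality forces $\delta A\geq 0$ for every such deformation, so a contradiction follows whenever the screen condition produces a non-positive first variation that is strict somewhere. Any nearby curve obtained this way is homologous to $\Sigma|_{\mathcal{S}}$ through the thin strip on $\Sigma$ between them, so the homology condition of Definition~\ref{def:homol} is automatically satisfied.

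At each point of $\Sigma|_{\mathcal{S}}$ let $\ell_{+},\ell_{-}$ be the two future-directed null normals, with $\ell_{+}$ pointing outward through $\mathcal{S}$ and $\ell_{-}$ pointing inward into the interior of $\mathcal{S}$, with null expansions $\theta_{+},\theta_{-}$. Writing the unit spacelike inward normal $s$ on $\Sigma$ as a linear combination of $\ell_{+},\ell_{-}$ with opposite weights, the standard formula for the first variation of area under the deformation $\epsilon s$ with $\epsilon>0$ becomes
\begin{equation}
\delta A \;=\; c\,\epsilon \int_{\Sigma|_{\mathcal{S}}}\bigl(\theta_{-}-\theta_{+}\bigr)\,dL
\end{equation}
for a positive constant $c$. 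Definition~\ref{def:ap} of the holographic screen imposes non-positive expansion for every null geodesic orthogonal to $\Sigma|_{\mathcal{S}}$ and directed toward the observer. Both the future-inward $\ell_{-}$ and the past-inward $-\ell_{+}$ point spatially toward the observer, giving $\theta(\ell_{-})\leq 0$ and $\theta(-\ell_{+})\leq 0$; since reversing a null parameterization flips the sign of the expansion, the latter is equivalent to $\theta_{+}\geq 0$. Hence $\theta_{-}-\theta_{+}\leq 0$ pointwise, which gives $\delta A\leq 0$.

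Combining this with the assumption of minimality, which demands $\delta A\geq 0$ for every inward perturbation, forces $\theta_{-}-\theta_{+}=0$ everywhere on $\Sigma|_{\mathcal{S}}$. Together with $\theta_{-}\leq 0\leq \theta_{+}$ this imposes $\theta_{+}=\theta_{-}=0$ pointwise, which is precisely the extremality condition. To upgrade the integrated statement to a pointwise one, I would localize the deformation with a bump supported near any point where either $\theta_{+}>0$ or $\theta_{-}<0$; such a deformation yields a strict decrease in area and contradicts minimality. Therefore minimality implies extremality, and $\Sigma|_{\mathcal{S}}$ cannot be a non-extremal minimal surface on any $\Sigma$.

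The main delicate point I anticipate is the sign conclusion $\theta_{+}\geq 0$: Definition~\ref{def:ap} literally mentions only null geodesics directed toward the observer, so I must argue that both the future-directed and past-directed inward null normals are to be included, by interpreting ``directed toward the observer'' as a spatial (rather than temporal) condition. Once this interpretation is granted, the remainder of the argument is a direct application of the first variation formula and the ability to localize deformations, and no further obstacles are expected.
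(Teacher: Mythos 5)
Your proof is correct and is essentially the linearized version of the paper's argument: the paper constructs nearby closed surfaces on $\Sigma$ by composing two lightsheets directed towards the interior of $\mathcal{S}$ (one necessarily future-directed, the other past-directed) and uses their non-positive expansion from Definition~\ref{def:ap} to show the area can only decrease under inward deformation, which is exactly your decomposition of the inward spacelike normal into $\ell_-$ and $-\ell_+$. The sign interpretation you flag as delicate ($\theta_+\geq 0$ from the past-directed inward congruence) is precisely how the paper itself uses Definition~\ref{def:ap}, since its second lightsheet $\mathcal{L}(\chi)$ must run backwards in time in order to intersect $\Sigma$ again.
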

\begin{proof}
    Let $\Sigma$ be an achronal slice with $\partial\Sigma=\Sigma\vert_{\mathcal{S}}$. Let us construct an arbitrary closed codimension-two surface $\tilde\chi \in \Sigma$ arbitrary close to $\Sigma\vert_{\mathcal{S}}$, that is every point of $\tilde\chi$ is infinitesimally close to $\Sigma\vert_{\mathcal{S}}$. Consider a null hypersurface $\mathcal{L}(\Sigma\vert_{\mathcal{S}})$ emanating from $\Sigma\vert_{\mathcal{S}}$ and directed towards the interior of $\mathcal{S}$. From Definition~\ref{def:ap}, $\mathcal{L}(\Sigma\vert_{\mathcal{S}})$ is of non-positive expansion — is a lightsheet — since it is in the interior of the screen. The lightsheets discussed here are depicted in Figure~\ref{fig:proof}. Consider an arbitrary codimension-two surface $\chi\in\mathcal{L}(\Sigma\vert_{\mathcal{S}})$, such that on each generator of the lightsheet $\mathcal{L}(\Sigma\vert_{\mathcal{S}})$, there is exactly one point of $\chi$. Construct the congruence of lightrays $\mathcal{L}(\chi)$ emanating orthogonally from $\chi$ and directed towards the interior of $\mathcal{S}$. This congruence is also of non-positive expansion from Definition~\ref{def:ap}. We define $\tilde\chi = \Sigma \cap \mathcal{L}(\chi)$, which is a closed codimension-two surface on $\Sigma$. $\mathcal{L}(\Sigma\vert_{\mathcal{S}})$ and $\mathcal{L}(\chi)$ being of non-positive expansion, $\tilde\chi$ must have a smaller area than $\Sigma\vert_{\mathcal{S}}$. The procedure could be followed the other way around by starting from $\tilde{\chi}$. We showed that $\Sigma\vert_{\mathcal{S}}$ has an area greater than or equal to that of any of its infinitesimal deformation on $\Sigma$. Hence, if it is a minimal surface on some slice, it must saturate the bound and be an extremum along this slice.
\end{proof}

\begin{figure}[h!]
    \centering
    \includegraphics[width=0.5\linewidth]{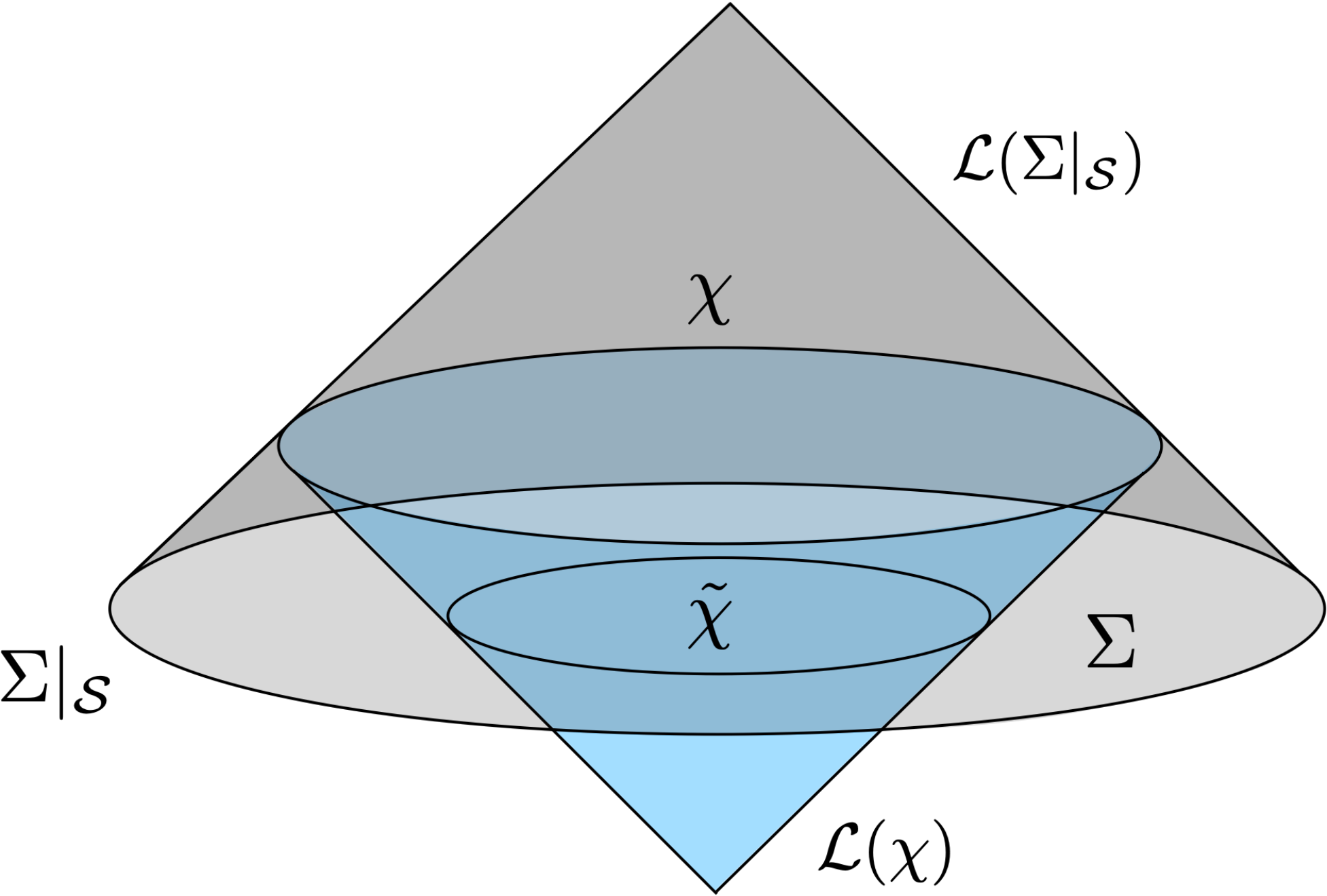}
    \caption{Schematic picture of the proof of Lemma~\ref{prop:maximax}. One constructs an arbitrary closed surface infinitesimally close to $\Sigma\vert_{\mathcal{S}}$ on $\Sigma$ by constructing the lightsheet $\mathcal{L}(\Sigma\vert_{\mathcal{S}})$ (in gray), and then a second lightsheet $\mathcal{L}(\chi)$ (in blue) emanating from an arbitrary closed surface $\chi \in \mathcal{L}(\Sigma\vert_{\mathcal{S}})$ infinitesimally close to $\Sigma\vert_{\mathcal{S}}$. The intersection $\tilde\chi = \mathcal{L}(\chi)\cap\Sigma$ is always smaller than $\Sigma\vert_{\mathcal{S}}$.}
    \label{fig:proof}
\end{figure}

\begin{theorem}
\label{th:patch}
    Maximin surfaces, extremal surfaces, and C-extremal surfaces are equivalent in the interior of the holographic screen of an observer.
\end{theorem}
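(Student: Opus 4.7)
I would prove the three-way equivalence by establishing the chain C-extremal $\Rightarrow$ extremal $\Leftrightarrow$ maximin inside the holographic screen, with the reverse implication extremal $\Rightarrow$ C-extremal being automatic from Case~$1$ of Definition~\ref{def:Rext}. The common engine is the inward lightsheet construction used in the proof of Lemma~\ref{prop:maximax}: Definition~\ref{def:ap} of the holographic screen guarantees that the null congruence orthogonal to any codimension-two surface on $\mathcal{S}$ and directed toward the observer has non-positive expansion, so any surface partially resting on $\mathcal{S}$ can be pushed strictly into the interior without increasing area.

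\textbf{C-extremal $\Rightarrow$ extremal.} Suppose $\gamma_c$ is C-extremal but not extremal, so by Definition~\ref{def:Rext} a positive measure subset $\gamma_c\cap\mathcal{S}$ lies on the boundary of $D$. I would apply the lightsheet construction of Lemma~\ref{prop:maximax} to this subset: emit the orthogonal inward-directed null congruence and cut it at an infinitesimal affine distance. Because the congruence has non-positive expansion, and strictly negative expansion on a set of positive measure (otherwise $\gamma_c$ would itself be extremal and we would be done), the resulting deformed surface is $D$-homologous to $A$ with strictly smaller area, contradicting the minimality clause in Definition~\ref{def:Rext}. Hence $\gamma_c\cap\mathcal{S}$ has measure zero and $\gamma_c$ is genuinely extremal.

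\textbf{Maximin $\Leftrightarrow$ extremal.} The same lightsheet deformation applied within a fixed slice $\Sigma_m$ shows that any maximin surface $\gamma_m$ must lie in the interior of $D$, since any positive measure portion on $\mathcal{S}$ could be pushed inward along $\Sigma_m$ to decrease the area and contradict the minimality of $\gamma_m$ on $\Sigma_m$. Once $\gamma_m$ lies in the interior, being minimal on $\Sigma_m$ forces the first area variation along $\Sigma_m$ to vanish — i.e.\ extremality along $\Sigma_m$ — and the stability condition in Definition~\ref{def:maximin} rules out degenerate saddles. Proposition~\ref{th:ext=maximin} then upgrades this to full extremality. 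Conversely, starting from the extremal surface $\gamma_e$, I would build a witnessing slice by firing the two orthogonal lightsheets from $\gamma_e$ and capping them by maximal-area achronal surfaces running to $\Sigma\vert_{\mathcal{S}}$, following Wall~\cite{Wall:2012uf}: extremality of $\gamma_e$ together with the focusing theorem yields maximality of the slice area under normal deformations, and the convexity of $\mathcal{S}$ ensures the capping surfaces terminate on $\Sigma\vert_{\mathcal{S}}$ without pathology, so that $\gamma_e$ is realized as a maximin surface and the hypothesis of the converse of Proposition~\ref{th:ext=maximin} is met.

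\textbf{Expected main obstacle.} The central technical issue is the non-completeness of the allowed achronal slices: unlike the asymptotically AdS setting of Wall~\cite{Wall:2012uf}, each $\Sigma$ terminates on the holographic screen rather than at an asymptotic boundary, so both the slice minimum and the maximin could in principle be achieved on $\mathcal{S}$. The geometric input that makes everything work is precisely the convexity of Definition~\ref{def:ap}, which produces an inward lightsheet of non-positive expansion and thereby forces every C-extremal or maximin candidate to retreat off $\mathcal{S}$ into the interior, reducing to an essentially standard extremization problem. I expect the proof to amount to packaging the Lemma~\ref{prop:maximax} lightsheet deformation with Wall's maximin construction, with the only delicate point being the treatment of anchoring behaviour near $\Sigma\vert_{\mathcal{S}}$.
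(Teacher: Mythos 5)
Your proposal is correct and follows essentially the same route as the paper: the chain C-extremal $\Rightarrow$ extremal $\Leftrightarrow$ maximin, driven by the inward lightsheet of non-positive expansion from Definition~\ref{def:ap} (packaged in Lemma~\ref{prop:maximax}) to push any candidate off the screen, together with Proposition~\ref{th:ext=maximin} to upgrade extremality along $\Sigma_m$ to full extremality and to supply the converse. The only cosmetic difference is your constructive Wall-style witnessing slice for the extremal $\Rightarrow$ maximin direction, which the paper shortcuts by noting that the forward argument already verifies the hypothesis of the converse of Proposition~\ref{th:ext=maximin} — a step you also include.
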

\begin{proof}
    In this proof, we note $D=D(\Sigma)$ and $A\in\Sigma\vert_{\mathcal{S}}$.
    \begin{itemize}
        \item \textbf{Maximin $\Rightarrow$ Extremal:} From Proposition~\ref{th:ext=maximin}, it is sufficient to show that all maximin surfaces $\gamma_m(A;D)$ are extremal along $\Sigma_{m}(A;D)$. Consider a surface $\gamma$, $D$-homologous to $A$, and infinitesimally close to it. Definition~\ref{def:homol} implies that $\partial A = \partial \gamma$, and $\partial A^c= \partial A$ where $A^c$ is the complement of $A$ on $\Sigma\vert_{\mathcal{S}}$. Hence, $A^c\cup \gamma$ is a closed surface infinitesimally close to $\Sigma\vert_{\mathcal{S}}$, and it lies in on some Cauchy slice $\Sigma$. By Lemma~\ref{prop:maximax},
    \begin{equation}
    \label{eq:ineq}
        \text{area}(A^c\cup \gamma) \leq \text{area}(\Sigma\vert_{\mathcal{S}}) = \text{area}(A^c \cup A),
    \end{equation}
    which implies $\text{area}(\gamma)\leq \text{area}(A)$. Hence, $A$ cannot be a non-extremal minimal surface on any $\Sigma$. An analogous argument applies to a surface $B$ infinitesimally close to $A^c$. $A$ and $A^c$ are therefore not the non-extremal minimal, $D$-homologous to $A$ on any slice $\Sigma$. This extends to any surface $\gamma_A$, $D$-homologous to $A$, that has a positive measure subset on $A$ and or $A^c$. The above arguments show that pushing $\gamma_A \cap \Sigma\vert_{\mathcal{S}}$ in the interior decreases the area, such that $\gamma_A$ is not a non-extremal minimal surface on any slice $\Sigma$.

    A surface that is minimal on some slice $\Sigma$ can only be non-extremal if it has a measurable subsystem on $\mathcal{S}$. However, we showed that such a surface cannot be non-extremal minimum. Hence, any minimal surface on a slice $\Sigma$ must be extremal along $\Sigma$.\footnote{The minimal surface always exists by the extreme value theorem, see proof of Theorem~$9$ in~\cite{Wall:2012uf}, and if it is not on the boundary of $\Sigma$, it is an extremum along $\Sigma$.} This includes the maximin.

    \item \textbf{Extremal $\Rightarrow$ Maximin} is implied by Proposition~\ref{th:ext=maximin} and the above proof that all maximin surfaces are extremal along $\Sigma_m$.
    \item \textbf{Extremal $\Leftrightarrow$ C-extremal:} We proceed by contradiction and assume there exists a surface $\gamma_{\rm min}(A;D)$ $D$-homologous to $A$ with a positive measure subsystem on $\partial D$ while being the smallest area surface in $D$. By Lemma~\ref{lem:minimin}, this surface is a minimin. Any infinitesimal deformation of $\gamma_{\rm min}(A;D)$ along the null direction directed towards the interior of $D$ must decrease the area by Definition~\ref{def:ap}. This is in contradiction with the assumption that $\gamma_{\rm min}(A;D)$ is a minimin.
    \end{itemize}
\end{proof}
These results imply that non-extremal maximin surfaces always arise in the exterior region $D_E$. An example of this is the maximin surface associated with $\Sigma\vert_{\mathcal{S}_L}$ or $\Sigma\vert_{\mathcal{S}_R}$ in the exterior. It is easy to see that this surface is either $\Sigma\vert_{\mathcal{S}_L}$ or $\Sigma\vert_{\mathcal{S}_R}$. In the case where the holographic screens are located on the cosmological horizons of pure de Sitter space, there is an infinite number of maximin surfaces on the horizon, see~\cite{Franken:2023pni}, but the only ones that satisfy the stability conditions are $\Sigma\vert_{\mathcal{S}_L}$ and $\Sigma\vert_{\mathcal{S}_R}$. The only closed extremal surfaces in de Sitter spacetime are the empty surface and the bifurcate horizon, such that the maximin surface is not extremal in this example.\footnote{Except if we consider achronal slices that intersect the horizons at the bifurcate horizon when the holographic screen is the cosmological horizon.} In particular, these surfaces are not extremal along any spacelike Cauchy slice. A similar analysis applies to spatial subsystems of the holographic screens in dS$_3$, which are one-dimensional arcs or a union of them. In particular, the extremal surface $\gamma_e(A;D_E)$ associated with such subsystem does not exist in pure dS$_3$, requiring the use of the maximin or the C-extremization procedure~\cite{Franken:2023pni}.

\bibliographystyle{jhep}
\bibliography{biblio}

\end{document}